\newcommand{\dlv}{{\sc dlv}\xspace}
\newcommand{\gidl}{{\sc gidl}\xspace}
\newcommand{\msid}{{\sc MiniSAT(ID)}\xspace}
\newcommand{\N}{\mathbb{N}}									
\newcommand{\res}[2]{{#1}|_{#2}}								
\newcommand{\lub}{\mathrm{lub}}								
\newcommand{\glb}{\mathrm{glb}}								
\newcommand{\size}[1]{\lvert {#1} \rvert}					
\newcommand{\pol}{{\bf P}\xspace}
\newcommand{\np}{{\bf NP}\xspace}
\newcommand{\conp}{{\bf coNP}\xspace}
\newcommand{\ddd}{\overline{d}}
\newcommand{\ttt}{\overline{t}}
\newcommand{\xxx}{\overline{x}}
\newcommand{\yyy}{\overline{y}}
\newcommand{\zzz}{\overline{z}}
\newcommand{\tval}{\text{\bf v}}								
\newcommand{\true}{\text{\bf t}}								
\newcommand{\false}{\text{\bf f}}							
\newcommand{\unkn}{\text{\bf u}}								
\newcommand{\inco}{\text{\bf i}}								
\newcommand{\voc}{\Sigma}										
\newcommand{\id}[1]{\ensuremath{\text{\it {#1}}\,}}	
\newcommand{\lir}{\Rightarrow}								
\newcommand{\ler}{\Leftrightarrow}							
\newcommand{\rul}{\leftarrow}									
\newcommand{\agg}{\text{\sc f}}								
\newcommand{\card}{\text{\sc card}}							
\newcommand{\asum}{\text{\sc sum}}							
\newcommand{\aprod}{\text{\sc prod}}						
\newcommand{\amin}{\text{\sc min}}							
\newcommand{\amax}{\text{\sc max}}							
\newcommand{\open}[1]{\mathrm{Open}({#1})}				
\newcommand{\defi}[1]{\mathrm{Def}({#1})}					
\newcommand{\wfm}[2]{\mathrm{wfm}_{#1}({#2})}			
\newcommand{\comp}[1]{\mathrm{Comp}({#1})}				
\newcommand{\leqt}{\leq_t}										
\newcommand{\geqt}{\geq_t}										
\newcommand{\leqp}{\leq_p}										
\newcommand{\geqp}{\geq_p}										
\newcommand{\app}[1]{\tilde{{#1}}}							
\newcommand{\lapp}{\bot^{\leqp}}								
\newcommand{\mapp}{\top^{\leqp}}								
\newcommand{\ct}[1]{#1_{\rm ct}}								
\newcommand{\cf}[1]{#1_{\rm cf}}								
\newcommand{\tf}[1]{{\rm tf}(#1)}								
\newcommand{\prp}{O}												
\newcommand{\iprop}{\text{\sc inco}}						
\newcommand{\rfi}[1]{\mathscr{O}^{{#1}}}					
\newcommand{\trsl}[1]{\mathscr{L}_{{#1}}}					
\newcommand{\infp}[1]{\mathscr{I}^{{#1}}}					
\newcommand{\infps}[1]{\mathscr{I}({#1})}					
\newcommand{\infs}[1]{\text{\textup{INF}}({#1})}		
\newcommand{\symvoc}{\Upsilon}									
\newcommand{\prps}{S}											
\newcommand{\sinfp}[1]{\mathscr{I}_s^{{#1}}}				
\newcommand{\foagg}{\textup{FO(AGG)}\xspace}				
\newcommand{\foid}{\textup{FO(ID)}\xspace}				
\newcommand{\mcl}{\mathcal{L}}								
\newcommand{\dom}[1]{\text{\textit{dom}}({#1})}
\newcommand{\ignore}[1]{}
\newcommand{\ico}[1]{\Gamma_{#1}}
\newcommand{\domf}{\text{\textit{dom}}}
\title{Constraint Propagation for First-Order Logic and Inductive Definitions}
\author{JOHAN WITTOCX, MARC DENECKER, and MAURICE BRUYNOOGHE \\ Katholieke Universiteit Leuven}
\begin{abstract} 
Constraint propagation is one of the basic forms of inference in many logic-based reasoning systems. In this paper, we investigate constraint propagation for first-order logic (FO), a suitable language to express a wide variety of constraints. We present an algorithm with polynomial-time data complexity for constraint propagation in the context of an FO theory and a finite structure. We show that constraint propagation in this manner can be represented by a datalog program and that the algorithm can be executed symbolically, i.e., independently of a structure. Next, we extend the algorithm to FO(ID), the extension of FO with inductive definitions. Finally, we discuss several applications.

\end{abstract}
\keywords{first-order logic, constraint propagation, inductive definitions, aggregates}
\newtheorem{theorem}{Theorem}[section]
\newtheorem{proposition}[theorem]{Proposition}
\newtheorem{lemma}[theorem]{Lemma}
\begin{document}

\maketitle

\section{Introduction}

An interesting trend in declarative problem solving is the growing overlap between research in constraint programming (CP), propositional satisfiability (SAT) and certain subareas of knowledge representation and reasoning (KRR). In CP, we witness the evolution towards more expressive, logic-like languages. The same evolution is also witnessed in the SAT community, where there is a growing interest in SAT modulo theories (SMT), i.e., solving satisfiability problems for a much richer language than propositional logic.  In KRR, attention is shifting from deduction as main reasoning task towards other forms of inference. These evolutions are leading to an apparent {\em congruence} between the problems and the languages studied in these areas. In CP, one searches for {\em assignments} to {\em variables} that satisfy certain {\em constraints}~\cite{Apt03}. While originally, variables ranged over finite atomic domains, in recent rich solver-independent CP-languages like {\sc essence}~\cite{constraints/FrischHJHM08} and Zinc~\cite{constraints/MarriottNRSBW08}, variables also range over complex types such as arrays and sets. There is a close match with the logical inference problem of \emph{finite model generation}, in which {\em structures (i.e., models)} are searched interpreting a logical {\em vocabulary} consisting of constant, function and predicate symbols that satisfy a set of {\em logical propositions}. Not coincidentally, recently new approaches for search and optimization emerged that use expressive logics with origins in the area of knowledge representation and solve such problems through model generation inference. The approach was pioneered in Answer Set Programming (ASP)~\cite{marek99stable,Baral03}; now also systems based on (extensions of) first order logic (FO) are available.  The best solvers of this kind embrace technologies from (mainly) SAT and offer superior modelling environments that already now prove particularly well-suited in {\em knowledge-intense} search or optimization problems of bounded size.  The growing overlap between CP, SAT and KRR is further witnessed by recent efforts to include CP techniques in ASP \cite{MellarkodGZ08,iclp/GebserOS09}, by the successful participation of the ASP solver {\sc clasp} in the SAT competition \cite{url:sat2009competition}, and by the participation of the constraint logic programming system B-prolog in the ASP competition \cite{lpnmr/DeneckerVBGT09}.

In this paper, we push the convergence between CP and KRR a step further by studying constraint propagation for classical first-order logic (FO). To this end, we first define constraint propagation for FO. Informally, for a given FO theory $T$ and a finite partial structure $\app{I}$, constraint propagation boils down to computing facts that are certainly true or certainly false in every total structure that satisfies $T$ and that ``completes'' $\app{I}$. To illustrate this definition, consider a database application allowing university students to compose their curriculum by selecting certain didactic modules and courses. Assume that amongst others, the following integrity constraints are imposed on the selections: 
\begin{align}
	& \forall x \forall y\ (\id{MutExcl}(x,y) \lir \neg(\id{Selected}(x) \land \id{Selected}(y))), \label{form:cc}	\\
	& \exists m\ (Module(m) \land \id{Selected}(m)),                      \label{form:om}	\\
	& \forall c\ (\id{Course}(c) \land \exists m\ (\id{Module}(m) \land \id{Selected}(m) \land \id{In}(c,m)) \lir \id{Selected}(c). \label{form:ac}
\end{align}
The first constraint states that mutually exclusive components cannot be selected both, the second one expresses that at least one module should be taken and the third one ensures that all courses of a selected module are selected. Consider a situation where there are, amongst others, two mutually exclusive courses $c_1$ and $c_2$, that $c_1$ belongs to a certain module $m_1$, and that at some point in the application, the student has selected $m_1$ and is still undecided about the other courses and modules. That is, an incomplete database or \emph{partial structure} is given. One can check that in every total selection that completes this partial selection and satisfies the constraints, $c_1$ will be selected, $c_2$ will not be selected, and no module containing $c_2$ will be selected. Constraint propagation for FO aims to derive these facts.

Given a theory $T$ and a partial structure $\app{I}$, computing all the models of  $T$ that complete $\app{I}$, and making facts true (respectively false) that are true (respectively false) in all these models yields the most precise results. However, it is in general too expensive to perform constraint propagation in this way.
The constraint propagation algorithm we present in this paper is less precise, but, for a fixed theory $T$, runs in polynomial time in the domain size of $\app{I}$. The algorithm consists of two steps. First, $T$ is rewritten in linear time to an equivalent theory $T'$ such that for each constraint in $T'$, there exists a precise polynomial-time propagator. In the second step, these propagators are successively applied, yielding polynomial-time propagation for $T'$, and hence for $T$.

Besides its polynomial-time data complexity, our algorithm has two other benefits. First, the propagation can be represented by a set of (negation-free) \emph{rules} under a least model semantics. Such sets of rules occur frequently in logic-based formalisms. Examples are Prolog, Datalog, Stable Logic Programming \cite{marek99stable,Niemela99}, FO extended with inductive definitions \cite{tocl/DeneckerT08}, and production rule systems \cite{phd/Forgy79}. As a consequence, many of the theoretical and practical results obtained for these formalisms can be applied to study properties of our method, as well as to implement it efficiently.  Secondly, it is possible to execute the propagation symbolically, i.e., independently of the given partial structure. Symbolic propagation is useful in, e.g., applications where the partial structure is subject to frequent changes. 

As can be deduced from many logics developed in KRR and CP, (inductive) definitions and aggregates are two concepts that are crucial to model many real-life applications. Yet in general, these concepts cannot be expressed in FO. To broaden the applicability of our propagation algorithm, we extend it to FO(ID), the extension of FO with inductive definitions\cite{tocl/DeneckerT08}. An extension to aggregates is discussed in Appendix~\ref{sec:aggr}. 

In the last part of this paper, we sketch several applications of our propagation algorithm, namely model generation, preprocessing for grounding, configuration, approximate query answering in incomplete databases and conformant planning. 

This paper is an extended and improved presentation of \cite{WittocxMD2008:KR}. It describes (part of) the theoretical foundation for applications presented in \cite{jair/WittocxMD10,iclp/WittocxVD09,jelia/VlaeminckWVDB10,ppdp/VlaeminckVD09}. A less densely written version of this paper is part of the PhD thesis of the first author \cite{phd/Wittocx10}.

\section{Preliminaries}

We assume the reader is familiar with classical first-order logic (FO). In this section, we introduce the notations and conventions used throughout this paper and we recall definitions and results about three- and four-valued structures and constraint satisfaction problems.

\subsection{First-Order Logic}

A \emph{vocabulary} $\voc$ is a finite set of predicate and function symbols, each with an associated arity. We often denote a symbol $S$ with arity $n$ by $S/n$. A \emph{$\voc$-structure} $I$ consists of a domain $D$, an assignment of a relation $P^I \subseteq D^n$ to each predicate symbol $P/n \in \voc$, and an assignment of a function $F^I : D^n \to D$ to each function symbol $F/n \in \voc$. If $I$ is a $\voc$-structure and $\voc' \subseteq \voc$, we denote by $\res{I}{\voc'}$ the restriction of $I$ to the symbols of $\voc'$. If $\voc_1$ and $\voc_2$ are two disjoint vocabularies, $I$ a $\voc_1$-structure with domain $D$, and $J$ a $\voc_2$-structure with the same domain, then $I + J$ denotes the unique $(\voc_1 \cup \voc_2)$-structure with domain $D$ such that $\res{(I+J)}{\voc_1} = I$ and $\res{(I+J)}{\voc_2} = J$.

Variables are denoted by lowercase letters. We use $\xxx$, $\yyy$, \ldots, to denote both sets and tuples of variables. A \emph{variable assignment} with domain $D$ is a function mapping variables to domain elements in $D$. If $\theta$ is a variable assignment, $x$ a variable and $d$ a domain element, $\theta[x/d]$ denotes the variable assignment that maps $x$ to $d$ and corresponds to $\theta$ on all other variables. This notation is extended to tuples of variables and domain elements of the same length.

Terms and formulas over a vocabulary $\voc$ are defined as usual. We use $(\varphi \lir \psi)$ and $(\varphi \ler \psi)$ as shorthands for, respectively, the formulas $(\neg \varphi \lor \psi)$ and $((\varphi \lir \psi) \land (\psi \lir \varphi))$. If $\xxx$ and $\yyy$ are, respectively, the tuples of variables $(x_1,\ldots,x_n)$ and $(y_1,\ldots,y_n)$, then $\xxx \neq \yyy$ is a shorthand for the formula $(x_1 \neq y_1) \lor \ldots \lor (x_n \neq y_n)$. Often, we denote a formula $\varphi$ by $\varphi[\xxx]$ to indicate that $\xxx$ is precisely the set of free variables of $\varphi$. That is, if $y \in \xxx$, then $y$ has at least one occurrence in $\varphi$ outside the scope of quantifiers $\forall y$ and $\exists y$. A formula without free variables is called a \emph{sentence}. If $\varphi$ is a formula, $x$ a variable and $t$ a term, then $\varphi[x/t]$ denotes the result of replacing all free occurrences of $x$ in $\varphi$ by $t$. This notation is extended to tuples of variables and terms of the same length. We write $I\theta \models \varphi$ to say that a formula $\varphi$ evaluates to true in the structure $I$ under the variable assignment $\theta$. If all free variables of a formula $\varphi$ are among the set of variables $\xxx$, variable assignment $\theta$ is irrelevant in an expression of the form $I\theta[\xxx/\ddd] \models \varphi$, and therefore omitted. 

A \emph{query} is an expression of the form $\{ \xxx \mid \varphi[\yyy] \}$, where $\varphi$ is a formula and $\yyy \subseteq \xxx$. Such a query corresponds to the Boolean lambda expression $\lambda \xxx.\varphi[\yyy]$. The interpretation $\{ \xxx \mid \varphi[\yyy] \}^I$ of query $\{ \xxx \mid \varphi[\yyy] \}$ in structure $I$ is the set $\{ \ddd \mid I[\xxx/\ddd] \models \varphi \}$.

Two formulas $\varphi_1$ and $\varphi_2$ are \emph{equisatisfiable} if $\varphi_1$ is satisfiable iff $\varphi_2$ is satisfiable. Clearly, if $\varphi_1$ and $\varphi_2$ are logically equivalent, then they are also equisatisfiable. The following form of equivalence lies in between logical equivalence and equisatisfiability. 
\begin{definition}
Let $\voc_1$ and $\voc_2$ be two vocabularies that share a common subvocabulary $\voc$ and let $\varphi_1$ and $\varphi_2$ be sentences over, respectively, $\voc_1$ and $\voc_2$. Then $\varphi_1$ and $\varphi_2$ are \emph{$\voc$-equivalent} if for any $\voc$-structure $I$, there exists an expansion $M_1$ of $I$ to $\voc_1$ such that $M_1 \models \varphi_1$ iff there exists an expansion $M_2$ of $I$ to $\voc_2$ such that $M_2 \models \varphi_2$.
\end{definition}
The following proposition presents a method to rewrite sentences to $\voc$-equivalent sentences. This rewriting method is called \emph{predicate introduction}, and is applied in, e.g., the well-known \citeN{Tseitin68eng} transformation.
\begin{proposition}\label{prop:fopredintro}
Let $\varphi$ be a sentence over a vocabulary $\voc$ and let $\psi[\xxx]$ be a subformula of $\varphi$ with $n$ free variables. Let $P/n$ be a new predicate symbol and denote by $\varphi'$ the result of replacing $\psi[\xxx]$ by $P(\xxx)$ in $\varphi$. Then $\varphi' \land \forall \xxx (P(\xxx) \ler \psi[\xxx])$ is $\voc$-equivalent to $\varphi$.
\end{proposition}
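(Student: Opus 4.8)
The plan is to unwind the definition of $\voc$-equivalence for this particular pair of vocabularies and then reduce everything to a single \emph{replacement lemma}. Here the two vocabularies are $\voc_1 = \voc$ (over which $\varphi$ is written) and $\voc_2 = \voc \cup \{P\}$ (over which $\varphi' \land \forall\xxx(P(\xxx)\ler\psi)$ is written), with common subvocabulary $\voc$. Since $\varphi$ already uses only symbols of $\voc$, the only expansion of a given $\voc$-structure $I$ to $\voc_1 = \voc$ is $I$ itself, so the left-hand condition in the definition of $\voc$-equivalence collapses to ``$I \models \varphi$''. Thus it suffices to show, for every $\voc$-structure $I$, that $I \models \varphi$ if and only if there is some interpretation of $P$ for which the corresponding expansion $M$ of $I$ satisfies both $\varphi'$ and $\forall\xxx(P(\xxx)\ler\psi)$.

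The technical heart is the following replacement lemma, which I would state and prove first. Suppose $M$ is any $\voc_2$-structure in which $P(\xxx)$ and $\psi[\xxx]$ have the same truth value under every variable assignment, i.e., $M\theta \models P(\xxx)$ iff $M\theta \models \psi$ for all $\theta$. Then $M \models \varphi$ iff $M \models \varphi'$. I would prove this by structural induction on the context surrounding the replaced occurrence of $\psi$ in $\varphi$. The base case is exactly the displayed hypothesis relating $\psi$ and $P(\xxx)$; the Boolean connective cases are immediate from the induction hypothesis. The only case requiring care is the quantifier case, where a quantifier $\exists y$ or $\forall y$ in $\varphi$ may bind a variable $y \in \xxx$ that is free in $\psi$. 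Because $P(\xxx)$ is chosen to have exactly the free variables $\xxx$ of $\psi$ and $P$ is fresh, no variable capture occurs, so the binding structure of $\varphi'$ matches that of $\varphi$; evaluating both formulas under an assignment therefore reduces, at the occurrence site, to comparing $\psi$ and $P(\xxx)$ under a common extended assignment, where the hypothesis applies. This quantifier step, and the bookkeeping of how the outer assignment is extended by the bound variables before reaching $\psi$, is where I expect the main difficulty to lie.

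With the lemma in hand both directions are short. For the forward direction, assume $I \models \varphi$ and define the expansion $M$ by interpreting $P$ as the query set $\{\ddd \mid I[\xxx/\ddd] \models \psi\}$. By construction $M \models \forall\xxx(P(\xxx)\ler\psi)$, and since $\psi$ uses only $\voc$-symbols and $\res{M}{\voc} = I$, the hypothesis of the replacement lemma holds in $M$; as $M \models \varphi$ (because $\res{M}{\voc}=I\models\varphi$ and $\varphi$ mentions no $P$), the lemma yields $M \models \varphi'$, so $M$ witnesses the right-hand side. For the converse, suppose some expansion $M$ of $I$ satisfies $\varphi' \land \forall\xxx(P(\xxx)\ler\psi)$. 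The second conjunct is precisely the hypothesis of the replacement lemma, so from $M \models \varphi'$ we obtain $M \models \varphi$, and restricting to $\voc$ gives $I = \res{M}{\voc} \models \varphi$, as required.
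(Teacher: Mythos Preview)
Your proof is correct and follows essentially the same approach as the paper. The paper's argument is a two-line version of yours: it observes that any expansion of $I$ satisfying $\forall\xxx(P(\xxx)\ler\psi[\xxx])$ must assign $\{\xxx\mid\psi[\xxx]\}^I$ to $P$, and then simply asserts that such an expansion satisfies $\varphi'$ iff $I\models\varphi$---i.e., it takes your replacement lemma for granted rather than spelling out the structural induction.
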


In the rest of this paper, we facilitate the presentation by assuming that vocabularies do not contain function symbols. The following proposition sketches a method to remove function symbols from a theory.
\begin{proposition}\label{prop:fftheory}
Let $T$ be a theory over a vocabulary $\voc$. Then there exists a theory $T'$ over a function-free vocabulary $\voc'$ such that there is a one-to-one correspondence between the models of $T$ and the models of $T'$. Moreover, $T'$ can be constructed in linear time in the size of $T$.
\end{proposition}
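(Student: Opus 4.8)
The plan is to replace every function symbol by a predicate symbol that denotes its graph, and then to rewrite the sentences of $T$ so that functional terms no longer occur. First I would fix the vocabulary: let $\voc'$ contain all predicate symbols of $\voc$ together with, for each function symbol $F/n \in \voc$ (constants being the case $n = 0$), a fresh predicate symbol $P_F$ of arity $n+1$. To force each $P_F$ to be the graph of a total function, I add to $T'$ the \emph{graph axioms}
\[
\forall \xxx\, \exists y\, P_F(\xxx, y) \quad\text{and}\quad \forall \xxx \forall y \forall z\, ((P_F(\xxx,y) \land P_F(\xxx,z)) \lir y = z),
\]
one pair per function symbol.

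Next I would eliminate functional terms from the sentences of $T$ by \emph{flattening}. Repeatedly I pick an occurrence of a functional term $F(\ttt)$ inside some atom $A$ and replace the whole atom in place by $\exists y\, (P_F(\ttt, y) \land A')$, where $y$ is fresh and $A'$ is $A$ with that occurrence of $F(\ttt)$ replaced by $y$. Each such step removes exactly one function-symbol occurrence (the chosen $F$) while the arguments $\ttt$ are merely moved into the new $P_F$-atom rather than copied, so no subterm is duplicated. The procedure therefore terminates after as many steps as there are function-symbol occurrences in $T$, and the formula grows only by a constant per step. Taking $T'$ to be the resulting theory together with the graph axioms, I observe that constructing $\voc'$ is linear, the graph axioms have total size linear in $\voc'$, and the flattening is linear in the size of $T$; hence $T'$ is built in linear time.

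For the correspondence I would exhibit mutually inverse maps between the models. Given $M \models T$, let $M'$ agree with $M$ on the domain and on all predicate symbols of $\voc$, and set $P_F^{M'} = \{(\ddd, e) \mid F^M(\ddd) = e\}$; since $F^M$ is a total function, $M'$ satisfies the graph axioms. Conversely, given $M' \models T'$, the graph axioms guarantee that each $P_F^{M'}$ is the graph of a total function, which I take as $F^M$, retaining the predicate interpretations. A routine induction on the structure of formulas then shows that $M \models \varphi$ iff $M'$ satisfies the flattened version of $\varphi$, for each sentence; consequently $M \models T$ iff $M' \models T'$. Since the two maps are inverse to one another and fix both the domain and the predicate interpretations, they constitute the desired one-to-one correspondence.

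The step I expect to require the most care is verifying the inductive equivalence in the presence of negation and nested quantifiers, because flattening introduces an existential quantifier $\exists y$ regardless of the polarity of the atom being replaced. The key is that the graph axioms make $P_F$ single-valued and total, so that in every model of $T'$ the equivalence $\exists y\, (P_F(\ttt,y) \land \varphi) \ler \forall y\, (P_F(\ttt,y) \lir \varphi)$ holds; thus the locally introduced existential commutes correctly with the surrounding Boolean and quantifier structure, and the induction goes through. The second point worth checking is precisely that the rewriting avoids term duplication, since this is what keeps the transformation linear and justifies the claimed complexity bound.
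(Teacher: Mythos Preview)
Your proposal is correct and follows essentially the same route as the paper: introduce a graph predicate $P_F$ with totality and functionality axioms for each function symbol, flatten atoms by pulling functional subterms out via fresh existentially quantified variables, and exhibit the bijection between models via the graph correspondence. The paper phrases the flattening as first moving function symbols outside predicates to obtain atoms $F(\xxx) = y$ and then replacing those by $P_F(\xxx,y)$, but this is the same idea you describe in one combined step; your explicit remark that subterms are moved rather than copied is exactly what underwrites the linear-time bound.
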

The vocabulary $\voc'$ mentioned in the proposition can be obtained from $\voc$ by removing all function symbols and adding a new $(n+1)$-ary predicate symbol $P_F$ for each $n$-ary function symbol $F \in \voc$. Theory $T'$ is obtained from $T$ by adding the sentences 
\begin{align*}
	& \forall \xxx \exists y\ P_F(\xxx,y), \\
	& \forall \xxx \forall y_1 \forall y_2\ (P_F(\xxx,y_1) \land P_F(\xxx,y_2) \lir y_1 = y_2),
\end{align*}
for each of the introduced predicate symbols $P_F$, by moving all function symbols outside predicates using the standard equivalence-preserving rewrite rules, and finally replacing all atoms of the form $F(\xxx) = y$ by $P_F(\xxx,y)$. Let $I$ be a $\voc$-structure and $I'$ be the $\voc'$-structure defined by $P^{I'} = P^I$ for each $P \in \voc \cap \voc'$ and $P_F^{I'} = \{ \ddd,d' \mid F^I(\ddd) = d' \}$ for each function symbol $F \in \voc$. Then $I$ is a model of $T$ iff $I'$ is a model of $T'$. Moreover, each model of $T'$ can be obtained from a model of $T$ in this manner.
\begin{example}
Applying the sketched transformation on a theory containing the sentence $\id{Selected}(\id{C})$, produces a theory containing the sentences
\begin{align*}
	& \exists y\ P_{\id{C}}(y); \\
	& \forall y_1 \forall y_2\ (P_{\id{C}}(y_1) \land P_{\id{C}}(y_2) \lir y_1 = y_2); \\
	& \forall x\ (P_{\id{C}}(x) \lir \id{Selected}(x)).
\end{align*}
\end{example}

\subsection{Three- and Four-Valued Structures}\label{sec:struct}

In this section we present three- and four-valued structures. In these structures it is possible to express partial and inconsistent information.
%

\subsubsection{Four-Valued Structures}

\citeN{Belnap77} introduced a four-valued logic with truth values \emph{true}, \emph{false}, \emph{unknown}, and
\emph{inconsistent} which we denote by, respectively, $\true$, $\false$, $\unkn$ and $\inco$.  For a truth value $\tval$, the \emph{inverse value} $\tval^{-1}$ is defined by $\true^{-1} = \false$, $\false^{-1} = \true$, $\unkn^{-1} =
\unkn$ and $\inco^{-1} = \inco$. Belnap distinguished two orders, the \emph{truth order} $<_t$ and the \emph{precision order}
$<_p$, also called \emph{knowledge order}. They are defined in Figure~\ref{fig:order}. The reflexive closure of these orders is denoted by $\leqt$, respectively $\leqp$.
\begin{figure}
\begin{center}
\resizebox{0.25\textwidth}{!}{ \input{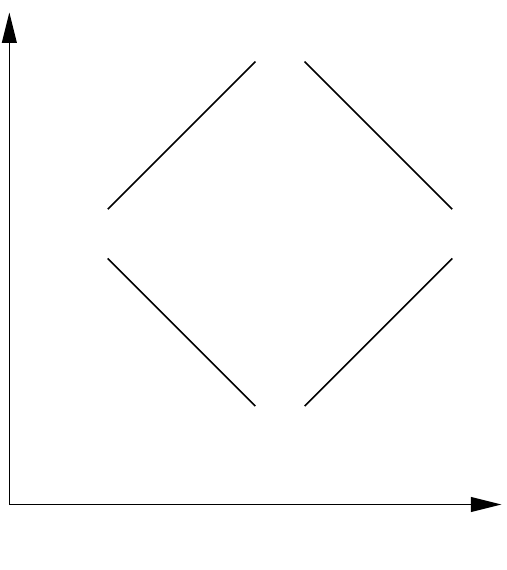_t} }
\end{center}
\caption{The truth and precision order. According to the truth axis, we have, e.g., $\false  <_t  \unkn$; according to the precision axis, we have, e.g., $\unkn <_p \false$. }
\label{fig:order}
\end{figure}

Let $\voc$ be a (function-free) vocabulary. A four-valued $\voc$-structure $\app{I}$ consists of a domain $D$ and a function $P^{\app{I}}: D^n \to \{ \true,\false,\unkn,\inco \}$ for every $P/n \in \voc$. We say that a four-valued structure $\app{I}$ is \emph{three-valued} when $P^{\app{I}}(\ddd) \neq \inco$ for any $P \in \voc$ and tuple of domain elements $\ddd$. A structure $\app{I}$ is \emph{two-valued} when it is three-valued and $P^{\app{I}}(\ddd) \neq \unkn$ for every $P$ and $\ddd$. We call a four-valued structure $\app{I}$ \emph{strictly three-valued} if it is three-valued but not two-valued. Likewise, a structure is \emph{strictly four-valued} if it is four-valued but not three-valued. A four-valued structure $\app{I}$ that is two-valued can be identified with the standard FO structure $I$ for which for every predicate symbol $P$ and tuple of domain elements $\ddd$, $\ddd \in P^I$ iff $P^{\app{I}}(\ddd) = \true$. In the rest of the paper, when we refer to a structure $I$ (without tilde) we mean a two-valued structure, while $\app{I}$ means a four-valued structure (which possibly is three- or two-valued).

The precision order extends to structures: if $\app{I}$ and $\app{J}$ are two $\voc$-structures, $\app{I} \leqp \app{J}$ if for every predicate symbol $P$ and tuple of domain elements $\ddd$, $P^{\app{I}}(\ddd) \leqp P^{\app{J}}(\ddd)$. Similarly, the truth order is extended to structures. 

The most precise $\voc$-structure with domain $D$ is denoted by
$\mapp_{\voc,D}$ and assigns $P^{\mapp_{\voc,D}}(\ddd) = \inco$ to
every predicate symbol $P/n \in \voc$ and $\ddd \in D^n$. Vice versa,
the least precise structure $\lapp_{\voc,D}$ assigns
$P^{\lapp_{\voc,D}}(\ddd) = \unkn$. We omit $D$ and/or $\voc$ from $\lapp_{\voc,D}$ and $\mapp_{\voc,D}$ if they are clear from the context. If a two-valued structure $I$ is more precise than a three-valued structure $\app{I}$, we say that $\app{I}$ \emph{approximates} $I$. 

The \emph{size} $\size{\app{I}}$ of a structure $\app{I}$ is defined as the cardinality of the domain of $\app{I}$. This definition is precise enough for the complexity results in this paper.

A \emph{domain atom} over a structure $\app{I}$ with domain $D$ is a pair of an $n$-tuple $\ddd$ of domain elements and an $n$-ary predicate symbol. We denote such a domain atom by $P(\ddd)$. For a truth value $\tval$ and domain atom $P(\ddd)$, we denote by $\app{I}[P(\ddd)/\tval]$ the structure that assigns $P^{\app{I}}(\ddd) = \tval$ and corresponds to $\app{I}$ on the rest of the vocabulary. 

A \emph{domain literal} is a domain atom $P(\ddd)$ or the negation $\neg P(\ddd)$ of a domain atom. By $\app{I}[\neg P(\ddd)/\tval]$ we denote the structure $\app{I}[P(\ddd)/\tval^{-1}]$. This notation is extended to sets of domain literals: if $U$ is the set $\{ L_1,\ldots,L_n \}$ of domain literals, $\app{I}[U/\tval]$ denotes the structure $\app{I}[L_1/\tval]\cdots[L_n/\tval]$.

The value of a formula $\varphi$ in a four-valued structure $\app{I}$ with domain $D$ under variable assignment $\theta$ is defined by structural induction: 
\begin{longitem}
	\item $\app{I}\theta(P(\xxx)) = P^{\app{I}}(\theta(\xxx))$;
	\item $\app{I}\theta(\neg \varphi) = (\app{I}\theta(\varphi))^{-1}$;
	\item $\app{I}\theta(\varphi \land \psi) = \glb_{\leqt}\{ \app{I}\theta(\varphi), \app{I}\theta(\psi) \}$;
	\item $\app{I}\theta(\varphi \lor \psi) = \lub_{\leqt}\{ \app{I}\theta(\varphi), \app{I}\theta(\psi) \}$;
	\item $\app{I}\theta(\forall x\ \varphi) = \glb_{\leqt}\{ \app{I}\theta[x/d](\varphi) \mid d \in D \}$;
	\item $\app{I}\theta(\exists x\ \varphi) = \lub_{\leqt}\{ \app{I}\theta[x/d](\varphi) \mid d \in D \}$.
\end{longitem}
When $\app{I}$ is a three-valued structure, this corresponds to the standard \citeN{Kleene52} semantics.

If $\app{I}$ is three-valued, then $\app{I}\theta(\varphi) \neq \inco$ for every formula $\varphi$ and variable assignment $\theta$.  If $\app{I}$ is two-valued, then $\app{I}\theta(\varphi) \in \{ \true, \false \}$. Also, if $\app{I}$ is two-valued, then $\app{I}\theta(\varphi) = \true$ iff $\app{I}\theta \models \varphi$. We omit $\theta$ and/or $[\xxx/\ddd]$ from an expression of the form $\app{I}\theta[\xxx/\ddd](\varphi)$ when they are irrelevant.
 
If $\varphi$ is a formula and $\app{I}$ and $\app{J}$ are two structures such that $\app{I} \leqp \app{J}$, then also $\app{I}\theta(\varphi) \leqp \app{J}\theta(\varphi)$ for every $\theta$. If $\varphi$ is a formula that does not contain negations and $\app{I} \leqt \app{J}$, then also $\app{I}\theta(\varphi) \leqt \app{J}\theta(\varphi)$ for every $\theta$. 

Four-valued structures can be defined over vocabularies containing function symbols. For each such a structure $\app{I}$, there exists a structure $\app{I}'$ over a function-free vocabulary such that there is a one-to-one correspondence between the two-valued structures approximated by $\app{I}$ and the two-valued structures approximated by $\app{I}'$. In combination with Proposition~\ref{prop:fftheory}, this allows to apply all results in the rest of this paper in a context where function symbols are present. We refer the reader to~\cite{phd/Wittocx10} for details.

\subsubsection{Encoding Four-Valued Structures by Two-Valued Structures}\label{ssec:pairs}

A standard way to encode a four-valued structure $\app{I}$ over a vocabulary $\voc$ is by a two-valued structure $\tf{\app{I}}$ over a vocabulary $\tf{\voc}$ containing two symbols $\ct{P}$ and $\cf{P}$ for each symbol $P \in \voc$. The interpretation of $\ct{P}$, respectively $\cf{P}$, in $\tf{\app{I}}$ represents what is certainly true, respectively certainly false, in $P^{\app{I}}$. Formally, for a vocabulary $\voc$ and $\voc$-structure $\app{I}$, $\tf{\voc}$ denotes the vocabulary $\{ \ct{P}/n \mid P/n \in \voc \} \cup \{ \cf{P}/n \mid P/n \in \voc \}$ and the $\tf{\voc}$-structure $\tf{\app{I}}$ is defined by $\ct{P}^{\tf{\app{I}}} = \{ \ddd \mid P^{\app{I}}(\ddd) \geq_p \true \}$ and $\cf{P}^{\tf{\app{I}}} = \{ \ddd \mid P^{\app{I}}(\ddd) \geq_p \false \}$ for every $P \in \voc$. 

Observe that $\app{I}$ is three-valued iff $\ct{P}^{\tf{\app{I}}}$ and $\cf{P}^{\tf{\app{I}}}$ are disjoint for any $P \in \voc$; $\app{I}$ is two-valued iff for every $P/n \in \voc$, $\ct{P}^{\tf{\app{I}}}$ and $\cf{P}^{\tf{\app{I}}}$ are each others complement in $D^n$. Also, if $\app{I} \leqp \app{J}$, then $\ct{P}^{\tf{\app{I}}} \subseteq \ct{P}^{\tf{\app{J}}}$ and $\cf{P}^{\tf{\app{I}}} \subseteq \cf{P}^{\tf{\app{J}}}$. Therefore $\app{I} \leqp \app{J}$ iff $\tf{\app{I}} \leqt \tf{\app{J}}$.

The value of a formula $\varphi$ in a structure $\app{I}$ can be obtained by computing the value of two formulas over $\tf{\voc}$ in $\tf{\app{I}}$. Define for a formula $\varphi$ over $\voc$ the formulas $\ct{\varphi}$ and $\cf{\varphi}$ over $\tf{\voc}$ by simultaneous induction:
\begin{longitem}
	\item $\ct{(P(\xxx))} = \ct{P}(\xxx)$ and $\cf{(P(\xxx))} = \cf{P}(\xxx)$;
	\item $\ct{(\neg \varphi)} = \cf{\varphi}$ and $\cf{(\neg \varphi)} = \ct{\varphi}$;
	\item $\ct{(\varphi \land \psi)} = \ct{\varphi} \land \ct{\psi}$ and $\cf{(\varphi \land \psi)} = \cf{\varphi} \lor \cf{\psi}$;
	\item $\ct{(\varphi \lor \psi)} = \ct{\varphi} \lor \ct{\psi}$ and $\cf{(\varphi \lor \psi)} = \cf{\varphi} \land \cf{\psi}$;
	\item $\ct{(\forall x\ \varphi)} = \forall x\ \ct{\varphi}$ and $\cf{(\forall x\ \varphi)} = \exists x\ \cf{\varphi}$;
	\item $\ct{(\exists x\ \varphi)} = \exists x\ \ct{\varphi}$ and $\cf{(\exists x\ \varphi)} = \forall x\ \cf{\varphi}$.
\end{longitem}
The intuition is that $\ct{\varphi}$ denotes a formula that is true iff $\varphi$ is certainly true while $\cf{\varphi}$ is a formula that is true iff $\varphi$ is certainly false. This explains, e.g., the definition $\ct{(\neg\varphi)} = \cf{\varphi}$: $\neg\varphi$ is certainly true iff $\varphi$ is certainly false. As another example, $\cf{(\varphi \land \psi)} = \cf{\varphi} \lor \cf{\psi}$ states that $(\varphi \land \psi)$ is certainly false if $\varphi$ or $\psi$ is certainly false. 

For a pair of formulas $(\varphi_1,\varphi_2)$, a structure $I$ and variable assignment $\theta$, we denote the pair of truth values $(I\theta(\varphi_1),I\theta(\varphi_2))$ by $I\theta(\varphi_1,\varphi_2)$. We identify the pairs $(\true,\false)$, $(\false,\true)$, $(\false,\false)$ and $(\true,\true)$ with, respectively, the truth values $\true$, $\false$, $\unkn$ and $\inco$. Intuitively, the first value in the pairs states whether something is certainly true, the second value whether it is certainly false. It follows that, e.g., $(\true,\false)$ corresponds to saying that something is certainly true and not certainly false and therefore identifies with $\true$. Using these equalities, the next proposition expresses that the value of a formula in a four-valued structure $\app{I}$ can be computed by evaluating $\ct{\varphi}$ and $\cf{\varphi}$ in the two-valued structure $\tf{\app{I}}$.

\begin{proposition}[\cite{jsyml/Feferman84}]\label{prop:ftot}
For every formula $\varphi$, structure $\app{I}$, and variable assignment $\theta$, $\app{I}\theta(\varphi) = \tf{\app{I}}\theta(\ct{\varphi}, \cf{\varphi})$.
\end{proposition}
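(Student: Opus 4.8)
The plan is to prove the identity by structural induction on $\varphi$, carried out simultaneously for the two components of the pair. Following the identification fixed in the text, I write each four-valued value $v \in \{\true,\false,\unkn,\inco\}$ as the pair $(\alpha(v),\beta(v))$, where $\alpha(v)$ is the truth value of the statement $v \geqp \true$ and $\beta(v)$ that of $v \geqp \false$. The assignment $\true\mapsto(\true,\false)$, $\false\mapsto(\false,\true)$, $\unkn\mapsto(\false,\false)$, $\inco\mapsto(\true,\true)$ shows that $v\mapsto(\alpha(v),\beta(v))$ is exactly the bijection used to identify truth values with pairs, so the claim $\app{I}\theta(\varphi)=\tf{\app{I}}\theta(\ct\varphi,\cf\varphi)$ is equivalent to the conjunction of the two scalar statements: $\app{I}\theta(\varphi)\geqp\true$ iff $\tf{\app{I}}\theta(\ct\varphi)=\true$, and $\app{I}\theta(\varphi)\geqp\false$ iff $\tf{\app{I}}\theta(\cf\varphi)=\true$. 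I would prove these two equivalences together by induction, taking both as the induction hypothesis for each immediate subformula.

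For the base case $\varphi=P(\xxx)$, the left-hand side is $\app{I}\theta(P(\xxx))=P^{\app{I}}(\theta(\xxx))$, while the right-hand components are the truth values of $\theta(\xxx)\in\ct P^{\tf{\app{I}}}$ and $\theta(\xxx)\in\cf P^{\tf{\app{I}}}$. By the very definitions $\ct P^{\tf{\app{I}}}=\{\ddd\mid P^{\app{I}}(\ddd)\geqp\true\}$ and $\cf P^{\tf{\app{I}}}=\{\ddd\mid P^{\app{I}}(\ddd)\geqp\false\}$, these components are precisely $\alpha(P^{\app{I}}(\theta(\xxx)))$ and $\beta(P^{\app{I}}(\theta(\xxx)))$, so the base case holds by construction; indeed, the whole $\ct\cdot/\cf\cdot$ encoding is designed to make this step trivial.

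The inductive step rests on a single observation about Belnap's four-valued lattice: under the bijection above, the truth order $\leqt$ decomposes coordinatewise as ``$\alpha$ nondecreasing, $\beta$ nonincreasing'' (this can be read directly off Figure~\ref{fig:order}: moving up the truth axis turns $\false$ into $\unkn$ or $\inco$ and then into $\true$, raising $\alpha$ and lowering $\beta$). Consequently, for any set $S$ of truth values, $\alpha(\glb_{\leqt}S)=\bigwedge_{v\in S}\alpha(v)$ and $\beta(\glb_{\leqt}S)=\bigvee_{v\in S}\beta(v)$, with the dual identities for $\lub_{\leqt}$, while $\alpha(v^{-1})=\beta(v)$ and $\beta(v^{-1})=\alpha(v)$. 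Each identity matches exactly one clause of the simultaneous definition of $\ct\cdot$ and $\cf\cdot$: the swap under inversion is the clause $\ct{(\neg\varphi)}=\cf\varphi$, $\cf{(\neg\varphi)}=\ct\varphi$; the ``meet on $\alpha$, join on $\beta$'' law for $\glb_{\leqt}$ is the clause $\ct{(\varphi\land\psi)}=\ct\varphi\land\ct\psi$, $\cf{(\varphi\land\psi)}=\cf\varphi\lor\cf\psi$; and dually for $\lor$. The quantifier cases are the infinitary versions: $\app{I}\theta(\forall x\,\varphi)$ is a $\glb_{\leqt}$ over $\{\,\app{I}\theta[x/d](\varphi)\mid d\in D\,\}$, whose $\alpha$-coordinate is the universal quantification of the $\alpha$-coordinates and whose $\beta$-coordinate is the existential quantification of the $\beta$-coordinates, matching $\ct{(\forall x\,\varphi)}=\forall x\,\ct\varphi$ and $\cf{(\forall x\,\varphi)}=\exists x\,\cf\varphi$; the $\exists$ case is dual. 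In each case I would apply the induction hypothesis to the immediate subformula(s) and then read off the conclusion from the matching lattice identity.

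The only part that requires genuine checking is the lattice observation itself, i.e.\ that $v\mapsto(\alpha(v),\beta(v))$ is an order isomorphism from $(\{\true,\false,\unkn,\inco\},\leqt)$ onto the product of the Boolean lattice with its order-dual, from which the coordinatewise formulas for $\glb_{\leqt}$, $\lub_{\leqt}$ and for inversion follow automatically. This is a finite verification over the four values, and I expect it to be the main (though routine) obstacle; once it is in place, the induction is pure bookkeeping. I would also note explicitly that nothing here depends on $D$ being finite, since the lattice is complete under $\leqt$ and the $\alpha$- and $\beta$-coordinates of an arbitrary $\glb_{\leqt}$ or $\lub_{\leqt}$ are just the $\forall$/$\exists$ (resp.\ $\exists$/$\forall$) of the coordinates over the indexing family.
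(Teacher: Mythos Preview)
The paper does not supply its own proof of this proposition: it is stated with a citation to \cite{jsyml/Feferman84} and used as a black box. Your structural-induction argument is correct and is the standard route to this fact; the crucial step, as you identify, is the order isomorphism $v\mapsto(\alpha(v),\beta(v))$ from $(\{\true,\false,\unkn,\inco\},\leqt)$ onto the product of the two-element Boolean lattice with its order-dual, from which the coordinatewise behaviour of $\glb_{\leqt}$, $\lub_{\leqt}$ and $(\cdot)^{-1}$ follows immediately and matches the clauses defining $\ct{\varphi}$ and $\cf{\varphi}$.
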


It follows from Proposition~\ref{prop:ftot} and from the fact that it can be decided in polynomial time in $\size{I}$ whether a finite two-valued structure $I$ satisfies a formula, that $\app{I}\theta(\varphi)$ can be computed in polynomial time in $\size{\app{I}}$ for any finite four-valued structure $\app{I}$. Another interesting property of the formulas $\ct{\varphi}$ and $\cf{\varphi}$ is stated in the following proposition.
\begin{proposition}\label{prop:tfpos}
For every formula $\varphi$,  neither $\ct{\varphi}$ nor $\cf{\varphi}$ contain a `$\neg$'.
\end{proposition}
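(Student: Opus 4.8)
The plan is to prove both claims simultaneously by structural induction on $\varphi$. This is forced by the shape of the definition: $\ct{\varphi}$ and $\cf{\varphi}$ are themselves defined by a \emph{simultaneous} induction, and in particular the negation clause defines $\ct{(\neg\varphi)}$ in terms of $\cf{\varphi}$ and $\cf{(\neg\varphi)}$ in terms of $\ct{\varphi}$. Accordingly, the induction hypothesis I would carry is the conjoined statement ``neither $\ct{\psi}$ nor $\cf{\psi}$ contains a `$\neg$'\,'' for every proper subformula $\psi$, and the goal is to establish the same for $\varphi$ itself. I would also note at the outset that the derived connectives $\lir$ and $\ler$ are treated as abbreviations, expanded into $\neg$, $\land$ and $\lor$ before the definition is applied, so that only the six clauses listed need to be considered.

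For the base case, $\varphi$ is an atom $P(\xxx)$, and by definition $\ct{\varphi} = \ct{P}(\xxx)$ and $\cf{\varphi} = \cf{P}(\xxx)$; both are atomic and hence trivially negation-free. (Equality atoms, if present, are handled the same way, regarding $=$ and $\neq$ as primitive relations.) For the inductive step I would simply walk through the remaining clauses. In the cases for $\land$, $\lor$, $\forall$ and $\exists$, the outermost symbol of $\ct{\varphi}$ and of $\cf{\varphi}$ is one of these four connectives applied to expressions of the form $\ct{\psi}$ or $\cf{\psi}$; since the induction hypothesis makes each such expression negation-free, so is the combination. Concretely, for $\varphi = \psi_1 \land \psi_2$ one has $\ct{\varphi} = \ct{\psi_1} \land \ct{\psi_2}$ and $\cf{\varphi} = \cf{\psi_1} \lor \cf{\psi_2}$, and the disjunction and quantifier cases are symmetric.

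The only clause that even mentions `$\neg$' is the negation case, and this is precisely where the argument turns. The definition sets $\ct{(\neg\psi)} = \cf{\psi}$ and $\cf{(\neg\psi)} = \ct{\psi}$, so instead of propagating the negation it is absorbed by swapping the roles of $\ct{}$ and $\cf{}$; hence $\ct{(\neg\psi)}$ and $\cf{(\neg\psi)}$ are negation-free directly by the induction hypothesis. I do not expect a genuine obstacle: the one point requiring care is to run the induction on $\ct{}$ and $\cf{}$ together, so that the negation case, which crosses between the two, has exactly the hypothesis it needs.
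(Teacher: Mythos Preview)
Your proof is correct and is precisely the argument the paper has in mind: the paper's own proof is the single line ``Follows directly from the definition of $\ct{\varphi}$ and $\cf{\varphi}$,'' and your simultaneous structural induction is exactly what makes that line work.
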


\subsection{Constraint Programming}

We now recall some definitions from Constraint Programming (CP). Let $S$ be a sequence $(v_1,\ldots,v_n)$ of variables. A \emph{constraint} on $S$ is a set of $n$-tuples. A \emph{constraint satisfaction problem} (CSP) is a tuple $\langle \mathcal{C}, V, \domf \rangle$ of a set $V$ of variables, a mapping $\domf$ of variables in $V$ to domains, and a set $\mathcal{C}$ of constraints on finite sequences of variables from $V$. A \emph{solution} to $\langle \mathcal{C}, V, \domf \rangle$ is a function $d$, mapping each variable $v$ of $V$ to a value $d(v) \in \dom{v}$ such that $(d(v_1),\ldots,d(v_n)) \in C$ for each constraint $C \in \mathcal{C}$ on sequence $(v_1,\ldots,v_n)$. Two CSPs sharing the same variables are called \emph{equivalent} if they have the same solutions.

A \emph{propagator} (also called a \emph{constraint solver}) is a function mapping CSPs to equivalent CSPs. A propagator is called \emph{domain reducing} if it retains the constraints of a CSP and does not increase its domains. That is, if propagator $\prp$ is domain reducing and $\prp(\langle \mathcal{C}_1,V,\domf_1\rangle) = \langle \mathcal{C}_2,V,\domf_2\rangle$, then $\mathcal{C}_2 = \mathcal{C}_1$ and for every $v \in V$, $\domf_2(v) \subseteq \domf_1(v)$. In this paper, we only consider domain reducing propagators.

We refer to the book of \citeN{Apt03} for a comprehensive introduction to CP. To avoid confusion between variables in the context of FO and variables of a CSP, we call the latter \emph{constraint variables} in the rest of this paper.\footnote{In fact, constraint variables correspond to $0$-ary function symbols, i.e., constant symbols, in the context of FO.} 

\section{Constraint Propagation for Logic Theories}\label{sec:cpforlogic}

In this section, we transfer some terminology and well-known results from CP to (first-order) logic theories. We rely on the property that for every pair of a finite structure $\app{I}$ and a theory $T$, there exists a CSP $\mathcal{P}$ such that there is a one-to-one correspondence between the models of $T$ approximated by $\app{I}$ and the solutions of $\mathcal{P}$.\footnote{The inverse property also holds: for each finite CSP $\langle \mathcal{C}, V, \domf \rangle$, there exists a first-order logic theory $T$ and finite structure $\app{I}$ such that there is a one-to-one correspondence between the solutions of the CSP and the models of $T$ approximated by $\app{I}$. The theory $T$ and structure $\app{I}$ can be constructed by introducing a constant $c_v$ for every $v \in V$ and a predicate $P_C$ for every $C \in \mathcal{C}$. Theory $T$ is then defined by $\{ P_C(c_{v_1},\ldots,c_{v_n}) \mid \text{$C \in \mathcal{C}$ is a constraint on $(v_1,\ldots,v_n)$ }\}$, while $\app{I}$ assigns $C$ to $P_C$ and allows each $c_v$ to take a value in $\domf(v)$~\cite[page 97]{phd/Wittocx10}.}

\subsection{From a Model Generation Problem to a CSP}\label{ssec:mgtocsp}

For the rest of this section, let $T$ be a logic theory over vocabulary $\voc$ and $\app{I}$ a four-valued $\voc$-structure with domain $D$. If $\app{I}$ is finite, then the pair $\langle T,\app{I} \rangle$ has a corresponding CSP which is denoted by $\langle \mathcal{C}_T,V_{\app{I}},\domf_{\app{I}} \rangle$ and defined as follows. The set of constraint variables $V_{\app{I}}$ is defined as the set of all domain atoms over $\voc$ and $D$. We assume a fixed total order on $V_{\app{I}}$ and call the $i$th element in that order the $i$th domain atom. The domain $\domf_{\app{I}}(P(\ddd))$ associated to domain atom $P(\ddd)$ is defined by
\[
	\domf_{\app{I}}(P(\ddd)) = 
		\begin{cases}
			\{ \true, \false \} & \text{if $P^{\app{I}}(\ddd) = \unkn$,} \\
			\{ \true \}			& \text{if $P^{\app{I}}(\ddd) = \true$,} \\
			\{ \false \}		& \text{if $P^{\app{I}}(\ddd) = \false$,} \\
			\emptyset			& \text{if $P^{\app{I}}(\ddd) = \inco$.} \\
		\end{cases}
\]
Given a tuple $\overline{\tval} \in \{ \true, \false \}^{\size{V_{\app{I}}}}$, $I_{\overline{\tval}}$ denotes the $\voc$-structure with domain $D$ such that for every $P$ and $\ddd$, $\ddd \in P^{I_{\overline{\tval}}}$ iff $P(\ddd)$ is the $i$th domain atom and the $i$th truth value in $\overline{\tval}$ is $\true$. Finally, $\mathcal{C}_T$ is the singleton set containing the constraint that consists of the set of tuples $\overline{\tval} \in \{ \true, \false \}^{\size{V_{\app{I}}}}$ such that $I_{\overline{\tval}} \models T$. It follows immediately that $\overline{\tval}$ is a solution to $\langle \mathcal{C}_T,V_{\app{I}},\domf_{\app{I}} \rangle$ iff $I_{\overline{\tval}} \models T$ and $\app{I} \leqp I_{\overline{\tval}}$.

\subsection{Propagators}

A structure $\app{I}$ can be seen as approximating some models of $T$, namely all two-valued structures $M$ such that $\app{I} \leqp M$ and $M \models T$. The goal of constraint propagation for $T$ is then to find a better approximation of these models, i.e., one that is more precise than $\app{I}$. We call an operator on the class of four-valued $\voc$-structures a \emph{propagator for $T$} if it performs constraint propagation for $T$. Formally, $\prp$ is a propagator for $T$ if the following two conditions are met: 
\begin{enumerate}
	\item $\prp$ is \emph{inflationary} with respect to $\leqp$. That is, $\app{I} \leqp \prp(\app{I})$ for every structure $\app{I}$. 	
	\item For every model $M$ of $T$ such that $\app{I} \leqp M$, also $\prp(\app{I}) \leqp M$. 
\end{enumerate}
The first condition states that by applying an operator no information is lost. The second condition states that no models of $T$ approximated by $\app{I}$ are lost. Note that for a propagator $\prp$ it follows from the definition above that $\app{I}$ and $\prp(\app{I})$ must have the same domain.

The following proposition relates the definition of a propagator for $T$ to the definition of propagator in the context of CP. The proof of the proposition is straightforward.
\begin{proposition}\label{prop:cpandfo}
Let $\prp$ is a propagator for $T$, $D$ a finite set, and $C$ be the class of CSPs of the form $\langle \mathcal{C}_T,V_{\app{I}},\domf_{\app{I}} \rangle$, where $\app{I}$ has domain $D$. Then the operator $f$ on $C$ defined by $f(\langle \mathcal{C}_T,V_{\app{I}},\domf_{\app{I}} \rangle) = \langle \mathcal{C}_T,V_{\app{I}},\domf_{\prp(\app{I})} \rangle$, is a domain reducing propagator.
\end{proposition}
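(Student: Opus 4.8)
The plan is to check, one by one, the three things packed into ``domain reducing propagator'': that $f$ keeps the constraint set, that it never enlarges a domain, and that its output is an equivalent CSP (same solutions). The first is immediate, because both $\langle \mathcal{C}_T,V_{\app{I}},\domf_{\app{I}}\rangle$ and its image $\langle \mathcal{C}_T,V_{\app{I}},\domf_{\prp(\app{I})}\rangle$ carry the same single constraint $\mathcal{C}_T$ over the same variable set $V_{\app{I}}$, so the constraints are retained verbatim.

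For ``does not enlarge the domains'' I would use only the first defining property of a propagator for $T$, namely inflationarity $\app{I} \leqp \prp(\app{I})$, which unfolds to $P^{\app{I}}(\ddd) \leqp P^{\prp(\app{I})}(\ddd)$ for every domain atom $P(\ddd)$. The observation that does the work is that the assignment of domains in the definition of $\domf_{\app{I}}$ is antitone with respect to $\leqp$: as a truth value climbs from $\unkn$ up to $\true$ or $\false$ and then to $\inco$, its associated domain shrinks from $\{\true,\false\}$ to a singleton and then to $\emptyset$. A four-case check on the value of $P^{\app{I}}(\ddd)$ then yields $\domf_{\prp(\app{I})}(P(\ddd)) \subseteq \domf_{\app{I}}(P(\ddd))$ in each case, which is exactly the non-enlargement condition.

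The substantive step is showing the two CSPs are equivalent, for which I would invoke the characterization already recorded before the statement: $\overline{\tval}$ solves $\langle \mathcal{C}_T,V_{\app{I}},\domf_{\app{I}}\rangle$ iff $I_{\overline{\tval}} \models T$ and $\app{I} \leqp I_{\overline{\tval}}$, and $\overline{\tval}$ solves the image iff $I_{\overline{\tval}} \models T$ and $\prp(\app{I}) \leqp I_{\overline{\tval}}$. One inclusion comes for free from the domain shrinking just established: every solution of the reduced CSP already satisfies the (looser) domain constraints of the original, hence solves it. The reverse inclusion is where the second propagator condition is needed: if $I_{\overline{\tval}} \models T$ and $\app{I} \leqp I_{\overline{\tval}}$, then $M := I_{\overline{\tval}}$ is a model of $T$ approximated by $\app{I}$, so the second condition forces $\prp(\app{I}) \leqp I_{\overline{\tval}}$, whence $\overline{\tval}$ solves the reduced CSP. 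Together the two inclusions give equality of the solution sets.

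I do not expect a genuine obstacle here; the task is bookkeeping rather than mathematics, which matches the paper's remark that the proof is straightforward. The only point demanding a little care is matching each half of the definition of ``propagator for $T$'' to the right half of the CP definition: inflationarity alone yields both the domain shrinking and the inclusion ``reduced solutions are original solutions'', while the model-preservation condition is precisely what guarantees that no solution of the original is dropped. Once this correspondence is spelled out, only the antitonicity case analysis and two transitivity-style applications of $\leqp$ remain.
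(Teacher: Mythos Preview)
Your proposal is correct and follows essentially the same approach as the paper. The only cosmetic difference is that the paper packages your antitonicity observation as a single biconditional ($\app{I} \leqp \app{J}$ iff $\domf_{\app{J}} \subseteq \domf_{\app{I}}$) and actually proves the stronger two-way statement that $\prp$ is a propagator for $T$ \emph{iff} $f$ is a domain reducing propagator, whereas you spell out the one direction required by the proposition via an explicit case split; the underlying argument is the same.
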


A propagator $\prp$ is called \emph{monotone} if for every two structures $\app{I}$ and $\app{J}$ such that $\app{I} \leqp \app{J}$, also $\prp(\app{I}) \leqp \prp(\app{J})$ holds. An example of a monotone propagator is the \emph{inconsistency propagator} $\iprop$, defined by 
\[
	\iprop(\app{I}) = 
	\begin{cases}
		\app{I}	& \text{if $\app{I}$ is three-valued} \\
		\mapp		& \text{otherwise}
	\end{cases}
\]
A propagator $\prp$ for $T$ is \emph{inducing} for $T$ if for every two-valued structure $I$ such that $I \not\models T$, $\prp(I)$ is strictly four-valued, i.e., it recognizes that $I$ is not a model and assigns $\inco$ to at least one domain element.

Note that the composition of two propagators is a propagator itself.
\begin{lemma}\label{lem:propcompose}
If $T_1$ and $T_2$ are theories over the same vocabulary, $\prp_1$ is a propagator for $T_1$ and $\prp_2$ a propagator for $T_2$, then $\prp_1 \circ \prp_2$ is a propagator for $T_1 \cup T_2$.
\end{lemma}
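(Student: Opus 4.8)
The plan is to verify the two defining conditions of a propagator for $T_1 \cup T_2$ directly, using the fact that $\prp_1$ and $\prp_2$ each satisfy these conditions for their respective theories. Write $\prp = \prp_1 \circ \prp_2$, so that $\prp(\app{I}) = \prp_1(\prp_2(\app{I}))$ for every structure $\app{I}$.

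First I would check inflationarity with respect to $\leqp$. Since $\prp_2$ is a propagator for $T_2$, it is inflationary, so $\app{I} \leqp \prp_2(\app{I})$. Applying the same property to $\prp_1$ on the structure $\prp_2(\app{I})$ gives $\prp_2(\app{I}) \leqp \prp_1(\prp_2(\app{I}))$. Transitivity of $\leqp$ then yields $\app{I} \leqp \prp_1(\prp_2(\app{I})) = \prp(\app{I})$, which is condition~1. This step is routine and uses only that $\leqp$ is a partial order together with the inflationarity of each factor.

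Next I would verify the model-preservation condition. Let $M$ be a model of $T_1 \cup T_2$ with $\app{I} \leqp M$. Then $M \models T_1$ and $M \models T_2$ separately. Since $M \models T_2$ and $\app{I} \leqp M$, condition~2 for $\prp_2$ gives $\prp_2(\app{I}) \leqp M$. Now $M \models T_1$ and $\prp_2(\app{I}) \leqp M$, so condition~2 for $\prp_1$, applied to the structure $\prp_2(\app{I})$, gives $\prp_1(\prp_2(\app{I})) \leqp M$, i.e., $\prp(\app{I}) \leqp M$. This is exactly condition~2 for $\prp$ relative to $T_1 \cup T_2$.

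There is no genuinely hard step here; the argument is a clean composition of the two propagator axioms, and the only point requiring any care is the observation that a model of $T_1 \cup T_2$ is simultaneously a model of $T_1$ and of $T_2$, which lets us feed it into both conditions. I would also remark that all the structures involved share the same domain (as noted after the definition of propagator), so the compositions are well-defined and the resulting operator again preserves domains, confirming that $\prp$ is indeed a propagator for $T_1 \cup T_2$.
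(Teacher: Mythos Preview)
Your proof is correct and follows essentially the same approach as the paper's own proof: both verify inflationarity via $\app{I} \leqp \prp_2(\app{I}) \leqp \prp_1(\prp_2(\app{I}))$ and model preservation by applying condition~2 first for $\prp_2$ and then for $\prp_1$. The paper's version is just more terse.
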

It is easy to check that the composition of two monotone propagators is a monotone propagator. Also, if $\prp_1$ is inducing for $T_1$ and $\prp_2$ is inducing for $T_2$, then $\prp_1 \circ \prp_2$ is inducing for $T_1 \cup T_2$.

From the definition of propagator, it follows that two logically equivalent theories have the same propagators. For $\voc$-equivalent theories, we have the following property.
\begin{proposition}\label{prop:extendprop}
Let $\voc$ and $\voc'$ be two vocabularies such that $\voc \subseteq \voc'$ and let $T$ and $T'$ be $\voc$-equivalent theories over $\voc$, respectively $\voc'$. Let $O'$ be an operator on $\voc'$ structures and define the operator $O$ on $\voc$-structures by $O(\app{I}) = \res{O'(\app{I}+\lapp_{\voc'\setminus\voc})}{\voc}$ for any $\voc$-structure $\app{I}$. If $O'$ is a propagator for $T'$, then $O$ is a propagator for $T$. If $O'$ is monotone, then $O$ is monotone as well.
\end{proposition}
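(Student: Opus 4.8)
The plan is to verify the two defining conditions of a propagator for $O$ directly, pulling back the corresponding properties of $O'$ through the construction $O(\app{I}) = \res{O'(\app{I}+\lapp_{\voc'\setminus\voc})}{\voc}$. Throughout I would write $\app{I}^+$ for the $\voc'$-structure $\app{I}+\lapp_{\voc'\setminus\voc}$, so that $O(\app{I}) = \res{O'(\app{I}^+)}{\voc}$. The key conceptual point is that the $\voc$-equivalence of $T$ and $T'$ lets me translate between models of $T$ over $\voc$ and models of $T'$ over $\voc'$ that agree on the shared vocabulary. I would first record that, since $\voc \subseteq \voc'$, the structure $\app{I}^+$ is simply $\app{I}$ extended by assigning $\unkn$ to every domain atom of the new symbols in $\voc'\setminus\voc$, and that $\res{\app{I}^+}{\voc} = \app{I}$.

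For the \textbf{inflationarity} condition, I must show $\app{I} \leqp O(\app{I})$. Since $O'$ is a propagator for $T'$, it is inflationary, so $\app{I}^+ \leqp O'(\app{I}^+)$. The precision order on structures is defined componentwise over the vocabulary, so restriction to $\voc$ preserves $\leqp$; hence $\res{\app{I}^+}{\voc} \leqp \res{O'(\app{I}^+)}{\voc}$, i.e.\ $\app{I} \leqp O(\app{I})$. Monotonicity, if $O'$ is monotone, follows by the same restriction argument: from $\app{I} \leqp \app{J}$ one gets $\app{I}^+ \leqp \app{J}^+$ (the new symbols are both set to $\lapp$, so they agree), whence $O'(\app{I}^+) \leqp O'(\app{J}^+)$ and, after restriction, $O(\app{I}) \leqp O(\app{J})$.

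The \textbf{model-preservation} condition is the substantive step, and I expect it to be the main obstacle because it is where $\voc$-equivalence must be invoked carefully. I take a two-valued $\voc$-structure $M$ with $M \models T$ and $\app{I} \leqp M$, and I must show $O(\app{I}) \leqp M$. The idea is to produce a witnessing $\voc'$-expansion of $M$ satisfying $T'$: by $\voc$-equivalence applied to the $\voc$-structure $M$, since $M$ has a $\voc$-expansion (namely $M$ itself) satisfying $T$, there exists an expansion $M'$ of $M$ to $\voc'$ with $M' \models T'$. Now I need $\app{I}^+ \leqp M'$ so that the propagator property of $O'$ applies: on the shared symbols $\voc$ this holds because $\app{I} \leqp M$ and $\res{M'}{\voc}=M$; on the new symbols $\voc'\setminus\voc$ it holds because $\app{I}^+$ assigns $\unkn$ there, which is $\leqp$-below any two-valued value. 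Hence $\app{I}^+ \leqp M'$ and $M' \models T'$, so condition~(2) for $O'$ gives $O'(\app{I}^+) \leqp M'$. Restricting to $\voc$ yields $O(\app{I}) = \res{O'(\app{I}^+)}{\voc} \leqp \res{M'}{\voc} = M$, as required.

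The one subtlety to handle with care is the direction of the $\voc$-equivalence biconditional: I am given a model of $T$ and need a model of $T'$ over the same $\voc$-reduct, which is exactly the left-to-right implication of Definition~1 instantiated at the $\voc$-structure $\res{M}{\voc}=M$. I would state explicitly that this is all the definition provides and that it suffices, since I only ever need to push a single given model of $T$ upward to a model of $T'$, never to enumerate all of them.
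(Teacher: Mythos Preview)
Your proposal is correct and follows essentially the same approach as the paper's proof: both verify inflationarity by restricting the inflationary inequality for $O'$, invoke $\voc$-equivalence to lift a model $M$ of $T$ to an expansion $M'\models T'$ with $\app{I}^+\leqp M'$, and then restrict $O'(\app{I}^+)\leqp M'$ to $\voc$. Your write-up is simply more explicit about the intermediate steps (in particular the verification that $\app{I}^+\leqp M'$ on the new symbols and the direction of $\voc$-equivalence used), which the paper leaves implicit.
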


\subsection{Refinement Sequences}

If $V$ is a set of propagators for $T$, Lemma~\ref{lem:propcompose} ensures that constraint propagation for $T$ can be performed by starting from $\app{I}$ and successively applying propagators from $V$. We then get a sequence of increasingly precise four-valued structures. If such a sequence is strictly increasing in precision, we call it a \emph{$V$-refinement sequence from $\app{I}$}.

\begin{definition}
Let $V$ be a set of propagators for $T$. We call a (possibly transfinite) sequence $\langle \app{J}_{\xi} \rangle_{0 \leq \xi \leq \alpha}$ of four-valued structures a \emph{$V$-refinement sequence from $\app{I}$} if 
\begin{longitem}
	\item $\app{J}_0 = \app{I}$,
	\item $\app{J}_{\xi+1} = \prp(\app{J}_{\xi})$ for some $\prp \in V$,
	\item $\app{J}_{\xi} <_p \app{J}_{\xi+1}$ for every $0 \leq \xi < \alpha$, 
	\item and $\app{J}_{\lambda} = \lub_{\leqp}(\{ \app{J}_{\xi} \mid \xi < \lambda \})$ for every limit ordinals $\lambda \leq \alpha$.
\end{longitem}
\end{definition}

In the CP literature, refinement sequences are sometimes called \emph{derivations}, and constructing a derivation is called \emph{constraint propagation}. Since refinement sequences are strictly increasing in precision, it follows that every refinement sequence from a finite structure $\app{I}$ is finite. Moreover:

\begin{proposition}\label{prop:len}\label{PROP:LEN}
For any fixed set of propagators $V$, the length of a $V$-refinement sequence from a finite structure $\app{I}$ is polynomial in $\size{\app{I}}$.
\end{proposition}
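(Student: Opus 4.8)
The plan is to exhibit a natural-number-valued precision measure on four-valued structures that is bounded by a polynomial in $\size{\app{I}}$ and that strictly increases at every step of a refinement sequence. Since such a measure can strictly increase only finitely often, this simultaneously rules out transfinite steps and bounds the length of the sequence.

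First I would fix the (finite) vocabulary $\voc$ and write $n = \size{\app{I}}$. The relevant combinatorial quantity is the number $N$ of domain atoms over $\voc$ and the domain $D$ of $\app{I}$, namely $N = \sum_{P/k \in \voc} n^k$. Because $\voc$ is fixed, $N = \bigo{n^a}$ with $a$ the maximal arity occurring in $\voc$, so $N$ is polynomial in $\size{\app{I}}$. Every structure in a refinement sequence from $\app{I}$ shares the domain $D$, so all of them have the same set of $N$ domain atoms, which are exactly the coordinates on which two structures in the sequence can differ.

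Next I would assign to each truth value its height in the precision order, $\mathrm{pr}(\unkn) = 0$, $\mathrm{pr}(\true) = \mathrm{pr}(\false) = 1$ and $\mathrm{pr}(\inco) = 2$, and set $\mu(\app{J}) = \sum_{P(\ddd)} \mathrm{pr}(P^{\app{J}}(\ddd))$, the sum ranging over all domain atoms. Each summand lies in $\{0,1,2\}$, so $0 \leq \mu(\app{J}) \leq 2N$ for every structure $\app{J}$ with domain $D$. Inspecting the four-element precision lattice shows that $v <_p v'$ implies $\mathrm{pr}(v) < \mathrm{pr}(v')$, hence $v \leqp v'$ implies $\mathrm{pr}(v) \leq \mathrm{pr}(v')$ with equality only when $v = v'$. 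Reading this coordinatewise, $\app{J} <_p \app{J}'$ forces $\mu(\app{J}) < \mu(\app{J}')$, since no coordinate of $\mu$ decreases and at least one strictly increases.

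Finally I would combine this with the defining inequality $\app{J}_\xi <_p \app{J}_{\xi+1}$ of a refinement sequence. For every finite $m \leq \alpha$, the $m$ successor steps from index $0$ to index $m$ each raise $\mu$ by at least one, whence $\mu(\app{J}_m) \geq \mu(\app{J}_0) + m \geq m$; together with $\mu(\app{J}_m) \leq 2N$ this yields $m \leq 2N$. Thus no finite index can exceed $2N$, so $\alpha < \omega$ (the sequence in fact has no limit steps) and $\alpha \leq 2N = \bigo{n^a}$, which is polynomial in $\size{\app{I}}$. The only delicate point is the transfinite phrasing of the definition; I expect the mild obstacle to be arguing that a bounded, step-strict measure cannot reach $\omega$, which the displayed inequality settles, since an infinite $\alpha$ would supply an index $m = 2N + 1$ with $\mu(\app{J}_m) > 2N$, a contradiction.
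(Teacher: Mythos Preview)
Your proof is correct and follows essentially the same approach as the paper: both arguments observe that there are only polynomially many domain atoms and that each can be refined in precision at most a constant number of times, yielding a polynomial bound on the length of any strictly $\leqp$-increasing chain. Your packaging via the potential function $\mu$ is slightly cleaner and your explicit treatment of the transfinite phrasing is a nice addition, but the underlying idea is the same as the paper's direct count of changes to $\ct{P}$ and $\cf{P}$.
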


A refinement sequence is \emph{stabilizing} if it cannot be extended anymore. The last structure in a stabilizing refinement sequence is called the \emph{limit} of the sequence. A well-known result (see, e.g., Lemma 7.8 in \cite{Apt03}) states: 
\begin{proposition}\label{prop:refs}
Let $V$ be a set of monotone propagators for $T$ and let $\app{I}$ be a structure. Then every stabilizing $V$-refinement sequence from $\app{I}$ has the same limit.
\end{proposition}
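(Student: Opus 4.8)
The plan is to use a standard confluence argument driven by transfinite induction, with monotonicity doing the essential work. The starting observation is that the limit of any stabilizing refinement sequence is a common fixpoint of every propagator in $V$. Indeed, if the limit $\app{J}_\alpha$ of a stabilizing sequence could be strictly refined by some $\prp \in V$, that is $\app{J}_\alpha <_p \prp(\app{J}_\alpha)$, then appending $\prp(\app{J}_\alpha)$ would yield a strictly longer refinement sequence, contradicting stabilization. Since every propagator is inflationary, $\app{J}_\alpha \leqp \prp(\app{J}_\alpha)$ always holds, so the only remaining possibility is $\prp(\app{J}_\alpha) = \app{J}_\alpha$ for every $\prp \in V$.

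Next I would take two arbitrary stabilizing $V$-refinement sequences from $\app{I}$, say $\langle \app{J}_\xi \rangle_{0 \le \xi \le \alpha}$ with limit $J := \app{J}_\alpha$ and $\langle \app{K}_\eta \rangle_{0 \le \eta \le \beta}$ with limit $K := \app{K}_\beta$. By symmetry it suffices to prove $J \leqp K$; swapping the roles of the two sequences gives $K \leqp J$, and antisymmetry of $\leqp$ then yields $J = K$, which is the desired conclusion that all such limits coincide.

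To establish $J \leqp K$, the key step is to prove by transfinite induction on $\xi$ that $\app{J}_\xi \leqp K$ for all $0 \le \xi \le \alpha$; instantiating $\xi = \alpha$ gives the claim. For the base case, $\app{J}_0 = \app{I} = \app{K}_0 \leqp \app{K}_\beta = K$, since a refinement sequence is non-decreasing in precision. For the successor step, assume $\app{J}_\xi \leqp K$ and write $\app{J}_{\xi+1} = \prp(\app{J}_\xi)$ for some $\prp \in V$; monotonicity of $\prp$ gives $\prp(\app{J}_\xi) \leqp \prp(K)$, and because $K$ is a common fixpoint we have $\prp(K) = K$, whence $\app{J}_{\xi+1} \leqp K$. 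For a limit ordinal $\lambda$, the induction hypothesis makes $K$ an upper bound of $\{ \app{J}_\xi \mid \xi < \lambda \}$, so $\app{J}_\lambda = \lub_{\leqp}(\{ \app{J}_\xi \mid \xi < \lambda \}) \leqp K$ by the defining property of the least upper bound.

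The argument is routine once the fixpoint observation is in place, and I expect the conceptual crux to be the successor case, which is the only step that genuinely consumes the hypotheses: it needs both the monotonicity of the applied propagator and the fact that the opposite limit is a fixpoint. This is exactly why the result fails for non-monotone propagators — monotonicity is what allows one to ``chase'' an arbitrary refinement sequence up to any common fixpoint. The limit-ordinal case is the only other place that requires attention, and it is handled cleanly by the least-upper-bound clause in the definition of a refinement sequence, so no serious obstacle arises there.
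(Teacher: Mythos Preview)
Your proof is correct and follows essentially the same strategy as the paper: both hinge on the observation that a stabilizing limit is a common fixpoint of all propagators in $V$, and then use monotonicity in a transfinite induction to bound one sequence by the other's limit. The only cosmetic difference is that the paper packages the induction by ``replaying'' the second sequence's propagators from the first limit (obtaining a constant auxiliary sequence), whereas you carry out the induction directly; the underlying argument is identical.
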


If $V$ only contains monotone propagators, we denote by $\lim_V$ the operator that maps every finite structure to the unique limit of any stabilizing $V$-refinement sequence from finite structure $\app{I}$. From Lemma~\ref{lem:propcompose} it follows that $\lim_V$ is a propagator.

Besides monotonicity, other properties of propagators, e.g., idempotence, may be taken into account by algorithms to efficiently construct refinement sequences. \citeN{tcs/Apt99} provides a general overview of such properties and algorithms.

\subsection{Complete Propagators}

The \emph{complete propagator} for a theory $T$ is the propagator that yields the most precise structures. This propagator is denoted by $\rfi{T}$ and defined by 
\[ \rfi{T}(\app{I}) = \glb_{\leqp}\left(\{ M \mid \text{$\app{I} \leqp M$ and $M \models T$} \}\right). \]
The following properties hold for $\rfi{T}$:
\begin{proposition}\label{lem:complete}
For every theory $T$, $\rfi{T}$ is a monotone propagator.
\end{proposition}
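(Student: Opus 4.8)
The plan is to verify the three defining properties directly from the order-theoretic definition of $\rfi{T}$, exploiting only that the class of four-valued $\voc$-structures with a fixed domain $D$ forms a complete lattice under $\leqp$. This lattice structure is inherited pointwise from the Belnap precision lattice on $\{\true,\false,\unkn,\inco\}$, whose bottom is $\unkn$ and whose top is $\inco$; since a product of complete lattices is complete, arbitrary greatest lower bounds exist, so $\rfi{T}(\app{I}) = \glb_{\leqp}(S)$ is well defined, where I abbreviate $S = \{ M \mid \app{I} \leqp M \text{ and } M \models T \}$. The only place where care is needed is the degenerate case $S = \emptyset$: then $\glb_{\leqp}(S) = \mapp$, the most precise structure, and each property must be checked to survive this case.

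First I would prove the inflationary condition $\app{I} \leqp \rfi{T}(\app{I})$. By the very definition of $S$, every $M \in S$ satisfies $\app{I} \leqp M$, so $\app{I}$ is a lower bound of $S$; since $\glb_{\leqp}(S)$ is the greatest lower bound, $\app{I} \leqp \glb_{\leqp}(S)$. When $S = \emptyset$ this reads $\app{I} \leqp \mapp$, which always holds. Next I would check the second propagator condition: for any two-valued $M$ with $M \models T$ and $\app{I} \leqp M$, we have by definition $M \in S$, and since $\glb_{\leqp}(S)$ is a lower bound of $S$, $\rfi{T}(\app{I}) = \glb_{\leqp}(S) \leqp M$, as required. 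Together these two steps establish that $\rfi{T}$ is a propagator for $T$.

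Finally, for monotonicity I would take $\app{I} \leqp \app{J}$ and compare the two index sets $S_{\app{I}} = \{ M \mid \app{I} \leqp M, M \models T \}$ and $S_{\app{J}} = \{ M \mid \app{J} \leqp M, M \models T \}$. By transitivity of $\leqp$, every $M$ with $\app{J} \leqp M$ also satisfies $\app{I} \leqp M$, so $S_{\app{J}} \subseteq S_{\app{I}}$. The greatest lower bound is antitone in its set argument: any lower bound of the larger set $S_{\app{I}}$ is a fortiori a lower bound of the subset $S_{\app{J}}$, hence $\glb_{\leqp}(S_{\app{I}})$ is a lower bound of $S_{\app{J}}$ and therefore $\glb_{\leqp}(S_{\app{I}}) \leqp \glb_{\leqp}(S_{\app{J}})$, i.e.\ $\rfi{T}(\app{I}) \leqp \rfi{T}(\app{J})$. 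I do not expect a genuine obstacle here; the argument is purely lattice-theoretic, and the only pitfall is to keep track of the direction of the inequality when passing to glbs of subsets and to confirm that the boundary cases (empty $S$, yielding $\mapp$) are consistent with each property rather than treating them separately.
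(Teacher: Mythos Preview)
Your argument is correct and follows the same route as the paper. The paper's proof is a one-liner that records only the key fact for monotonicity, namely that $\app{I} \leqp \app{J}$ implies $\{ M \mid \app{J} \leqp M,\ M \models T \} \subseteq \{ M \mid \app{I} \leqp M,\ M \models T \}$, leaving the inflationary and model-preservation conditions as immediate from the definition of $\glb_{\leqp}$; you have simply spelled those out explicitly.
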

\begin{proposition}\label{prop:opt}
Let $\prp$ be a propagator for $T$ and $\app{I}$ a structure. Then
$\prp(\app{I}) \leqp \rfi{T}(\app{I})$. That is, $\rfi{T}$ is the most precise propagator.
\end{proposition}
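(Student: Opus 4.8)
The plan is to unwind the two definitions involved and observe that the second defining condition of a propagator is precisely the statement that $\prp(\app{I})$ is a $\leqp$-lower bound of the set whose greatest lower bound is, by definition, $\rfi{T}(\app{I})$. The result will then follow immediately from the universal property of the greatest lower bound.

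First I would fix an arbitrary four-valued $\voc$-structure $\app{I}$ and abbreviate by $S$ the set of two-valued models that $\app{I}$ approximates, that is, $S = \{ M \mid \app{I} \leqp M \text{ and } M \models T \}$. With this notation the definition of the complete propagator reads $\rfi{T}(\app{I}) = \glb_{\leqp}(S)$, where the greatest lower bound is taken in the complete lattice of four-valued $\voc$-structures with the domain of $\app{I}$, ordered by $\leqp$. Next I would invoke condition~2 of the definition of a propagator for $T$: it states that every model $M$ of $T$ with $\app{I} \leqp M$ satisfies $\prp(\app{I}) \leqp M$. Since the elements of $S$ are exactly such models, this says that $\prp(\app{I})$ is a $\leqp$-lower bound of $S$. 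As $\rfi{T}(\app{I})$ is by definition the greatest lower bound of $S$, the defining property of the glb yields $\prp(\app{I}) \leqp \glb_{\leqp}(S) = \rfi{T}(\app{I})$, which is the claim.

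I do not expect a genuine obstacle here; the entire content lies in recognizing that the propagator axiom is a lower-bound statement and that the complete propagator is the corresponding glb. The one point worth checking explicitly is the degenerate case in which $\app{I}$ approximates no model of $T$, so that $S = \emptyset$. Then $\glb_{\leqp}(\emptyset) = \mapp$, the top element of the precision lattice, and the inequality $\prp(\app{I}) \leqp \mapp$ holds trivially (condition~2 being vacuously satisfied in this case). This confirms that the argument is uniform across all structures $\app{I}$, and hence that $\rfi{T}$ is the most precise propagator for $T$.
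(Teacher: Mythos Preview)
Your proof is correct. It differs in presentation from the paper's own proof: the paper verifies $\prp(\app{I}) \leqp \rfi{T}(\app{I})$ pointwise, fixing a domain atom $P(\ddd)$ and doing a case analysis on the value $P^{\prp(\app{I})}(\ddd) \in \{\unkn,\true,\false,\inco\}$, in each case arguing directly what $P^{\rfi{T}(\app{I})}(\ddd)$ must be. You instead observe once and for all that condition~2 in the definition of a propagator says exactly that $\prp(\app{I})$ is a $\leqp$-lower bound of the set $S$ whose $\glb_{\leqp}$ is $\rfi{T}(\app{I})$, and then invoke the universal property of the greatest lower bound. The two arguments have identical content; yours packages it more abstractly and avoids the case split, while the paper's version makes the pointwise mechanics explicit (and in particular spells out why the $\inco$ case forces $S=\emptyset$). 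Your explicit treatment of the degenerate case $S=\emptyset$ is a nice touch and matches what the paper's case analysis handles implicitly.
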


\begin{example}\label{ex:stud}
Let $\voc = \{ \id{Module}/1, \id{Selected}/1, \id{In}/2, \id{MutExcl}/2 \}$ and let $\app{I}_0$ be the $\voc$-structure with domain $\{ m_1, m_2, c_1, c_2, c_3, c_4 \}$ that is two-valued on all symbols except $\id{Selected}$, and that is given by
\begin{align*}
	& \id{Module}^{\app{I}_0} = \{ m_1, m_2\}, & &  \id{MutExcl}^{\app{I}_0} = \{ (c_1,c_2) \}, \\
	& \ct{\id{Selected}}^{\tf{\app{I}_0}} = \{ c_1 \}, & & \cf{\id{Selected}}^{\tf{\app{I}_0}} = \emptyset, \\
	&   \id{In}^{\app{I}_0} = \{ (c_1,m_1), (c_3,m_1), (c_2,m_2) \}. & &  
\end{align*}
This structure expresses that course $c_1$ is certainly selected, while it is unknown whether other modules or courses are selected. Let $T_1$ be the theory that consists of the sentences \eqref{form:cc}--\eqref{form:ac} from the introduction.  Then structure $\rfi{T_1}(\app{I_0})$ assigns $\ct{Selected}^{\tf{\rfi{T_1}(\app{I_0})}} = \{ m_1, c_1, c_3 \}$ and $\cf{Selected}^{\tf{\rfi{T_1}(\app{I_0})}} = \{ m_2, c_2 \}$. Indeed, because $c_1$ is selected according to $\app{I}_0$, we can derive from \eqref{form:cc} that $c_2$ cannot be selected. Next,~\eqref{form:ac} implies that module $m_2$ cannot be selected. It then follows from~\eqref{form:om} that $m_1$ must be selected. This implies in turn that $c_3$ must be selected. No information about $c_4$ can be derived since both $\rfi{T_1}(\app{I_0})[\id{Selected}(c_4)/\true]$ and $\rfi{T_1}(\app{I_0})[\id{Selected}(c_4)/\false]$ are models of $T_1$.
\end{example}

Observe that if $T$ has no models approximated by $\app{I}$, then $\rfi{T}(\app{I}) = \mapp$. Note that this is the case if $\app{I}$ is strictly four-valued. Therefore, the problem of deciding whether a given domain atom is inconsistent in $\rfi{T}(\app{I})$ is at least as hard as deciding whether $T$ has a model approximated by $\app{I}$. If $T$ is an FO theory, the latter problem is intractable: for a fixed $T$ and varying finite structures $\app{I}$ it is \np-complete~\cite{fagin74}, for infinite structures $\app{I}$, it is undecidable. Consequently, computing $\rfi{T}(\app{I})$ for a fixed FO theory $T$ and varying finite structures $\app{I}$ is intractable.

Similarly as for theories, we associate to each sentence $\varphi$ the monotone propagator $\rfi{\varphi}$, which maps a structure $\app{I}$ to the most precise approximation of $\varphi$ from $\app{I}$. That is, 
\[ \rfi{\varphi}(\app{I}) = \glb_{\leqp}\left(\{ M \mid \text{$\app{I} \leqp M$ and $M \models \varphi$}\}\right). \]
Observe that for any sentence $\varphi$ implied by $T$, $\rfi{T}(\app{I})$ is more precise than $\rfi{\varphi}(\app{I})$, since 
\[ \{ J \mid \text{$\app{I} \leqp J$ and $J \models T$} \} \subseteq \{ J \mid \text{$\app{I} \leqp J$ and $J \models \varphi$}\} \]
As such, we obtain the following proposition.
\begin{proposition}\label{lem:fisp}
If $T \models \varphi$, then $\rfi{\varphi}$ is a monotone propagator for $T$.
\end{proposition}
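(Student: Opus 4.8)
The plan is to check the two defining conditions of a ``propagator for $T$'' (inflationarity and model preservation) directly, obtaining monotonicity for free from a result already in hand. First I would observe that $\rfi{\varphi}$ is, by definition, nothing but the complete propagator of the singleton theory $\{\varphi\}$: the defining formula for $\rfi{\varphi}(\app{I})$ is exactly the instance of the definition of $\rfi{T}$ with $T$ replaced by $\{\varphi\}$. Proposition~\ref{lem:complete} therefore immediately gives that $\rfi{\varphi}$ is a \emph{monotone} propagator for $\{\varphi\}$, which settles the monotonicity assertion at once. It also settles the first (inflationary) condition required of a propagator for $T$, since inflationarity, namely $\app{I} \leqp \rfi{\varphi}(\app{I})$ for every $\app{I}$, makes no reference to the theory and so transfers verbatim from $\{\varphi\}$ to $T$.

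The only thing left is the second condition: whenever $M \models T$ and $\app{I} \leqp M$, then $\rfi{\varphi}(\app{I}) \leqp M$. This is the sole place where the hypothesis $T \models \varphi$ enters. Since $\rfi{\varphi}$ is a propagator for $\{\varphi\}$, it already satisfies this condition for models of $\varphi$: for every $M \models \varphi$ with $\app{I} \leqp M$ one has $\rfi{\varphi}(\app{I}) \leqp M$. Now if $M \models T$ and $\app{I} \leqp M$, then $T \models \varphi$ forces $M \models \varphi$, and the condition for $\varphi$ applies to give $\rfi{\varphi}(\app{I}) \leqp M$. At bottom this is just the set inclusion $\{ N \mid \app{I} \leqp N,\ N \models T \} \subseteq \{ N \mid \app{I} \leqp N,\ N \models \varphi \}$ noted before the statement: $\rfi{\varphi}(\app{I})$, being the $\leqp$-greatest lower bound of the larger set, is a fortiori a lower bound of the smaller one, hence below each $M \models T$ it approximates.

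I do not expect any genuine obstacle here; the argument is a short verification, and the only substantive step is recognizing that $T \models \varphi$ yields the above inclusion of model sets. The one point to keep in view is the degenerate case in which no two-valued structure approximates $\app{I}$, so that the greatest lower bound is taken over the empty set and equals $\mapp$; but then there is likewise no model $M$ of $T$ with $\app{I} \leqp M$, so the second condition is vacuous while inflationarity still holds because $\app{I} \leqp \mapp$ always. This case is in any event already subsumed by Proposition~\ref{lem:complete}.
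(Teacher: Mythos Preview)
Your proposal is correct and follows essentially the same approach as the paper, which in fact treats the proposition as an immediate consequence of the set inclusion $\{ J \mid \app{I} \leqp J,\ J \models T \} \subseteq \{ J \mid \app{I} \leqp J,\ J \models \varphi \}$ noted in the text just before the statement, together with Proposition~\ref{lem:complete}. Your write-up is more explicit (in particular the degenerate-case discussion is not in the paper), but the underlying argument is the same.
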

In particular, if $\varphi \in T$, then $\rfi{\varphi}$ is a monotone propagator for $T$.

From Proposition~\ref{prop:refs} and Proposition~\ref{lem:fisp} it follows that every stabilizing $\{ \rfi{\varphi} \mid \varphi \in T \}$-refinement sequence from finite structure $\app{I}$ has the same limit. We denote the propagator $\lim_{\{\rfi{\varphi} \mid \varphi \in T\}}$ by $\trsl{T}$. We call a $\{ \rfi{\varphi} \mid \varphi \in T \}$-refinement sequence also a \emph{$T$-refinement sequence}. 

\begin{example}
Let $\app{I}_0$ and $T_1$ be as in Example~\ref{ex:stud}. Let $\langle \app{I}_i \rangle_{0 \leq i \leq 4}$ be the $T_1$-refinement sequence from $\app{I}_0$ obtained by applying (in this order) the propagators $\rfi{\eqref{form:cc}}$, $\rfi{\eqref{form:ac}}$, $\rfi{\eqref{form:om}}$ and $\rfi{\eqref{form:ac}}$. A reasoning similar to the one we made in Example~\ref{ex:stud} shows that $c_2 \in \cf{Selected}^{\tf{\app{I}_1}}$, $m_2 \in \cf{Selected}^{\tf{\app{I}_2}}$, $m_1 \in \ct{Selected}^{\tf{\app{I}_3}}$ and $c_3 \in \ct{Selected}^{\tf{\app{I}_4}}$. Hence, $\app{I}_4 = \rfi{T_1}(\app{I}_0)$, the refinement sequence is stabilizing and $\trsl{T_1}(\app{I}_0) = \rfi{T_1}(\app{I}_0)$.
\end{example}

\begin{example}\label{ex:rfitrsl}
Let $T$ be the theory $\{ P \ler Q, P \ler \neg Q \}$. Then $\rfi{T}(\lapp) = \mapp$ and $\trsl{T}(\lapp) = \lapp$.
\end{example}

As Example~\ref{ex:rfitrsl} shows, it is not necessarily the case that $\trsl{T}(\app{I}) = \rfi{T}(\app{I})$. In general, only $\trsl{T}(\app{I}) \leqp \rfi{T}(\app{I})$ holds. Note that  $\trsl{T}(\app{I}) = \rfi{T}(\app{I})$ holds if $T$ contains precisely one sentence. 

\section{Polynomial Propagation for First-Order Logic}\label{sec:foprop}

In the previous section, we introduced the idea of refining a four valued structure $\app{I}$ by propagation. In this section, we introduce a constraint propagation method for FO theories $T$ that is computationally less expensive than applying $\rfi{T}$ or computing a (stabilizing) $T$-refinement sequence. The method we propose is based on \emph{implicational normal form propagators} (INF propagators). These propagators have several interesting properties. First, they are monotone, ensuring that stabilizing refinement sequences constructed using only INF propagators have a unique limit. Secondly, INF propagators have polynomial-time data complexity and therefore stabilizing refinement sequences using only INF propagators can be computed in polynomial time. Thirdly, such a refinement sequence can be represented by a set of positive, i.e., negation-free, rules, which makes it possible to use, e.g., logic programming systems to compute the result of propagation. Finally, INF propagators can be applied in a symbolic way, i.e., independent of a four-valued input structure.

\subsection{Implicational Normal Form Propagators}

INF propagators are associated to FO sentences in implicational normal form.

\begin{definition}
An FO sentence is in \emph{implicational normal form (INF)} if it is of the form $\forall \xxx \ (\psi \lir L[\xxx])$, where $\psi$ is an arbitrary formula with free variables among $\xxx$ and $L$ a literal with free variables $\xxx$.
\end{definition}
For an INF sentence $\forall \xxx\ (\psi \lir L[\xxx])$, the associated INF propagator computes the value of $\psi$ in the given structure. If this value is $\true$ or $\inco$, the literal $L[\xxx]$ is made true (or inconsistent if it was false or inconsistent in the given structure). This is formalized in the following definition.
\begin{definition}\label{def:infp}
The operator $\infp{\varphi}$ associated to the sentence $\varphi := \forall \xxx\ (\psi \lir P(\xxx))$ is defined by 
\[
	Q^{\infp{\varphi}(\app{I})}(\ddd_x) = 
	\begin{cases}
		\lub_{\leqp}\{\true,P^{\app{I}}(\ddd_x)\} & \text{if $Q = P$ and $\app{I}[\xxx/\ddd_x](\psi) \geq_p \true$} \\
		Q^{\app{I}}(\ddd_x)								& \text{otherwise} \\
	\end{cases}
\]
The operator $\infp{\varphi}$ associated to the sentence $\varphi := \forall \xxx\ (\psi \lir \neg P(\xxx))$ is defined by 
\[
	Q^{\infp{\varphi}(\app{I})}(\ddd_x) = 
	\begin{cases}
		\lub_{\leqp}\{\false,P^{\app{I}}(\ddd_x)\} & \text{if $Q = P$ and $\app{I}[\xxx/\ddd_x](\psi) \geq_p \true$} \\
		Q^{\app{I}}(\ddd_x)								 & \text{otherwise} 
	\end{cases}
\]
\end{definition}

\begin{example}
Sentence~\eqref{form:ac} of the introduction is an INF sentence. Let $\app{I}$ be a structure such that $\id{Course}^{\app{I}}(c_1)$, $\id{Module}^{\app{I}}(m_1)$, $\id{Selected}^{\app{I}}(m_1)$, and $\id{In}^{\app{I}}(c_1,m_1)$ are true. Then according to the definition of $\infp{\eqref{form:ac}}$, $\id{Selected}^{(\infp{\eqref{form:ac}}(\app{I}))}(c_1) \geqp \true$. That is, if module $m_1$ is certainly selected and course $c_1$ certainly belongs to $m_1$, then the operator $\infp{\eqref{form:ac}}$ associated to sentence~\eqref{form:ac} derives that $c_1$ is certainly selected. Note that this operator does
not perform contrapositive propagation. For instance, if $m_2$ is a module, $c_2$ a course, $\id{In}^{\app{I}}(c_2,m_2) = \true$ and $\id{Selected}^{\app{I}}(c_2) = \false$, the operator does not derive that $m_2$ is certainly not selected.
\end{example}

\begin{proposition}\label{prop:infpisprop}
For every INF sentence $\varphi$, $\infp{\varphi}$ is a monotone propagator.
\end{proposition}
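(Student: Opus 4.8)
The plan is to verify the two defining conditions of a propagator (inflationarity and the model-preservation condition) together with monotonicity, treating only the positive case $\varphi := \forall\xxx\,(\psi \lir P(\xxx))$; the negative case $\forall\xxx\,(\psi \lir \neg P(\xxx))$ is entirely symmetric, with the role of $\true$ taken over by $\false$. Throughout I would repeatedly invoke the precision-monotonicity of formula evaluation recorded in Section~\ref{sec:struct}: if $\app{I} \leqp \app{J}$ then $\app{I}[\xxx/\ddd](\psi) \leqp \app{J}[\xxx/\ddd](\psi)$ for every tuple $\ddd$.

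Inflationarity $\app{I} \leqp \infp{\varphi}(\app{I})$ is immediate: the operator only ever changes the value of a domain atom $P(\ddd)$, and only to $\lub_{\leqp}\{\true, P^{\app{I}}(\ddd)\}$, which by the definition of least upper bound is $\geqp P^{\app{I}}(\ddd)$, while all other domain atoms keep their value.

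The crux is the model-preservation condition, i.e.\ that $\infp{\varphi}(\app{I}) \leqp M$ for every two-valued $M$ with $\app{I} \leqp M$ and $M \models \varphi$. At atoms the operator leaves untouched this holds by hypothesis, so fix a tuple $\ddd$ at which the operator fires, meaning $\app{I}[\xxx/\ddd](\psi) \geqp \true$. By precision-monotonicity, $\app{I}[\xxx/\ddd](\psi) \leqp M[\xxx/\ddd](\psi)$. Here I would use two-valuedness of $M$ twice: first, it rules out $\app{I}[\xxx/\ddd](\psi) = \inco$ (otherwise $M[\xxx/\ddd](\psi)$ would be forced to the $\leqp$-top $\inco$, impossible for two-valued $M$), so the firing value is exactly $\true$; second, from $\true \leqp M[\xxx/\ddd](\psi) \in \{\true,\false\}$ it follows that $M[\xxx/\ddd](\psi) = \true$. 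Since $M \models \forall\xxx\,(\psi\lir P(\xxx))$, this yields $P^M(\ddd) = \true$, and as $P^{\app{I}}(\ddd) \leqp P^M(\ddd) = \true$ we obtain $\lub_{\leqp}\{\true, P^{\app{I}}(\ddd)\} = \true = P^M(\ddd)$, establishing the required $\leqp$.

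Finally, for monotonicity fix $\app{I} \leqp \app{J}$. For symbols other than $P$ nothing changes, so the inequality is inherited. For $P$ at a tuple $\ddd$, precision-monotonicity gives $\app{I}[\xxx/\ddd](\psi) \leqp \app{J}[\xxx/\ddd](\psi)$, and since $\{v \mid v \geqp \true\}$ is an up-set, the operator firing on $\app{I}$ forces it to fire on $\app{J}$ as well. A short case split then suffices: when both fire I would use monotonicity of $\lub_{\leqp}$ applied to $P^{\app{I}}(\ddd) \leqp P^{\app{J}}(\ddd)$; when only $\app{J}$ fires, or neither does, the desired inequality reduces to $P^{\app{I}}(\ddd) \leqp P^{\app{J}}(\ddd)$, possibly composed with the inflationary bound on the $\app{J}$ side. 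I expect the model-preservation step to be the only delicate point, precisely because $\true$ and $\false$ are $\leqp$-incomparable, so one must lean on the two-valuedness of $M$ rather than on any direct order relation between the firing value and $P^M(\ddd)$.
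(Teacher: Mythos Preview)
Your proposal is correct and follows essentially the same approach as the paper's proof: verify inflationarity directly from the definition, establish model preservation by using precision-monotonicity of formula evaluation to lift the firing condition from $\app{I}$ to the two-valued model $M$, and derive monotonicity from the same precision-monotonicity fact. If anything, you are more careful than the paper in explicitly ruling out the case $\app{I}[\xxx/\ddd](\psi) = \inco$ via two-valuedness of $M$; the paper's case split simply passes from $\leqp \false$ to $= \true$ without comment, relying implicitly on the fact that the existence of a two-valued $M \geqp \app{I}$ forces $\app{I}$ to be three-valued.
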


As mentioned in Section~\ref{ssec:pairs}, evaluating a formula in a finite four-valued structure $\app{I}$ takes polynomial time in $\size{\app{I}}$. It follows that for a fixed INF sentence $\varphi$ and finite structure $\app{I}$, computing $\infp{\varphi}(\app{I})$ takes polynomial time in $\size{\app{I}}$. If we combine this result with Proposition~\ref{prop:len}, we obtain the following theorem.
\begin{theorem}\label{th:inf}
Let $V$ be a fixed finite set of INF sentences. Then $\lim_{\{ \infp{\varphi} \mid \varphi \in V\}}(\app{I})$ is computable in polynomial time in $\size{\app{I}}$ for every finite structure $\app{I}$.
\end{theorem}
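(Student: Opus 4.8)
The plan is to exhibit the obvious fixpoint algorithm that constructs a stabilizing $\{ \infp{\varphi} \mid \varphi \in V \}$-refinement sequence from $\app{I}$, and then to bound its running time by assembling three ingredients already in hand: the well-definedness of the limit (Proposition~\ref{prop:refs}), the polynomial bound on the length of a refinement sequence (Proposition~\ref{prop:len}), and the polynomial per-step cost of evaluating a formula in a finite four-valued structure (Proposition~\ref{prop:ftot} together with the remark following it). The theorem is essentially a composition of these results, so the work lies in describing the algorithm and checking that none of the bookkeeping escapes the polynomial budget.

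First I would record that, by Proposition~\ref{prop:infpisprop}, every $\infp{\varphi}$ with $\varphi \in V$ is a \emph{monotone} propagator. Hence the whole set $\{ \infp{\varphi} \mid \varphi \in V \}$ consists of monotone propagators, and Proposition~\ref{prop:refs} guarantees that every stabilizing refinement sequence from $\app{I}$ has the same limit; in particular $\lim_{\{ \infp{\varphi} \mid \varphi \in V \}}(\app{I})$ is well defined and may be obtained by computing any single such sequence. The algorithm is then: set $\app{J}_0 = \app{I}$; given $\app{J}_\xi$, scan the finitely many $\varphi \in V$, for each computing $\infp{\varphi}(\app{J}_\xi)$ and testing whether $\app{J}_\xi <_p \infp{\varphi}(\app{J}_\xi)$; if some $\varphi$ gives a strict increase, set $\app{J}_{\xi+1} = \infp{\varphi}(\app{J}_\xi)$ and continue, otherwise halt and output $\app{J}_\xi$.

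For the time bound I would argue that each iteration is cheap. Since propagators preserve the domain, every structure in the sequence has the same domain as $\app{I}$, hence size $\size{\app{I}}$ and, for the fixed vocabulary, only polynomially many domain atoms, so a representation of size polynomial in $\size{\app{I}}$. The value $\app{J}_\xi[\xxx/\ddd_x](\psi)$ required in Definition~\ref{def:infp} is computable in polynomial time by Proposition~\ref{prop:ftot} and the discussion after it, so a single evaluation $\infp{\varphi}(\app{J}_\xi)$ and the comparison $\app{J}_\xi <_p \infp{\varphi}(\app{J}_\xi)$ each cost polynomial time in $\size{\app{I}}$; as $V$ is fixed and finite, one full scan performs a constant number of these polynomial operations and therefore runs in polynomial time.

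Finally, the successive outputs $\app{J}_0 <_p \app{J}_1 <_p \cdots$ form a $\{ \infp{\varphi} \mid \varphi \in V \}$-refinement sequence, so Proposition~\ref{prop:len} bounds its length by a polynomial in $\size{\app{I}}$; the loop therefore halts after polynomially many strict-refinement steps, and multiplying the polynomial iteration count by the polynomial cost per iteration gives a polynomial total. By construction the returned structure is the limit of a stabilizing refinement sequence, i.e.\ $\lim_{\{ \infp{\varphi} \mid \varphi \in V \}}(\app{I})$. The only point needing care — and hence the main, if modest, obstacle — is verifying that the accounting stays polynomial: that intermediate structures do not grow (ensured by domain preservation) and that detecting stabilization is itself cheap (ensured because a failed scan over all $\varphi \in V$ certifies that no propagator strictly refines $\app{J}_\xi$). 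Everything else is a direct composition of the cited propositions.
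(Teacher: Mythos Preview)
Your proposal is correct and follows essentially the same approach as the paper: construct a stabilizing refinement sequence by repeatedly scanning $V$ for a propagator that strictly refines the current structure, bound the number of iterations by Proposition~\ref{prop:len}, and bound the per-iteration cost by the polynomial-time evaluability of formulas in finite four-valued structures. Your version is slightly more explicit about well-definedness of the limit (via monotonicity and Proposition~\ref{prop:refs}) and about domain preservation keeping intermediate structures small, but the core argument is the same.
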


\begin{proof}
Let $\varphi_1,\ldots,\varphi_n$ be all sentences in $V$. Let $\langle
\app{J}_i \rangle_{0 \leq i \leq m}$ be the longest sequence of structures such that $\app{J}_0 = \app{I}$ and $\app{J}_{i+1} = \infp{\varphi_k}(\app{J}_i)$, where $k$ is the lowest number between $1$ and $n$ such that $\app{J}_i \neq \infp{\varphi_k}(\app{J}_i)$. Clearly, $\langle \app{J}_i \rangle_{0 \leq i \leq m}$ is a stabilizing $\{\infp{\varphi} \mid \varphi \in V\}$-refinement sequence from $\app{I}$. Proposition~\ref{prop:len} implies that the length of this sequence is polynomial in $\size{\app{I}}$. Also, for each $i \geq 0$, $\app{J}_{i+1}$ can be computed in polynomial time in $\size{\app{I}}$: it suffices to compute $\infp{\varphi_1}(\app{J}_i)$, \ldots, $\infp{\varphi_n}(\app{J}_i)$, and each $\infp{\varphi_k}(\app{J}_i)$, $1 \leq k \leq n$, can be computed in polynomial time in $\size{\app{J}_i} = \size{\app{I}}$. Hence $\langle \app{J}_i \rangle_{0 \leq i \leq m}$ can be computed in polynomial time in $\size{\app{I}}$.
\end{proof}

\subsection{Representing INF Refinement Sequences by a Positive Rule Set}\label{ssec:itod}

For the rest of this section, let $V$ be a finite set of INF sentences and denote by $\infps{V}$ the set $\{ \infp{\varphi} \mid \varphi \in V \}$. We now show how to represent the propagator $\lim_{\infps{V}}$ by a set $\Delta$ of \emph{rules}, in the sense that for every structure $\lim_{\infps{V}}(\app{I})$ corresponds to a least model of $\Delta$. As mentioned in the introduction, the benefit of this representation is that sets of rules with a least model semantics are a basic component in many logic-based reasoning formalisms such as Prolog, Datalog, Stable Logic Programming and \foid. Hence, many of the theoretical and practical research results in these areas can be applied to study the properties of $\lim_{\infps{V}}$ and to easily obtain efficient implementations.

A \emph{rule set} over vocabulary $\voc$ is a finite set of rules of the form \[ \forall \xxx \ (P(\xxx) \rul \varphi[\yyy]), \] where $P \in \voc$, $\varphi$ is a formula over $\voc$, and $\yyy \subseteq \xxx$. The atom $P(\xxx)$ is called the \emph{head} of the rule, $\varphi$ the \emph{body}. The connective `$\rul$' is called \emph{definitional implication} and is to be distinguished from the connective `$\lir$'. A rule set $\Delta$ is \emph{positive} if none of the bodies in $\Delta$ contains a negation. $\Delta$ is \emph{monotone} if for every variable assignment $\theta$, every pair of structures $I$ and $J$ such that $I \leqt J$, and every rule body $\varphi$ of $\Delta$, $I\theta(\varphi) \leqt J\theta(\varphi)$.

The \emph{inflationary consequence operator} $\ico{\Delta}$ for positive definition $\Delta$ over $\voc$ is the operator on two-valued $\voc$-structures defined by $\ddd \in P^{\ico{\Delta}(I)}$ iff  $\ddd \in P^I$ or there exists a rule $\forall \xxx\ (P(\xxx) \rul \varphi)$ in $\Delta$ such that $I[\xxx/\ddd] \models \varphi$.

Note that the operator $\ico{\Delta}$ is $\leqt$-monotone. A structure $I$ satisfies $\Delta$, denoted $I \models \Delta$, if $I$ is a fixpoint of $\ico{\Delta}$. 

To each finite set of INF sentences over $\voc$, we associate the following positive rule set over $\tf{\voc}$. 
\begin{definition}\label{def:dvdi}
Let $V$ be a set of INF sentences and $\app{I}$ a structure. The \emph{rule set associated to $V$} is denoted by $\Delta_{V}$ and defined by 
\[ \Delta_{V} = \left\{ \forall \xxx\ (\ct{(L[\xxx])} \rul \ct{\psi}) \mid \forall \xxx \ (\psi \lir L[\xxx]) \in V \right\}. \]
\end{definition}

Observe that because of Proposition~\ref{prop:tfpos}, $\Delta_V$ is a positive rule set. The following proposition explains that $\Delta_V$ can be seen as a description of $\lim_{\infps{V}}$.
\begin{proposition}\label{prop:inp}\label{PROP:INPP}
For every set $V$ of INF sentences over $\voc$ and $\voc$-structure $\app{I}$, 
$\tf{\lim_{\infps{V}}(\app{I})} = \glb_{\leqt}(\{ M \mid M \models \Delta_V \text{ and } M \geqt \tf{\app{I}} \})$. 
\end{proposition}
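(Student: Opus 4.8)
The plan is to exploit the remark preceding Section~\ref{ssec:pairs} that ``$\app{I} \leqp \app{J}$ iff $\tf{\app{I}} \leqt \tf{\app{J}}$'', which (together with the fact that $\tf{\cdot}$ is a bijection between four-valued $\voc$-structures and two-valued $\tf{\voc}$-structures) says that $\tf{\cdot}$ is an order isomorphism from $(\leqp)$ to $(\leqt)$. Under this isomorphism the refinement sequence defining the left-hand side transports into an iteration on the $\tf{\voc}$-side governed by $\Delta_V$. Writing $L := \lim_{\infps{V}}(\app{I})$ and $N := \glb_{\leqt}(\{ M \mid M \models \Delta_V \text{ and } M \geqt \tf{\app{I}} \})$, I will prove $\tf{L} = N$ by the two inequalities $N \leqt \tf{L}$ and $\tf{L} \leqt N$.

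The core of the argument is a single correspondence lemma: for every four-valued structure $\app{J}$ and every $\varphi \in V$ with associated rule $r_\varphi \in \Delta_V$,
\[ \tf{\infp{\varphi}(\app{J})} = \ico{\{r_\varphi\}}(\tf{\app{J}}). \]
To prove this I would argue by cases on the form of $\varphi$. For $\varphi = \forall\xxx\,(\psi \lir P(\xxx))$, the rule is $r_\varphi = \forall\xxx\,(\ct{P}(\xxx) \rul \ct{\psi})$, and I check symbol by symbol that both sides agree. The two key points are: (a) by Definition~\ref{def:infp} the propagator only raises $P^{\app{J}}(\ddd)$ to $\lub_{\leqp}\{\true, P^{\app{J}}(\ddd)\}$, which (by a small truth-value check) possibly adds $\ddd$ to $\ct{P}^{\tf{\cdot}}$ but leaves $\cf{P}^{\tf{\cdot}}(\ddd)$ and every other $\tf{\voc}$-relation untouched --- exactly the relation $r_\varphi$ writes to; and (b) the firing condition $\app{J}[\xxx/\ddd](\psi) \geqp \true$ is, by Proposition~\ref{prop:ftot} and the identification of pairs with first component $\true$ as the values $\geqp \true$, equivalent to $\tf{\app{J}}[\xxx/\ddd] \models \ct{\psi}$, which is precisely when $r_\varphi$ adds $\ddd$ to $\ct{P}$. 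The case $\varphi = \forall\xxx\,(\psi \lir \neg P(\xxx))$ is symmetric, with $\cf{P}$ in place of $\ct{P}$. This verification, though elementary, is the main obstacle, since it requires tracking all four truth values and confirming that the ``dual'' relation is never modified.

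From the lemma, two consequences follow at once, using injectivity of $\tf{\cdot}$. First, a four-valued $\app{J}$ is a common fixpoint of all $\infp{\varphi}$ ($\varphi \in V$) if and only if $\tf{\app{J}} \models \Delta_V$, because $\tf{\app{J}}$ is a fixpoint of $\ico{\Delta_V}$ exactly when no single rule $r_\varphi$ fires. I then obtain $N \leqt \tf{L}$ as follows: since each $\infp{\varphi}$ is inflationary (Proposition~\ref{prop:infpisprop}) and $L$ is the limit of a stabilizing sequence, $L$ must be a common fixpoint of all $\infp{\varphi}$ --- otherwise the sequence could be strictly extended --- so $\tf{L} \models \Delta_V$; and $L \geqp \app{I}$ gives $\tf{L} \geqt \tf{\app{I}}$. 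Hence $\tf{L}$ lies in the set whose glb is $N$, so $N \leqt \tf{L}$. (Well-definedness of $L$ as the unique limit is guaranteed by Propositions~\ref{prop:infpisprop} and~\ref{prop:refs}.)

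For the reverse inequality I would show that $\tf{L}$ is a lower bound of the set. Take any $M$ with $M \models \Delta_V$ and $M \geqt \tf{\app{I}}$, and let $\app{M} := \tf{\cdot}^{-1}(M)$; by the first consequence $\app{M}$ is a common fixpoint of all $\infp{\varphi}$, and $M \geqt \tf{\app{I}}$ gives $\app{M} \geqp \app{I}$. A straightforward (transfinite) induction along any stabilizing $\infps{V}$-refinement sequence $\langle \app{J}_\xi \rangle$ from $\app{I}$ then shows $\app{J}_\xi \leqp \app{M}$ for all $\xi$: the base case is $\app{J}_0 = \app{I} \leqp \app{M}$; the successor step uses monotonicity of $\infp{\varphi}$ together with $\infp{\varphi}(\app{M}) = \app{M}$; and the limit step uses that $\app{J}_\lambda$ is the $\leqp$-least upper bound of its predecessors, each bounded by $\app{M}$. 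Thus $L \leqp \app{M}$, i.e.\ $\tf{L} \leqt M$. As $M$ was arbitrary, $\tf{L} \leqt N$, and together with $N \leqt \tf{L}$ this gives $\tf{L} = N$.
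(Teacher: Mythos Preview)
Your proof is correct and follows the same overall two-inequality structure as the paper: show that $\tf{L}$ is a fixpoint of $\ico{\Delta_V}$ above $\tf{\app{I}}$ (hence $N \leqt \tf{L}$), and then that $\tf{L}$ is below every such fixpoint (hence $\tf{L} \leqt N$).

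The main difference lies in the second direction. The paper argues $\tf{L} \leqt N$ indirectly: it builds an auxiliary sequence $L_\xi$ by iterating the full operator $\ico{\Delta_V}$ from $\tf{\app{I}}$, observes (implicitly using a one-sided version of your correspondence lemma) that $\tf{\app{K}_\xi} \leqt L_\xi$ along any stabilizing refinement sequence, and then invokes Tarski's theorem to bound $L_\alpha$ by $N$. You instead take an arbitrary $M$ in the set, pull it back to a four-valued common fixpoint $\app{M}$ via your explicit correspondence lemma, and run a direct transfinite induction using monotonicity of $\infp{\varphi}$ and $\infp{\varphi}(\app{M}) = \app{M}$. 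Your route is more self-contained: it makes the single-rule/single-propagator correspondence $\tf{\infp{\varphi}(\app{J})} = \ico{\{r_\varphi\}}(\tf{\app{J}})$ explicit (the paper only gestures at it), and it avoids the detour through Tarski's theorem. The paper's route, on the other hand, makes visible that the iteration of $\ico{\Delta_V}$ itself computes $N$, which is perhaps conceptually closer to the Datalog/least-model reading emphasised right after the proposition.
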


Phrased differently, $\tf{\lim_{\infps{V}}(\app{I})}$ is the least Herbrand model of the positive rule set $\Delta$ obtained by introducing a fresh constant symbol $C_d$ for every domain element $d$ in $\app{I}$ and adding to $\Delta_V$ the rules $\ct{P}(C_{d_1},\ldots,C_{d_n}) \rul \top$, respectively $\cf{P}(C_{d_1},\ldots,C_{d_n}) \rul \top$, for every domain atom $P(d_1,\ldots,d_n)$ that is true, respectively false, in $\app{I}$. 

There are several benefits of using $\Delta_V$ as a description of $\lim_{\infps{V}}$. From a practical point of view, Proposition~\ref{prop:inp} states that we can use any existing algorithm that computes the least Herbrand model of positive rule sets to implement $\lim_{\infps{V}}$. Several such algorithms have been developed. For example, in the area of production rule systems, {\sc rete} \cite{ai/Forgy82} and {\sc leaps} \cite{aaai/MirankerBLG90} are two well-known algorithms; \citeN{IEEETKDE/VanWeert2010} describes improvements of these algorithms, used in implementations of Constraint Handling Rules. Other examples are the algorithms implemented in Prolog systems with tabling such as {\sc xsb} \cite{iclp/Swift09} and {\sc yap} \cite{jucs/SilvaC06}. In the context of databases, a frequently used algorithm is semi-naive evaluation \cite{Ullman88}. Adaptations of the semi-naive evaluation are implemented in the grounding component of \dlv \cite{amai/PerriSCL07} and in the grounder \gidl \cite{jair/WittocxMD10}. It follows that the large amount of research on optimization techniques and execution strategies for these algorithms can be used to obtain efficient implementations of $\lim_{\infps{V}}$ for a set $V$ of INF propagators. 

Most of the algorithms and systems mentioned above expect that all rules are of the form $\forall \xxx\ (P(\xxx) \rul \exists \yyy\ (Q_1(\zzz_1) \land \ldots \land Q_n(\zzz_n)))$, i.e., each body is the existential quantification of a conjunction of atoms. Some of the algorithms, e.g., semi-naive evaluation, can easily be extended to more general rule sets. Instead of extending the algorithms, one can as well rewrite rule sets into the desired format by applying predicate introduction \cite{VennekensMWD07a}, provided that only structures with finite domains are considered.

Other potential benefits of representing $\lim_{\infps{V}}$ by $\Delta_V$ stem from the area of logic program analysis. For instance, \emph{abstract interpretation} of logic programs \cite{jlp/Bruynooghe91} can be used to derive interesting properties of $\lim_{\infps{V}}$, \emph{program specialization} \cite{aicom/Leuschel97} to tailor $\Delta_V$ to a specific class of structures $\app{I}$, \emph{folding} \cite{PettorossiP98} to combine the application of several propagators, etc.

\subsection{From FO to INF}\label{ssec:fotoinf}

As mentioned above, computing $\rfi{T}(\app{I})$ can be computationally expensive. The same holds for $\trsl{T}(\app{I})$. For instance, if $T$ contains only one sentence, then $\rfi{T} = \trsl{T}$, and therefore the best known algorithms for applying $\trsl{T}$ can take exponential time in $\size{\app{I}}$ for finite structures $\app{I}$. In this section, we present a computationally cheaper method for constraint propagation on FO theories. The method consists of transforming, in linear time, a theory $T$ into an equivalent set of INF sentences. Then propagation on $T$ can be performed by applying the corresponding INF propagators. Theorem~\ref{th:inf} ensures that this propagation has polynomial-time data complexity. The price for this improved efficiency is of course a loss in precision. 

The next subsection describes the transformation. A diligent reader will note that the algorithm is non-deterministic and that a much more compact set of INF sentences can be generated (our implementation does). However, as we have no claims of optimality and our sole aim is to state polynomial-time data complexity, we present the most straightforward transformation.

\subsubsection{From FO to Equivalence Normal Form}

The transformation of theories to INF sentences works in two steps. First, a theory $T$ is transformed into a $\voc$-equivalent set of sentences in \emph{equivalence normal form} (ENF). Next, each ENF sentence is replaced by a set of INF sentences. We show that both steps can be executed in linear time. Also, we mention a theorem stating that under mild conditions, no precision is lost in the second step. That is, for each ENF sentence $\varphi$ that satisfies these conditions, there exists a set of INF sentences $V$ such that $\rfi{\varphi} = \lim_{\infps{V}}$. 

\begin{definition}\label{def:enfsentence}
An FO sentence $\varphi$ is in \emph{equivalence normal form (ENF)} if it is of the form $\forall \xxx\ (L[\xxx] \ler \psi[\xxx])$, where $\psi$ is of the form $(L_1 \land \ldots \land L_n)$, $(L_1 \lor \ldots \lor L_n)$, $(\forall \yyy\ L')$, or $(\exists \yyy\ L')$, and $L$, $L'$, $L_1$, \ldots, $L_n$ are literals.  An FO theory is in ENF if all its sentences are.
\end{definition}
Recall that we denote by $\psi[\xxx]$ that $\xxx$ are precisely the free variables of $\psi$. Thus, the definition of ENF implicitly states that in every ENF sentence $\forall \xxx\ (P(\xxx) \ler \psi)$, the free variables of $\psi$ are the free variables of $P(\xxx)$. Also, we allow that $n=1$ in the definition, i.e., $\forall \xxx\ (L[\xxx] \ler L_1[\xxx])$ is in ENF.

We now show that every FO theory $T$ over a vocabulary $\voc$ can be transformed into a $\voc$-equivalent ENF theory $T'$. The transformation is akin to the Tseitin transformation for propositional logic \cite{Tseitin68eng}. 
\begin{algotext}\label{alg:fotoenf}
Given an FO theory $T$: 
\begin{enumerate}
\item Push negation inside until they are directly in front
    of atoms (also eliminating implication) and ``flatten'' nested
    conjunctions and disjunctions, e.g., $((P \land Q) \land R)$ is
    replaced by $(P \land Q \land R)$.
	\item Replace every sentence $\varphi$ of $T$ that is not of the form $\forall \xxx\ (L[\xxx] \ler \psi[\xxx])$, where $L$ is a literal, by $\top \ler \varphi$.
	\item While $T$ is not in ENF:
	\begin{enumerate}
		\item Choose a sentence $\varphi$ of $T$ that is not in ENF. This sentence is of the form $\forall \xxx\ (L[\xxx] \ler \psi[\xxx])$.
		\item \label{step2b} Choose a direct subformula $\chi[\yyy]$ of $\psi$, replace $\chi[\yyy]$ by $\id{Aux}(\yyy)$ in $\psi$, where $\id{Aux}$ is a new predicate, and add the sentence $\forall \yyy\ (\id{Aux}(\yyy) \ler \chi[\yyy])$ to $T$.
	\end{enumerate}
	\item Return $T$.
\end{enumerate}
\end{algotext}
Clearly, the result of Algorithm~\ref{alg:fotoenf} is an ENF
theory.  Observe that the first step is linear in the size of $T$ and produces a theory $T'$ that is linear in the size of $T$. The auxiliary predicates introduced in step~\eqref{step2b}, replace subformulas of $T'$. Since the number of subformulas in $T'$ is linear in the size of $T$ and each subformula is replaced at most once by an auxiliary predicate, the algorithm runs in linear time. 
\begin{example}\label{ex:enf}
The result of applying Algorithm~\ref{alg:fotoenf} on the theory $T_1$ from Example~\ref{ex:stud} is the theory
\begin{align*}
	& \top \ler \forall x\forall y\ \id{Aux}_1(x,y), \\
	& \forall x\forall y\ (\id{Aux}_1(x,y) \ler \neg\id{MutExcl}(x,y) \lor \neg\id{Selected}(x) \lor \neg\id{Selected}(y)), \\
	& \top \ler \exists m\ \id{Aux}_2(m), \\
	& \forall m\ (\id{Aux}_2(m) \ler \id{Module}(m) \land \id{Selected}(m)), \\
	& \top \ler \forall c\ \id{Aux}_3(c), \\
	& \forall c\ (\id{Aux}_3(c) \ler \neg \id{Course}(c) \lor \id{Aux}_4(c) \lor \id{Selected}(c)), \\
	& \forall c\ (\id{Aux}_4(c) \ler \forall m\ \id{Aux}_5(m,c)), \\
	& \forall c \forall m\ (\id{Aux}_5(m,c) \ler \neg\id{Module}(m) \lor \neg\id{Selected}(m) \lor \neg\id{In}(c,m)).
\end{align*}
Here, the predicates $\id{Aux}_1$, \ldots, $\id{Aux}_5$ are introduced by the algorithm.
\end{example}

As steps (1) and (2) of Algorithm~\ref{alg:fotoenf} trivially preserve logical equivalence and step (3) preserves $\voc$-equivalence according to Proposition~\ref{prop:fopredintro}, the following proposition holds:
\begin{proposition}\label{prop:fotoinf}
Let $T'$ be the result of applying Algorithm~\ref{alg:fotoenf} on a theory $T$ over $\voc$. Then $T$ and $T'$ are $\voc$-equivalent.
\end{proposition}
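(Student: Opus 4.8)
The plan is to follow the three steps of Algorithm~\ref{alg:fotoenf} and verify that each one transforms the current theory into a $\voc$-equivalent one, where $\voc$ is the vocabulary of the original theory $T$. Steps (1) and (2) only rewrite the sentences of $T$ over the same vocabulary $\voc$: step (1) applies the De Morgan laws, double-negation elimination, the rewriting of $\lir$ in terms of $\lor$ and $\neg$, and the associativity of $\land$ and $\lor$, all of which preserve logical equivalence; step (2) replaces a sentence $\varphi$ by $\top \ler \varphi$, which is logically equivalent to $\varphi$. Since two theories over the same vocabulary $\voc$ that are logically equivalent are trivially $\voc$-equivalent, the theory $T_0$ obtained after steps (1) and (2) is $\voc$-equivalent to $T$. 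It remains to handle the loop in step (3).

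Before doing so, I would record two elementary properties of $\voc$-equivalence that are needed to chain the individual iterations. First, for a fixed common subvocabulary $\voc$, the relation ``$\voc$-equivalent'' is transitive: if $T_a$ over $\voc_a$ is $\voc$-equivalent to $T_b$ over $\voc_b$, and $T_b$ is $\voc$-equivalent to $T_c$ over $\voc_c$, then for every $\voc$-structure $I$ the existence of a satisfying $\voc_a$-expansion is equivalent to the existence of a satisfying $\voc_b$-expansion, which in turn is equivalent to the existence of a satisfying $\voc_c$-expansion. Second, $\voc$-equivalence is monotone in the shared vocabulary: if $T_a$ over $\voc_a$ and $T_b$ over $\voc_b$ are $\voc'$-equivalent and $\voc \subseteq \voc' \subseteq \voc_a \cap \voc_b$, then they are also $\voc$-equivalent. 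This holds because any $\voc$-structure $I$ together with a satisfying $\voc_a$-expansion $M_a$ determines the $\voc'$-structure $\res{M_a}{\voc'}$, to which $\voc'$-equivalence may be applied to produce a satisfying $\voc_b$-expansion of $I$, and symmetrically.

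With these two facts in hand, I would prove by induction on the number of iterations of the loop in step (3) that each intermediate theory is $\voc$-equivalent to $T$. The loop terminates after finitely many iterations, since each iteration replaces one occurrence of a direct subformula by an auxiliary atom and the number of subformulas of $T_0$ is finite. For the inductive step, suppose the current theory $T_i$ over $\voc_i \supseteq \voc$ is already $\voc$-equivalent to $T$, and consider the iteration that chooses a sentence $\forall \xxx\ (L[\xxx] \ler \psi[\xxx])$ of $T_i$, replaces a direct subformula $\chi[\yyy]$ of $\psi$ by $\id{Aux}(\yyy)$, and adds $\forall \yyy\ (\id{Aux}(\yyy) \ler \chi[\yyy])$, yielding $T_{i+1}$ over $\voc_{i+1} = \voc_i \cup \{\id{Aux}\}$. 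Viewing a theory as the conjunction of its sentences, this is exactly an instance of predicate introduction, so Proposition~\ref{prop:fopredintro} gives that $T_{i+1}$ is $\voc_i$-equivalent to $T_i$. Since $\voc \subseteq \voc_i$, the monotonicity property coarsens this to $\voc$-equivalence, and transitivity with the induction hypothesis yields that $T_{i+1}$ is $\voc$-equivalent to $T$. When the loop exits we obtain $T'$, which is therefore $\voc$-equivalent to $T$.

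The only real delicacy, and the step I would be most careful about, is the vocabulary bookkeeping in the last paragraph: Proposition~\ref{prop:fopredintro} delivers equivalence only over the vocabulary present \emph{before} the fresh predicate is introduced, so each iteration a priori yields equivalence over a different, growing common subvocabulary. The two auxiliary properties of the preceding paragraph are precisely what is needed to collapse this growing chain back to a single $\voc$-equivalence with the original $T$. A minor point worth checking explicitly is that Proposition~\ref{prop:fopredintro} is stated for a single sentence while step (3) acts on a theory; this is resolved by identifying a theory with the conjunction of its sentences and noting that replacing one occurrence of $\chi[\yyy]$ inside one conjunct is a legitimate application of predicate introduction.
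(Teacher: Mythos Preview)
Your proof is correct and follows the same approach as the paper: steps (1) and (2) preserve logical equivalence, and each iteration of step (3) is an instance of predicate introduction handled by Proposition~\ref{prop:fopredintro}. The paper's argument is a one-line sketch that leaves the transitivity and the vocabulary bookkeeping across iterations implicit, whereas you spell these out explicitly; your added care about coarsening the $\voc_i$-equivalence produced by Proposition~\ref{prop:fopredintro} down to $\voc$-equivalence and then chaining by transitivity is exactly the detail the paper elides.
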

The combination of Proposition~\ref{prop:extendprop} and Proposition~\ref{prop:fotoinf} ensures propagators for $T'$ can be used to implement propagators for $T$.

\subsubsection{From ENF to INF}

As shown in the previous section, every theory over $\voc$ can be transformed into a $\voc$-equivalent ENF theory. Now we show that any ENF theory $T$ can be transformed into a logically equivalent theory $\infs{T}$ containing only INF sentences. The transformation is inspired by standard rules for Boolean constraint propagation as studied, e.g., by \citeN{aaai/McAllester90} and \citeN{compulog/Apt99}. As a result, we obtain a propagator for $T$ with polynomial-time data complexity. Combined with the results of the previous section, this yields a propagation method for FO with polynomial-time data complexity. The relation of this propagation method to \emph{unit propagation} for propositional formulas in conjunctive normal form, is clarified by \citeN{compulog/Apt99} and \citeN{phd/Wittocx10}.

Each of the INF sentences we associate to an ENF sentence $\forall \xxx\ (L[\xxx] \ler \psi)$ is either logically equivalent to $\forall \xxx\ (L[\xxx] \lir \psi)$ or to $\forall \xxx\ (\psi \lir L[\xxx])$. The set of all INF sentences associated to an ENF sentence $\varphi$ contains for each predicate $P$ that occurs in $\varphi$ a sentence of the form $\forall \xxx\ (\psi \lir P(\xxx))$ and a sentence of the form $\forall \xxx\ (\psi \lir \neg P(\xxx))$. As such, the corresponding propagators are, in principle, able to derive that a domain atom $P(\ddd)$ is true, respectively false, if this is implied by $\varphi$.
\begin{definition}\label{def:etoi}
For an ENF sentence $\varphi$, the set of INF sentence $\infs{\varphi}$ is defined in Table~\ref{tab:etoi}. For an ENF theory $T$, $\infs{T}$ denotes the set of INF sentences $\bigcup_{\varphi \in T} \infs{\varphi}$.
\end{definition}
\begin{table}
\begin{center}
\begin{tabular}{l|l}
$\varphi$ & $\infs{\varphi}$ \\
\hline
\multirow{4}{*}{$\forall \xxx\ (L \ler L_1 \land \ldots \land L_n)$}
	& $\forall \xxx\ (L_1 \land \ldots \land L_n \lir L)$ \\
	& $\forall \xxx\ (\neg L_i \lir \neg L)$ \hfill $1 \leq i \leq n$\\                                                  
	& $\forall \xxx\ (L \lir L_i)$ \hfill $1 \leq i \leq n$\\                                                  
	& $\forall \xxx\ (\neg L \land L_1 \land \ldots \land L_{i-1} \land L_{i+1} \land \ldots \land L_n \lir \neg L_i)$ \ \hfill $1 \leq i \leq n$ \\
\hline
\multirow{4}{*}{$\forall \xxx\ (L \ler L_1 \lor \ldots \lor L_n)$} 
	& $\forall \xxx\ (\neg L_1 \land \ldots \land \neg L_n \lir \neg L)$ \\
	& $\forall \xxx\ (L_i \lir L)$ \hfill $1 \leq i \leq n$\\                                                  
	& $\forall \xxx\ (\neg L \lir \neg L_i)$ \hfill $1 \leq i \leq n$\\                                                  
	& $\forall \xxx\ (L \land \neg L_1 \land \ldots \land \neg L_{i-1} \land \neg L_{i+1} \land \ldots \land \neg L_n \lir  L_i)$ \ \hfill $1 \leq i \leq n$ \\
\hline
\multirow{4}{*}{$\forall \xxx\ (L[\xxx] \ler \forall \yyy\ L'[\xxx,\yyy])$} 
	& $\forall \xxx\ ((\forall \yyy\ L'[\xxx,\yyy]) \lir L[\xxx])$ \\
	& $\forall \xxx (\exists \yyy\ \neg L'[\xxx,\yyy]) \lir \neg L[\xxx])$ \\
	& $\forall \xxx \forall \yyy\ (L[\xxx] \lir L'[\xxx,\yyy])$ \\
	& $\forall \xxx \forall \yyy\ ((\neg L[\xxx] \land \forall \zzz\ (\yyy \neq \zzz \lir L'[\xxx,\yyy][\yyy/\zzz])) \lir \neg L'[\xxx,\yyy])$ \\
\hline
\multirow{4}{*}{$\forall \xxx\ (L[\xxx] \ler \exists \yyy\ L'[\xxx,\yyy])$} 
	& $\forall \xxx\ ((\forall \yyy\ \neg L'[\xxx,\yyy]) \lir \neg L[\xxx])$ \\
	& $\forall \xxx (\exists \yyy\ L'[\xxx,\yyy]) \lir L[\xxx])$ \\
	& $\forall \xxx \forall \yyy\ (\neg L[\xxx] \lir \neg L'[\xxx,\yyy])$ \\
	& $\forall \xxx \forall \yyy\ ((L[\xxx] \land \forall \zzz\ (\yyy \neq \zzz \lir \neg L'[\xxx,\yyy][\yyy/\zzz])) \lir L'[\xxx,\yyy])$
\end{tabular}
\caption{From ENF to INF.
}\label{tab:etoi}
\end{center}
\end{table}

It is straightforward to verify the following proposition.
\begin{proposition}
For every ENF sentence $\varphi$, $\infs{\varphi}$ is logically equivalent to $\varphi$. Similarly for ENF theories. 
\end{proposition}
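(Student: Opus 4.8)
The plan is to treat separately each of the four shapes that $\psi$ may take in an ENF sentence $\forall \xxx\ (L[\xxx] \ler \psi)$, since $\infs{\varphi}$ is defined case by case in Table~\ref{tab:etoi}. For a fixed shape it suffices to establish two directions of logical consequence: every sentence listed in $\infs{\varphi}$ is a logical consequence of $\varphi$, and conversely $\bigwedge \infs{\varphi} \models \varphi$. The statement for theories then follows with no extra work: since for each $\varphi$ the theory $\infs{\varphi}$ has exactly the same models as $\{\varphi\}$, and a structure satisfies a union of theories iff it satisfies each member, the union $\infs{T} = \bigcup_{\varphi \in T} \infs{\varphi}$ has the same models as $\bigcup_{\varphi \in T} \{\varphi\} = T$.

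The key structural observation, which I would make uniformly across all four rows, is that the equivalence $L[\xxx] \ler \psi$ decomposes into the two implications $(\psi \lir L[\xxx])$ and $(L[\xxx] \lir \psi)$, and that in each row two of the four listed INF sentences are \emph{jointly} logically equivalent to this pair of implications, after pushing the prefix $\forall\xxx$ through the connective and, where needed, applying contraposition together with the De Morgan law $\neg\forall\yyy\ L' \equiv \exists\yyy\ \neg L'$. Concretely, in the conjunction case this distinguished pair is the first sentence $\forall\xxx\ (L_1 \land \ldots \land L_n \lir L)$, which is literally $(\psi \lir L)$, together with the family $\forall\xxx\ (L \lir L_i)$ ranging over $1 \leq i \leq n$, whose conjunction is $\forall\xxx\ (L \lir (L_1 \land \ldots \land L_n))$, i.e.\ $(L \lir \psi)$; in the disjunction case it is the first sentence and the second family; in the universal case the first and the third; in the existential case the first and the second. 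In every row the conjunction of the distinguished pair is exactly $\varphi$, which at once shows that these two sentences are consequences of $\varphi$ and that $\bigwedge\infs{\varphi} \models \varphi$.

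It then remains to check that the two remaining sentences of each row are also logical consequences of $\varphi$. One of them is always a contrapositive of a member of the distinguished pair, hence logically equivalent to it and a fortiori a consequence of $\varphi$: for instance $\forall\xxx\ (\neg L_i \lir \neg L)$ is the contrapositive of $\forall\xxx\ (L \lir L_i)$, and in the universal row $\forall\xxx\ ((\exists\yyy\ \neg L'[\xxx,\yyy]) \lir \neg L[\xxx])$ is the contrapositive of the collected $\forall\xxx\ (L[\xxx] \lir \forall\yyy\ L'[\xxx,\yyy])$. The genuinely non-routine sentence, and the main obstacle, is the fourth one in each row, which encodes a unit-propagation step. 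In the conjunction case it is $\forall\xxx\ (\neg L \land L_1 \land \ldots \land L_{i-1} \land L_{i+1} \land \ldots \land L_n \lir \neg L_i)$: assuming $\varphi$, if $\neg L$ holds then $L_1 \land \ldots \land L_n$ fails, so some conjunct is false, and if all conjuncts except possibly the $i$-th hold, that failing conjunct must be $L_i$. The quantifier rows require the most care, because of the renaming $[\yyy/\zzz]$ and the fact that $L[\xxx]$ does not depend on the quantified variable; for the universal row the fourth sentence $\forall \xxx \forall \yyy\ ((\neg L[\xxx] \land \forall \zzz\ (\yyy \neq \zzz \lir L'[\xxx,\yyy][\yyy/\zzz])) \lir \neg L'[\xxx,\yyy])$ is justified by the same idea: under $\varphi$, $\neg L[\xxx]$ forces $\exists \yyy\ \neg L'[\xxx,\yyy]$, and the premise asserts that every value $\zzz \neq \yyy$ already satisfies $L'[\xxx,\zzz]$, so the witness of that existential must be $\yyy$ itself, yielding $\neg L'[\xxx,\yyy]$; the existential row is dual. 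Once these substitutions and quantifier scopes are verified, each row gives $\varphi \models \bigwedge\infs{\varphi}$, which combined with the distinguished pair delivers the claimed logical equivalence.
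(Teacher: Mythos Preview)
Your proposal is correct and is precisely the straightforward case-by-case verification that the paper leaves to the reader: the paper states only ``It is straightforward to verify the following proposition'' and gives no further argument, so your identification in each row of Table~\ref{tab:etoi} of a pair of INF sentences jointly equivalent to the two directions of $L[\xxx] \ler \psi$, followed by checking that the remaining sentences (a contrapositive and a unit-propagation clause) are consequences of $\varphi$, is exactly the intended routine.
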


It follows that if $T$ is an ENF theory, any propagator for $\infs{T}$ is a propagator for $T$. In particular, for every $\varphi \in \infs{T}$, $\infp{\varphi}$ is a polynomial-time propagator for $T$. As a corollary of Theorem~\ref{th:inf}, we have:
\begin{proposition}\label{prop:new}
If $T$ is an ENF theory, then the operator $\lim_{\infps{\infs{T}}}$ is a propagator for $T$. For a fixed ENF theory $T$ and varying finite structures $\app{I}$, $\lim_{\infps{\infs{T}}}(\app{I})$ can be computed in polynomial time.
\end{proposition}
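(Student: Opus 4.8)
The plan is to derive both claims from results already established in this section. For the first claim, I would reduce it to a statement about $\infs{T}$. By the proposition immediately preceding, $\infs{T}$ is logically equivalent to $T$, so $T$ and $\infs{T}$ have exactly the same models; since a propagator for a theory is defined purely in terms of that theory's models, it follows (as observed in Section~\ref{sec:cpforlogic}) that $T$ and $\infs{T}$ share the same propagators. Hence it suffices to show that $\lim_{\infps{\infs{T}}}$ is a propagator for $\infs{T}$.

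To do this, I would first argue that every individual operator $\infp{\varphi}$ with $\varphi \in \infs{T}$ is a monotone propagator for $\infs{T}$. By Proposition~\ref{prop:infpisprop}, $\infp{\varphi}$ is a monotone propagator for the sentence $\varphi$ itself, so it is inflationary and, for every model $M$ of $\varphi$ with $\app{I} \leqp M$, satisfies $\infp{\varphi}(\app{I}) \leqp M$. Since $\infs{T} \models \varphi$, every model of $\infs{T}$ is a model of $\varphi$, so this model-preservation condition is inherited; this is exactly the reasoning behind Proposition~\ref{lem:fisp}, now applied to $\infp{\varphi}$ rather than $\rfi{\varphi}$. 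Monotonicity is a property of the operator and carries over unchanged. Thus $\infps{\infs{T}}$ is a finite set of monotone propagators for $\infs{T}$, and by the fact recorded after Proposition~\ref{prop:refs} (which rests on Lemma~\ref{lem:propcompose} together with the finiteness of refinement sequences from finite structures guaranteed by Proposition~\ref{prop:len}), $\lim_{\infps{\infs{T}}}$ is itself a propagator for $\infs{T}$, hence for $T$.

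For the complexity claim, I would first verify that $\infs{T}$ is a fixed finite set of INF sentences: a theory is finite, and Table~\ref{tab:etoi} associates to each ENF sentence only finitely many INF sentences, so $\infs{T} = \bigcup_{\varphi \in T} \infs{\varphi}$ is finite and independent of $\app{I}$. The polynomial-time bound then follows immediately from Theorem~\ref{th:inf} instantiated with $V := \infs{T}$.

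The proof involves no genuine difficulty; the only point requiring care is the transfer from ``propagator for the sentence $\varphi$'' to ``propagator for $T$'', where one must check that both defining conditions of a propagator, as well as monotonicity, survive the passage through $\infs{T} \models \varphi$ and the logical equivalence of $\infs{T}$ and $T$. Every substantive ingredient---monotonicity of the INF propagators, uniqueness of the limit of a stabilizing refinement sequence, the polynomial length of such sequences, and the polynomial-time evaluation of a fixed INF sentence in a four-valued structure---has already been established in the cited results, so the argument is essentially an assembly of them.
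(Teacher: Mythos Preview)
Your proposal is correct and follows essentially the same route as the paper: the paper also observes that $\infs{T}$ is logically equivalent to $T$ (so they share propagators), that each $\infp{\varphi}$ is a monotone polynomial-time propagator for $T$, and then invokes Theorem~\ref{th:inf} on the fixed finite set $V=\infs{T}$ for the complexity bound. Your write-up is somewhat more explicit about the intermediate steps (the transfer via $\infs{T}\models\varphi$ and the use of Lemma~\ref{lem:propcompose} and Proposition~\ref{prop:refs} for $\lim_V$), but the substance and the cited ingredients are the same.
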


\subsection{Summary}

Combining the results above yields the following propagation algorithm for FO theories.
\begin{algotext}\label{alg:prop}
For an input theory $T$ over $\voc$ and a $\voc$-structure $\app{I}$:
\begin{enumerate}
	\item Transform $T$ to an ENF theory $T'$ using Algorithm~\ref{alg:fotoenf}.
	\item Construct a (stabilizing) $\infps{\infs{T}}$-refinement sequence from $\app{I}$. Denote the last element by $\app{J}$. \label{it:propalgo}
	\item Return $\res{(\iprop(\app{J}))}{\voc}$.
\end{enumerate}
\end{algotext}
Note that this is an any-time algorithm: the refinement sequence constructed in the second step can be stabilizing, but this is not necessary. In either case, the algorithm implements a propagator for $T$. From Proposition~\ref{prop:new}, it follows that the algorithm has polynomial-time data complexity:
\begin{proposition}
For a fixed theory $T$ and varying finite structures $\app{I}$, Algorithm~\ref{alg:prop} can be implemented in polynomial time.
\end{proposition}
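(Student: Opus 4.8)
The plan is to verify that each of the three steps of Algorithm~\ref{alg:prop} runs in time polynomial in $\size{\app{I}}$ and then to appeal to the fact that a composition of constantly many polynomial-time computations is again polynomial-time. Since $T$ is held fixed, the entire cost of Step~1 is a constant independent of $\app{I}$: Algorithm~\ref{alg:fotoenf} runs in linear time in the size of $T$ and produces a fixed ENF theory $T'$, and forming $\infs{T'}$ according to Definition~\ref{def:etoi} yields a fixed finite set of INF sentences, since each ENF sentence contributes only a bounded number of INF sentences. Consequently $\infps{\infs{T'}}$ is a fixed finite set of INF propagators, which is precisely the hypothesis required by Theorem~\ref{th:inf}.

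For Step~2 I would take the refinement sequence to be stabilizing, so that its last element $\app{J}$ equals $\lim_{\infps{\infs{T'}}}(\app{I})$. Theorem~\ref{th:inf}, or equivalently Proposition~\ref{prop:new}, then gives directly that this limit is computable in time polynomial in $\size{\app{I}}$: by Proposition~\ref{prop:len} the sequence has length polynomial in $\size{\app{I}}$, and each individual application of a fixed propagator $\infp{\varphi}$ is computable in polynomial time. If instead the any-time sequence is stopped early, the cost is only smaller, so polynomiality is preserved regardless of how far the sequence is driven.

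Finally, Step~3 applies $\iprop$ and then restricts to $\voc$. For a fixed vocabulary the number of domain atoms over a domain $D$ is $\bigo{\size{\app{I}}^k}$, where $k$ bounds the arities, so deciding whether $\app{J}$ is three-valued (that is, whether some domain atom takes the value $\inco$) is polynomial; depending on the outcome we return either $\res{\mapp}{\voc}$ or $\res{\app{J}}{\voc}$, and dropping the auxiliary symbols introduced in Step~1 is likewise polynomial. Composing the three bounds yields the claimed polynomial-time data complexity. I do not expect a genuine obstacle here, as the substantive work is already carried out in Theorem~\ref{th:inf} and Proposition~\ref{prop:new}; the statement is essentially a bookkeeping corollary. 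The only point that merits care is the implicit claim that the ENF transformation and the formation of $\infs{T'}$ are independent of $\app{I}$, which is exactly why it is essential that $T$ is fixed while $\app{I}$ varies.
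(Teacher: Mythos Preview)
Your proposal is correct and follows the paper's approach: the paper simply states that the proposition follows from Proposition~\ref{prop:new}, and your argument is a more detailed unpacking of the same idea, additionally verifying that Steps~1 and~3 contribute only polynomial (in fact constant and polynomial, respectively) overhead.
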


Since only INF propagators are used, the second step of Algorithm~\ref{alg:prop} can be executed by representing $\lim_{\infps{V}}$ as a positive rule set and computing the model of that set. In the following, we call Algorithm~\ref{alg:prop} the \emph{propagation algorithm}. For a general theory $T$, we denote by $\infs{T}$ the set of INF sentences $\infs{T'}$, where $T'$ is obtained from $T$ by applying Algorithm~\ref{alg:fotoenf}.

Algorithm~\ref{alg:prop} can be seen as an algorithm that propagates information up and down the parse tree of the input theory $T$. Indeed, let $\id{Aux}$ be a predicate and $\forall \xxx\ (\id{Aux}(\xxx) \ler \varphi)$ a sentence, introduced while transforming $T$ to ENF. As mentioned, $\id{Aux}$ represents the subformula $\varphi$ of $T$. Hence, INF sentences in $\infs{\forall \xxx\ (\id{Aux}(\xxx) \ler \varphi)}$ of the form $\forall \xxx\ (\psi \lir \id{Aux}(\xxx))$ or $\forall \xxx\ (\psi \lir \neg\id{Aux}(\xxx))$ propagate information derived about subformulas of $\varphi$ to $\varphi$ itself. That is, they propagate information upwards in the parse tree of $T$. The other INF sentences in $\infs{\forall \xxx\ (\id{Aux}(\xxx) \ler \varphi)}$ propagate information about $\varphi$ downwards. 

As an illustration, we apply the propagation algorithm on the theory and structure from Example~\ref{ex:stud}.
\begin{example}\label{ex:nonmin}
Let $T_1$ and $\app{I}_0$ be the theory and structure from Example~\ref{ex:stud}. Transforming $T_1$ to ENF produces the theory shown in Example~\ref{ex:enf}. According to Definition~\ref{def:etoi}, the set of INF sentences associated to this theory contains, amongst others, the sentences
\begin{align}
\label{eq:finp1}	& \forall x\forall y\ (\top \lir \id{Aux}_1(x,y)), \\
\label{eq:finp1-1}	& \forall x\forall y\ (\id{Aux}_1(x,y) \land \id{MutExcl}(x,y) \land \id{Selected}(x) \lir \neg\id{Selected}(y)), \\
\label{eq:finp2}	& \forall c\ (\top \lir \id{Aux}_3(c)), \\
\label{eq:finp3}	& \forall c\ (\id{Aux}_3(c) \land \id{Course}(c) \land \neg \id{Selected}(c) \lir \id{Aux}_4(c)), \\
\label{eq:finp4}	& \forall c\forall m\ (\id{Aux}_4(c) \lir \id{Aux}_5(m,c)), \\
\label{eq:finp5}	& \forall m\ (\id{Module}(m) \land \exists c\ (\id{Aux}_5(m,c) \land \id{In}(c,m)) \lir \neg \id{Selected}(m)), \\
\label{eq:finp6-1}	& \forall m\ (\neg \id{Selected}(m) \lir \neg \id{Aux}_2(m)), \\ 
\label{eq:finp6-2}	& \forall m\ (\neg \id{Module}(m) \lir \neg \id{Aux}_2(m)), \\ 
\label{eq:finp6}	& \forall m\ (\top \land \forall m'\ (m \neq m' \lir \neg \id{Aux}_2(m')) \lir \id{Aux}_2(m)), \\
\label{eq:finp6-3}	& \forall m\ (\id{Aux}_2(m) \lir \id{Selected}(m)), \\ 
\label{eq:finp7}	& \forall m \forall c\ (\id{Module}(m) \land \id{Selected}(m) \land \id{In}(c,m) \lir \neg \id{Aux}_5(m,c)), \\
\label{eq:finp8}	& \forall c \ (\exists m\ \neg \id{Aux}_5(m,c) \lir \neg \id{Aux}_4(c)), \\
\label{eq:finp9}	& \forall c \ (\id{Course}(c) \land \id{Aux}_3(c) \land \neg \id{Aux}_4(c) \lir \id{Selected}(c)). 
\end{align}
If one applies the associated INF propagators on $\app{I}_0$ in the order of the sentences above, the following information is derived. First, propagator $\infp{\eqref{eq:finp1-1}} \circ \infp{\eqref{eq:finp1}}$ derives that $\id{Aux}_1(c_1,c_2)$ is certainly true and that $c_2$ is certainly not selected. Next, $\infp{\eqref{eq:finp2}}$ derives that $\id{Aux}_3(c)$ is certainly true for all courses $c$. $\infp{\eqref{eq:finp3}}$ combines the derived information and concludes that $\id{Aux}_4(c_2)$ is certainly true. This in turn implies, by $\infp{\eqref{eq:finp4}}$, that $\id{Aux}_5(m,c)$ is certainly true for, a.o., $m=m_2$ and $c=c_2$. $\infp{\eqref{eq:finp5}}$ derives from the fact that $c_2$ belongs to $m_2$, that $m_2$ cannot be selected. Next, it is derived that $m_1$ is certainly selected by applying $\infp{\eqref{eq:finp6-3}} \circ \cdots \circ \infp{\eqref{eq:finp6-1}}$, and finally, applying $\infp{\eqref{eq:finp9}} \circ \infp{\eqref{eq:finp8}} \circ \infp{\eqref{eq:finp7}}$ yields that $c_3$ is certainly selected. As such, exactly the same information as in $\rfi{T_1}(\app{I}_0)$ is derived.
\end{example}

The following example gives another illustration of what the propagation algorithm can achieve.
\begin{example}\label{ex:plan}
Consider the theory $T_2$, taken from some planning domain, consisting of the sentence
\begin{align*}
 & \forall a \forall a_p \forall t\ (\id{Action}(a) \land \id{Action}(a_p) \land \id{Time}(t) \land \id{Prec}(a_p,a) \land \id{Do}(a,t)  \\
 & \qquad \lir \exists t_p\ (\id{Time}(t_p) \land t_p < t \land \id{Do}(a_p,t_p))).
\end{align*}
This sentence describes that some action $a$ with precondition $a_p$ can only be performed at time point $t$ if $a_p$ is performed at some earlier time point $t_p$. Let $\app{I}_2$ be a structure such that 
\[ \app{I}_2(Prec(d_0,d_1) \land \ldots \land Prec(d_{n-1},d_{n})) = \true. \]
$\app{I}_2$ indicates that there is a chain of $n$ actions that need to be performed before $d_n$. The propagation algorithm can derive for input $T_2$ and $\app{I}_2$ that $d_n$ can certainly not be performed before the $(n+1)$th time point.
\end{example}

The INF sentences in Example~\ref{ex:nonmin} illustrate that the presented transformation from general FO theories to INF sentences produces a non-minimal set of INF sentences. Logic programming techniques may be applied to reduce this set. For instance, if \emph{unfolding} is applied for predicate $\id{Aux}_1$, sentence~\eqref{eq:finp1-1} is replaced by the shorter sentence $\forall x\forall y\ (\id{MutExcl}(x,y) \land \id{Selected}(x) \lir \neg\id{Selected}(y))$, and~\eqref{eq:finp1} can be omitted. Similarly, unfolding can be applied to omit~\eqref{eq:finp2}--\eqref{eq:finp4} and replace~\eqref{eq:finp5} by $\forall m\ (\id{Module}(m) \land \exists c\ (\neg\id{Selected}(c) \land \id{In}(c,m)) \lir \neg \id{Selected}(m))$. Sentences in $\infs{T_1}$ of the form $\forall x\forall y\ (\varphi \lir \id{Aux}_1(x,y))$ can be omitted because they are \emph{subsumed} by~\eqref{eq:finp1}. Sentences of the form $\varphi \lir \top$ can be omitted because they are tautologies. Etc. It depends on the practical implementation of the propagation algorithm whether optimizing the set of INF sentences in this manner leads to a significant speed-up.

For sentences $\varphi$ of some specific form, it is easy to directly associate sets of INF sentences that are smaller than $\infs{\varphi}$ but produce the same propagation. For instance, to a clause $\forall \xxx\ (L_1 \lor \ldots \lor L_n)$, the set $\{ \forall \xxx\ (\neg L_1 \land \ldots \land \neg L_{i-1} \land \neg L_{i+1} \land \ldots \land \neg L_n \lir L_i) \mid 1 \leq i \leq n \}$ could be associated. For sentence~\eqref{form:cc} of Example~\ref{ex:stud}, this is the set
\begin{align}
 & \forall x\forall y\ (\id{MutExcl}(x,y) \land \id{Selected}(x) \lir \neg\id{Selected}(y)), \label{eq:redstart} \\
 & \forall x\forall y\ (\id{MutExcl}(x,y) \land \id{Selected}(y) \lir \neg\id{Selected}(x)), \\
 & \forall x\forall y\ (\id{Selected}(x) \land \id{Selected}(y) \lir \neg\id{MutExcl}(x,y)), \label{eq:redend}
\end{align}
instead of the fourteen sentences in $\infs{\eqref{form:cc}}$. It is noteworthy that extensively applying simplification techniques as described in the previous paragraph reduces $\infs{\eqref{form:cc}}$ to the three sentences~\eqref{eq:redstart}--\eqref{eq:redend}.

\subsection{Notes on Precision}

Because of Proposition~\ref{prop:opt}, the result $\app{J}$ of applying Algorithm~\ref{alg:prop} on input theory $T$ and structure $\app{I}$ is less precise than $\rfi{T}(\app{I})$. As we will show in Example~\ref{ex:mirek1}, there are cases where $\app{J}$ is strictly less precise than $\rfi{T}(\app{I})$. For applications like, e.g., configuration systems and approximate query answering (see Section~\ref{sec:appli}), it is an important question for which $T$ and $\app{I}$ this loss in precision occurs.  

The loss of precision in Algorithm~\ref{alg:prop} on an input theory $T$ compared to $\rfi{T}$, is in principle due to three factors:
\begin{enumerate}
	\item Instead of propagating the theory $T$ as a whole, Algorithm~\ref{alg:prop} considers propagators for individual sentences, and combines them in a refinement sequence. As Example~\ref{ex:rfitrsl} shows, this may lead to a loss in precision.
	\item The theory is translated to ENF. 
	\item Instead of applying the complete propagator $\rfi{\varphi}$ for an ENF sentence $\varphi$, the incomplete propagators $\infp{\psi}$ for INF sentences $\psi \in \infs{\varphi}$ are applied.
\end{enumerate}
The following theorem states that under some easy-to-verify conditions, the third factor does not contribute to the loss in precision. The theorem indicates that $\infps{\infs{\varphi}}$ is essentially the ``right'' set of propagators to approximate $\rfi{\varphi}$.

\begin{theorem}[\cite{phd/Wittocx10}]~\label{th:etoi}
If $\varphi$ is an ENF sentence such that no predicate occurs more than once in it, $\rfi{\varphi} = \iprop \circ \lim_{\infps{\infs{\varphi}}}$.
\end{theorem}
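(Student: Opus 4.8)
The plan is to prove the two inequalities $\iprop \circ \lim_{\infps{\infs{\varphi}}} \leqp \rfi{\varphi}$ and $\rfi{\varphi} \leqp \iprop \circ \lim_{\infps{\infs{\varphi}}}$ separately, pointwise on every input $\app{I}$. The first is the ``soundness'' direction and follows from results already established, without the single-occurrence hypothesis. Since $\infs{\varphi}$ is logically equivalent to $\varphi$, we have $\varphi \models \psi$ for every $\psi \in \infs{\varphi}$; hence each INF propagator $\infp{\psi}$, which is a monotone propagator for $\psi$ by Proposition~\ref{prop:infpisprop}, is also a monotone propagator for $\varphi$ (its condition for models of $\psi$ holds a fortiori for the models of $\varphi$). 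Consequently $\lim_{\infps{\infs{\varphi}}}$ is a propagator for $\varphi$, being a finite composition of such monotone propagators by Lemma~\ref{lem:propcompose}, and since $\iprop$ is a monotone propagator for any theory, $\iprop \circ \lim_{\infps{\infs{\varphi}}}$ is a propagator for $\varphi$ as well. Proposition~\ref{prop:opt} then yields $\iprop \circ \lim_{\infps{\infs{\varphi}}}(\app{I}) \leqp \rfi{\varphi}(\app{I})$ for every $\app{I}$.

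For the reverse inequality, fix $\app{I}$ and write $\app{J} = \lim_{\infps{\infs{\varphi}}}(\app{I})$. I would split on whether $\app{J}$ is three-valued. If $\app{J}$ is strictly four-valued, then $\iprop(\app{J}) = \mapp$, and it suffices to check $\rfi{\varphi}(\app{I}) = \mapp$: any model $M \models \varphi$ with $\app{I} \leqp M$ would satisfy $\app{J} \leqp M$ (because $\lim_{\infps{\infs{\varphi}}}$ is a propagator for $\varphi$), which is impossible since $M$ is two-valued while $\app{J}$ assigns $\inco$ somewhere; hence the set of such models is empty and $\rfi{\varphi}(\app{I}) = \mapp$, as in the observation following Example~\ref{ex:stud}. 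The substantial case is when $\app{J}$ is three-valued, so $\iprop(\app{J}) = \app{J}$ and I must show $\rfi{\varphi}(\app{I}) \leqp \app{J}$. As every model of $\varphi$ above $\app{I}$ is also above $\app{J}$, the glb defining $\rfi{\varphi}(\app{I})$ is taken over $\{ M \mid \app{J} \leqp M,\ M \models \varphi \}$. Thus $\rfi{\varphi}(\app{I}) \leqp \app{J}$ reduces to a completability-with-freedom claim: this set is nonempty and, for every domain atom $A$ with $A^{\app{J}} = \unkn$, it contains models $M_1, M_2$ with $A^{M_1} = \true$ and $A^{M_2} = \false$. Indeed this forces the glb to be $\unkn$ on every unknown atom, while on atoms already assigned $\true$ or $\false$ the glb agrees with $\app{J}$ since all models in the set extend $\app{J}$. (Nonemptiness is automatic: if $\app{J}$ is already two-valued it is itself a model of $\varphi$, because a two-valued fixpoint that violated some instantiated biconditional would be turned strictly four-valued by the corresponding unit rule in Table~\ref{tab:etoi}; otherwise any unknown atom produces a model via the claim.)

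To establish the claim I would argue by cases on the four shapes of the body $\psi$ in Definition~\ref{def:enfsentence}, using that $\app{J}$ is a common fixpoint of all $\infp{\psi}$ with $\psi \in \infs{\varphi}$. The decisive use of the hypothesis that no predicate occurs more than once is that the head predicate of $L$ and the predicates of the body literals $L_1, \dots, L_n$ (or $L'$) are pairwise distinct; each predicate therefore occurs in exactly one literal, so the interpretation of each literal can be adjusted independently of the others, and the only constraints $\varphi$ imposes couple, for each instantiation $\ddd$ of $\xxx$, the head atom to the body atoms. Being a fixpoint of the four groups of rules in Table~\ref{tab:etoi}, $\app{J}$ is ``locally complete'': for every $\ddd$ no additional head- or body-value is forced. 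Concretely, given the target unknown atom $A$, I would set $A$ to the desired value and then choose default values for the remaining unknown atoms instantiation-by-instantiation (for instance completing the body literals to match the head, exploiting that the distinct body predicates are mutually independent), so that the biconditional of $\varphi$ holds for every $\ddd$.

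The main obstacle is precisely this last construction. Although distinct predicates decouple different body literals, a single body atom $P_i(\eee)$ can occur in the instantiated constraint for many tuples $\ddd$ (whenever the variable pattern of $L_i$ projects $\ddd$ to $\eee$), so fixing its value to accommodate $A$ in one instantiation must remain compatible with all other instantiations in which it appears. Showing that the fixpoint property of $\app{J}$ under the conjunctive, disjunctive, universal and existential rule groups of Table~\ref{tab:etoi} rules out every such conflict is the technical heart of the argument, and is exactly where the single-occurrence hypothesis is indispensable; Example~\ref{ex:mirek1} shows that the equality can fail once a predicate is repeated.
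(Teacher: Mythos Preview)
The paper does not actually contain a proof of this theorem: it is stated with a citation to \cite{phd/Wittocx10}, and no argument for it appears either in the body or in the appendix of proofs. There is therefore nothing in the paper to compare your attempt against.

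Assessing your proposal on its own merits: the soundness direction and the strictly four-valued subcase are correct and rely only on results already in the paper. For the three-valued subcase you correctly reduce the problem to a completability claim---that every atom left unknown in the fixpoint $\app{J}$ can be realised both ways in some model of $\varphi$ above $\app{J}$---and you correctly identify that the single-occurrence hypothesis is what makes the literals independently adjustable.

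However, you yourself flag the argument as unfinished: the ``main obstacle'' you describe (a single body atom $P_i(\eee)$ appearing in many instantiations $\ddd$, so that local choices must be globally consistent) is not resolved, and your suggested strategy of choosing values ``instantiation-by-instantiation'' is exactly what fails in the presence of such sharing. What is needed is a \emph{global} completion, chosen uniformly (for example, a canonical default for all remaining unknown body atoms depending on the ENF shape and on whether $A$ is a head or body atom), together with a verification that the fixpoint equations for all rules in $\infs{\varphi}$ guarantee every instance of the biconditional is satisfied under that completion. Until that case analysis is carried out, what you have is a correct outline but not a proof.
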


The inconsistency propagator $\iprop$ is needed in the theorem to cope with a small technical detail. The only strictly four-valued structure that can be obtained by applying a complete propagator is the most precise structure $\mapp$. This is in general not the case for the propagator $\lim_{\infps{\infs{\varphi}}}$. Applying the inconsistency propagator solves this technical detail.

Concerning the loss in precision due to the first two factors mentioned above, it is worth noting that predicate introduction may actually lead to more precise propagation. We illustrate this on an example.
\begin{example}\label{ex:mirek1}
Consider the propositional theory $T$ consisting of the two sentences $(P \lor Q)$ and $((P \lor Q) \lir R)$. Clearly, $R$ is true in every model of $T$ and therefore $\rfi{T}(\lapp)(R) = \true$. However, $\trsl{T}(\lapp) = \lapp$. Intuitively, this loss in precision is due to the fact that a three-valued structure cannot ``store'' the information that $(P \lor Q)$ is true in every model of $T$ if neither $P$ nor $Q$ is true in every model. However, if we apply predicate introduction to translate $T$ to the theory $T'$ consisting of the sentences
\[
\id{Aux} \ler P \lor Q, \qquad \id{Aux}, \qquad \id{Aux} \lir R,
\]
there is no loss in precision: $\trsl{T'}(\lapp)(R) = \true$. The fact that $(P \lor Q)$ must be true is ``stored'' in the interpretation of the introduced predicate $\id{Aux}$.
\end{example}

We refer to the work of \citeN{tods/DeneckerCBA10} for results on precision in the context of approximate query answering in incomplete databases under local closed world assumptions. It is a topic for future research to extend these results to our more general context.

\section{Symbolic Propagation}\label{sec:sip}

In this section, we discuss the symbolic version of INF propagators. To this end, we first introduce the notion of a \emph{symbolic structure}. Intuitively, a symbolic structure $\Phi$ relates a vocabulary $\symvoc$ to a vocabulary $\voc$. It does so by defining for every predicate of $\voc$ a query over $\symvoc$. This relationship can be used for mapping a structure over a vocabulary $\symvoc$ to a structure over $\voc$. This is reminiscent of materializing the intentional predicates in a deductive database. The relationship can also be used to map a formula over $\voc$ to a formula over $\symvoc$. This is reminiscent of reducing a query over the intensional predicates of a deductive database to a query over the extensional predicates. A symbolic structure is similar to an \emph{interpretation between theories}~\cite{Enderton01}, but it does not alter quantifiers.

Once symbolic structures are defined, symbolic INF propagators are introduced. These propagators map symbolic structures to symbolic structures, in a similar way as non-symbolic INF propagators do for non-symbolic structures. As we will explain, symbolic propagation is beneficial when precision is less important than efficiency, when only parts of the result of propagation are of interest, or when propagation for a fixed theory needs to be performed for several structures. 

In theory, the vocabularies $\symvoc$ and $\voc$ that are connected by a symbolic structure need not be related. However, in all practical applications we investigated so far, $\symvoc$ is a subset of $\voc \cup \tf{\voc}$. The interpretation of the predicates in $\symvoc$ acts as the input to non-symbolic propagation.

\subsection{Symbolic Structures}

\begin{definition}
A \emph{symbolic two-valued $\voc$-structure} $\Phi$ over $\symvoc$ consists of a query $P^{\Phi}$ for each predicate $P \in \voc$. The query $P^{\Phi}$ for a predicate $P/n$ is of the form $\{ (x_1,\ldots,x_n) \mid \varphi \}$ with $\varphi$ a formula over $\symvoc$.
\end{definition}
For the rest of this section, when we use the term {\em symbolic structure}, we mean a symbolic $\voc$-structure over $\symvoc$.  

A symbolic two-valued structure $\Phi$ can be used to map a two-valued $\symvoc$-structure $E$ with domain $D$ to a two-valued $\voc$-structure, denoted $\Phi(E)$, over the same domain $D$; it uses the queries to define the predicates of $\voc$ in $\Phi(E)$.

\begin{definition}
Let $E$ be a $\symvoc$-structure. Then $\Phi(E)$ denotes a $\voc$-structure which, for each predicate in $P \in \voc$, is defined as $P^{\Phi(E)} = (P^{\Phi})^E$.
\end{definition}

\begin{example}\label{ex:rhombus}
Let $\voc$ be the vocabulary $\{ \id{Rhombus}/1 \}$ and $\symvoc$ the vocabulary $\{ \id{Quadrilateral}/1, \id{EqualSides}/1 \}$. An example of a symbolic $\voc$-structure over $\symvoc$ is the symbolic structure $\Phi$ that assigns $\id{Rhombus}^{\Phi} = \{ x \mid \id{Quadrilateral}(x) \land \id{EqualSides}(x) \}$. If $E$ is the $\symvoc$-structure with domain $D = \{ a, b, c \}$ that assigns $\id{Quadrilateral}^E = \{ a, b\}$ and $\id{EqualSides}^E = \{ b,c\}$, then $\Phi(E)$ is the $\voc$-structure with domain $D$ that assigns $\id{Rhombus}^{\Phi(E)} = \{ x \mid \id{Quadrilateral}(x) \land \id{EqualSides}(x) \}^E = \{ b \}$.
\end{example}

Given $E$, $\Phi$ can be seen as a symbolic description of $\Phi(E)$. Given a set $V$ of $\symvoc$-structures, $\Phi$ can be seen as describing the set $\{ \Phi(E) \mid E \in V \}$ of $\voc$-structures. 

A symbolic structure $\Phi$ can also be used to map a formula over $\voc$ to a formula over $\symvoc$. It uses the queries to ``unfold'' the predicates in $\voc$.

\begin{definition}
Let $\varphi$ be a formula over $\voc$ and $\Phi$ a symbolic structure. Then $\Phi(\varphi)$ denotes the formula over $\symvoc$ obtained by replacing each occurrence of an atom $P(\yyy)$ in $\varphi$ by $\psi[\xxx/\yyy]$, where $P^{\Phi} = \{ \xxx \mid \psi \}$.
\end{definition}

The following proposition relates models of $\varphi$ with models of $\Phi(\varphi)$.

\begin{proposition}\label{lem:ev}
  For every formula $\varphi$ over $\voc$, symbolic structure $\Phi$, $\symvoc$-structure $E$ and variable assignment $\theta$, $(\Phi(E))\theta \models \varphi$ iff $E\theta \models \Phi(\varphi)$.
\end{proposition}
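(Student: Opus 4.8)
The plan is to prove the equivalence by structural induction on $\varphi$, since both the structure map $\Phi(E)$ and the formula translation $\Phi(\varphi)$ are defined atom-by-atom and leave the logical structure intact, so each inductive step should reduce directly to the induction hypothesis on subformulas.

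First I would settle the base case, where $\varphi$ is an atom $P(\yyy)$ with $P/n \in \voc$ and $P^{\Phi} = \{ \xxx \mid \psi \}$. Unfolding the definition of $\Phi(E)$ gives $(\Phi(E))\theta \models P(\yyy)$ iff $\theta(\yyy) \in P^{\Phi(E)} = (P^{\Phi})^E = \{ \ddd \mid E[\xxx/\ddd] \models \psi \}$, that is, iff $E[\xxx/\theta(\yyy)] \models \psi$. On the other side $\Phi(P(\yyy)) = \psi[\xxx/\yyy]$ by definition, and the standard first-order substitution lemma yields $E\theta \models \psi[\xxx/\yyy]$ iff $E\theta[\xxx/\theta(\yyy)] \models \psi$. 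Since the free variables of $\psi$ lie among $\xxx$, the assignment $\theta$ is irrelevant on the remaining variables, so both sides collapse to $E[\xxx/\theta(\yyy)] \models \psi$, establishing the base case.

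For the induction step I would use that $\Phi(\cdot)$ acts only on atoms and commutes with the connectives and quantifiers: $\Phi(\neg\varphi_1) = \neg\Phi(\varphi_1)$, $\Phi(\varphi_1 \land \varphi_2) = \Phi(\varphi_1) \land \Phi(\varphi_2)$, likewise for $\lor$, and $\Phi(\exists x\ \varphi_1) = \exists x\ \Phi(\varphi_1)$, $\Phi(\forall x\ \varphi_1) = \forall x\ \Phi(\varphi_1)$. The propositional cases then follow immediately by applying the induction hypothesis to the subformulas. The quantifier cases hinge on the observation that $\Phi(E)$ and $E$ share the same domain $D$: for example $(\Phi(E))\theta \models \exists x\ \varphi_1$ iff some $d \in D$ satisfies $(\Phi(E))\theta[x/d] \models \varphi_1$, which by the induction hypothesis holds iff some $d \in D$ satisfies $E\theta[x/d] \models \Phi(\varphi_1)$, i.e. iff $E\theta \models \exists x\ \Phi(\varphi_1)$; the universal case is dual. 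This is exactly where the remark that $\Phi$ ``does not alter quantifiers'' is used — because quantifiers are not relativized, the two structures must range over literally the same elements.

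The one point requiring genuine care, and the main obstacle, is variable capture. The translation $\Phi(\varphi)$ behaves correctly only when the query variables $\xxx$ in $P^{\Phi}$ and the tuple $\yyy$ at each atom occurrence do not clash with variables bound in $\varphi$ or occurring in $\psi$; otherwise the substitution $\psi[\xxx/\yyy]$ or the step $\Phi(\exists x\ \varphi_1) = \exists x\ \Phi(\varphi_1)$ could silently capture a variable and invalidate the equivalence. I would dispose of this in the standard way, assuming — or first arranging by renaming bound variables — that all substitutions involved are capture-free, so that the substitution lemma applies verbatim and $\Phi$ truly commutes with the quantifiers. Beyond this bookkeeping, the argument is a routine induction with no further difficulty.
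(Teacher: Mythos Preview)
Your proposal is correct and follows essentially the same route as the paper: structural induction on $\varphi$, with the base case unwound via exactly the chain of equivalences you give, and the non-atomic cases dismissed as routine. You are in fact more explicit than the paper, which handles the base case in one line and waves the inductive cases through; your attention to variable capture is a point the paper leaves implicit.
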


\begin{example}
Let $\voc$, $\symvoc$, $\Phi$, and $E$ be as in Example~\ref{ex:rhombus}, and let $\varphi$ be the sentence $\exists y\ \id{Rhombus}(y)$. Then $\Phi(\varphi)$ is the sentence $\exists y\ (\id{Quadrilateral}(y) \land \id{EqualSides}(y))$. Clearly, $E \models \Phi(\varphi)$ and $\Phi(E) \models \varphi$.
\end{example}

We call a symbolic $\tf{\voc}$-structure over $\symvoc$ a \emph{four-valued symbolic $\voc$-structure} over $\symvoc$. Such a structure $\app{\Phi}$ can be used to map a two-valued $\symvoc$-structure $E$ to a four-valued $\voc$-structure $\app{I}$, namely the structure such that $\tf{\app{I}} = \app{\Phi}(E)$. As such, it can be seen as a symbolic description of a class of four-valued structures over $\voc$. Abusing notation, we identify $\app{\Phi}(E)$ with $\app{I}$.

A four-valued symbolic structure $\app{\Phi}$ can also be used to map a formula $\varphi$ over $\voc$ to a pair of formulas over $\symvoc$, namely the pair of $\symvoc$-formulas $(\app{\Phi}(\ct{\varphi}),\app{\Phi}(\cf{\varphi}))$ which we denote as $\app{\Phi}(\varphi)$. Combining Proposition~\ref{lem:ev} and Proposition~\ref{prop:ftot} then yields the desired result that the $\app{\Phi}(\varphi)$ is a description of the truth value of $\varphi$ in the four-valued structures represented by $\app{\Phi}$. That is, for every $\symvoc$-structure $E$ and variable assignment $\theta$, $\app{\Phi}(E)\theta(\varphi) = E\theta(\app{\Phi}(\varphi))$. In other words, to evaluate $\varphi$ in structure $\app{\Phi}(E)$ and variable assignment $\theta$, one can first evaluate $\varphi$ symbolically in $\app{\Phi}$ and then in $E$ and $\theta$.

\begin{example}\label{ex:symstud}
  Let $\voc$ be $\{ \id{Module}/1, \id{Selected}/1, \id{In}/2, \id{MutExcl}/2 \}$ (the vocabulary from Example~\ref{ex:stud}) and let $\symvoc$ be $\{ \id{Module}/1, \id{In}/2, \id{MutExcl}/2, \ct{\id{Selected}}/1\}$. Let $\app{I}_0$ be the $\voc$-structure from Example~\ref{ex:stud} and let $E$ be the two-valued $\symvoc$-structure that assigns $\{ c_1 \}$ to $\ct{\id{Selected}}$ and corresponds to $\app{I}_0$ on the symbols of $\voc \cap \symvoc$. Define the four-valued symbolic $\voc$-structure $\app{\Phi}$ over $\symvoc$ by 
\begin{align*}
	& \ct{\id{In}}^{\app{\Phi}} = \{ (c,m) \mid \id{In}(c,m) \}		& & \ct{\id{MutExcl}}^{\app{\Phi}} = \{ (x,y) \mid \id{MutExcl}(x,y) \}  \\
	& \cf{\id{In}}^{\app{\Phi}} = \{ (c,m) \mid \neg\id{In}(c,m) \}	& & \cf{\id{MutExcl}}^{\app{\Phi}} = \{ (x,y) \mid \neg\id{MutExcl}(x,y) \} \\
	& \ct{\id{Module}}^{\app{\Phi}} = \{ m \mid \id{Module}(m) \}     & & \ct{\id{Selected}}^{\app{\Phi}} = \{ c \mid \ct{\id{Selected}}(c) \} \\
	& \cf{\id{Module}}^{\app{\Phi}} = \{ m \mid \neg\id{Module}(m) \} & & \cf{\id{Selected}}^{\app{\Phi}} = \{ c \mid \bot \}
\end{align*}
It can be checked that $E(\app{\Phi})$ corresponds to $\app{I}_0$.
Let $\varphi$ be the sentence \[\forall c\forall m\ (\neg\id{Selected}(m) \lor \neg \id{In}(m,c) \lor \id{Selected}(c)).\] Then $\ct{\varphi}$ and $\cf{\varphi}$ are given by, respectively,
\begin{align*}
	& \forall c\forall m\ (\cf{\id{Selected}}(m) \lor \cf{\id{In}}(m,c) \lor \ct{\id{Selected}}(c)), \\
	& \exists c\exists m\ (\ct{\id{Selected}}(m) \land \ct{\id{In}}(m,c) \land \cf{\id{Selected}}(c)).
\end{align*}
The evaluation of $\ct{\varphi}$ and $\cf{\varphi}$ in $\app{\Phi}$ are, respectively, the sentences
$\forall c\forall m\ (\bot \lor \neg\id{In}(c,m) \lor \ct{\id{Selected}}(c))$ and $\exists c\exists m\ (\ct{\id{Selected}}(m) \land \id{In}(c,m) \land \bot)$. These two sentences are false in $E$, and therefore $\varphi$ is unknown in $\app{I}_0$.  
\end{example}

\subsection{Symbolic Propagators}\label{ssec:symprops}

We now lift propagators to the symbolic level. 
\begin{definition}\label{def:symp}
A \emph{symbolic propagator $\prps$} for a theory $T$ is an operator on the set of four-valued symbolic structures over $\symvoc$ such that for each $\symvoc$-structure $E$ and symbolic structure $\app{\Phi}$, the following conditions are satisfied:
\begin{longitem}
	\item $\app{\Phi}(E) \leqp \prps(\app{\Phi})(E)$
	\item for every model $M$ of $T$ such that $\app{\Phi}(E) \leqp M$, also $\prps(\app{\Phi})(E) \leqp M$.
\end{longitem}
\end{definition}
Note that these two conditions on symbolic propagators are similar to the conditions on non-symbolic propagators. As is the case for non-symbolic propagators, the composition $\prps_2 \circ \prps_1$ of two symbolic propagators for theory $T$ is again a symbolic propagator for $T$. 

We say that a symbolic propagator $\prps$ \emph{describes} a non-symbolic propagator $\prp$ if for every symbolic four-valued structure $\app{\Phi}$ over $\symvoc$ and every $\symvoc$-structure $E$, $\prps(\app{\Phi})(E) = \app{\Phi}(\prp(E))$.  It is straightforward to check that if $\prps_1$ describes $\prp_1$ and $\prps_2$ describes $\prp_2$, then $\prps_2 \circ \prps_1$ describes $O_2 \circ O_1$. It follows that symbolic propagators can be used to describe finite refinement sequences. Indeed, let $V$ be a set of propagators such that for each $\prp \in V$, there exists a symbolic propagator $\prps$ describing $\prp$. Let $\langle \app{J}_i \rangle_{0 \leq i \leq n}$ be a $V$-refinement sequence from $\app{\Phi}(E)$ and denote by $\prp_i$ a propagator such that $\prp_i(\app{J}_i) = \app{J}_{i+1}$. Then $\app{J}_n = \prps_{n-1}(\ldots(  \prps_{0}(\app{\Phi}))\ldots)(E)$ where $S_i$ denotes the symbolic propagator that describes $O_i$ for $0 \leq i < n$. As such, $\prps_{n-1}(\ldots(  \prps_{0}(\app{\Phi}))\ldots)$ can be seen as a symbolic representation of the refinement sequence $\langle \app{J}_i \rangle_{0 \leq i \leq n}$. To describe transfinite refinement sequences with symbolic propagators, we would in general need symbolic $\voc$-structures that assign queries over an infinitary logic to the symbols of $\voc$.

We now introduce symbolic INF propagators. For the rest of this section, let $\app{\Phi}$ be a four-valued symbolic $\voc$-structure over $\symvoc$ and $E$ a $\symvoc$-structure. If two queries $\{ \xxx \mid \psi \}$ and $\{ \yyy \mid \chi \}$ have the same arity, i.e., $\size{\xxx} = \size{\yyy}$, we denote by $\{ \xxx \mid \psi \} \cup \{ \yyy \mid \chi \}$ the query $\{ \zzz \mid \psi[\xxx/\zzz] \lor \chi[\yyy/\zzz] \}$, where $\zzz$ is a tuple of new variables. Note that $(\{ \xxx \mid \psi \} \cup \{ \yyy \mid \chi \})^E = \{ \xxx \mid \psi \}^E \cup \{ \yyy \mid \chi \}^E $ for every structure $E$.
\begin{definition}
Let $\varphi$ be the INF sentence $\forall \xxx\ (\psi \lir P(\xxx))$. The \emph{symbolic INF propagator} $\sinfp{\varphi}$ is defined by 
\[
	Q^{\sinfp{\varphi}(\app{\Phi})} = 
	\begin{cases}
		\ct{P}^{\app{\Phi}} \cup  \{ \xxx \mid \app{\Phi}(\ct{\psi}) \}	& \text{if $Q = \ct{P}$} \\
		Q^{\app{\Phi}}														& \text{otherwise}. 
	\end{cases}
\]
That is, the queries for predicates different from $\ct{P}$ are copied from $\app{\Phi}$ and the query for $\ct{P}$ is extended with the mapping upon vocabulary $\symvoc$ of $\ct{\psi}$. If $\varphi$ is the INF sentence $\forall \xxx\ (\psi \lir \neg P(\xxx))$, then
$\sinfp{\varphi}$ is defined by
\[
	Q^{\sinfp{\varphi}(\app{\Phi})} = 
	\begin{cases}
		\cf{P}^{\app{\Phi}} \cup \{ \xxx \mid \app{\Phi}(\ct{\psi}) \} )	& \text{if $Q = \cf{P}$} \\
		Q^{\app{\Phi}}														& \text{otherwise}. 
	\end{cases}
\]
\end{definition}
The following result states the desired property that symbolic INF propagators are the symbolic counterpart of non-symbolic INF propagators.
\begin{proposition}\label{prop:sinfp}
$\sinfp{\varphi}$ describes $\infp{\varphi}$ for every INF sentence $\varphi$.
\end{proposition}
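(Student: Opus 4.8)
The plan is to unfold the definition of ``describes'' and then verify the resulting pointwise identity of structures by a case analysis on the two shapes of an INF sentence. Fix a four-valued symbolic structure $\app{\Phi}$ and a $\symvoc$-structure $E$, and write $\app{I} := \app{\Phi}(E)$, so that $\tf{\app{I}} = \app{\Phi}(E)$ as a two-valued $\tf{\voc}$-structure (this is the abuse of notation introduced earlier). I must show that the two-valued $\tf{\voc}$-structure $\sinfp{\varphi}(\app{\Phi})(E)$ equals $\tf{\infp{\varphi}(\app{I})}$; that is, applying the symbolic propagator and then instantiating at $E$ yields exactly the $\tf{\cdot}$-encoding of the result of the non-symbolic propagator on $\app{I}$. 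I would carry out the case $\varphi = \forall \xxx\ (\psi \lir P(\xxx))$ in full; the case $\varphi = \forall \xxx\ (\psi \lir \neg P(\xxx))$ is entirely symmetric, with the roles of $\ct{P}$ and $\cf{P}$ interchanged.

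Since both operators leave every symbol other than $\ct{P}$ (respectively $P$) untouched---$\sinfp{\varphi}$ copies all queries $Q^{\app{\Phi}}$ with $Q \neq \ct{P}$, while $\infp{\varphi}$ leaves $Q^{\infp{\varphi}(\app{I})} = Q^{\app{I}}$ for $Q \neq P$---the identity is immediate on those symbols, so the argument reduces to the interpretation of $\ct{P}$ together with a check that $\cf{P}$ is genuinely unchanged on the non-symbolic side. For $\ct{P}$ I compute
\[ \ct{P}^{\sinfp{\varphi}(\app{\Phi})(E)} = (\ct{P}^{\app{\Phi}})^E \cup \{ \xxx \mid \app{\Phi}(\ct{\psi}) \}^E . \]
The first term is $\ct{P}^{\app{\Phi}(E)} = \{ \ddd \mid P^{\app{I}}(\ddd) \geqp \true \}$. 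For the second term, the key move is to translate the symbolically unfolded body: Proposition~\ref{lem:ev} (with $\tf{\voc}$ in the role of $\voc$) gives $E[\xxx/\ddd] \models \app{\Phi}(\ct{\psi})$ iff $\app{\Phi}(E)[\xxx/\ddd] \models \ct{\psi}$, i.e.\ $\tf{\app{I}}[\xxx/\ddd](\ct{\psi}) = \true$, and Proposition~\ref{prop:ftot} identifies this with the statement that the first coordinate of the pair $\app{I}[\xxx/\ddd](\psi)$ is $\true$, that is, $\app{I}[\xxx/\ddd](\psi) \geqp \true$. Hence the second term is $\{ \ddd \mid \app{I}[\xxx/\ddd](\psi) \geqp \true \}$.

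On the non-symbolic side I read off $\ct{P}^{\tf{\infp{\varphi}(\app{I})}} = \{ \ddd \mid P^{\infp{\varphi}(\app{I})}(\ddd) \geqp \true \}$ and split on the guard of Definition~\ref{def:infp}: where $\app{I}[\xxx/\ddd](\psi) \geqp \true$ the value is $\lub_{\leqp}\{\true, P^{\app{I}}(\ddd)\}$, which is always $\geqp \true$; elsewhere it is $P^{\app{I}}(\ddd)$, which is $\geqp \true$ exactly when $P^{\app{I}}(\ddd) \geqp \true$. This yields precisely the union $\{ \ddd \mid P^{\app{I}}(\ddd) \geqp \true \} \cup \{ \ddd \mid \app{I}[\xxx/\ddd](\psi) \geqp \true \}$, matching the symbolic computation. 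The one remaining subtlety, and the step I expect to need the most care, is the consistency check for $\cf{P}$: the symbolic side leaves $\cf{P}^{\app{\Phi}}$ unchanged, so I must confirm that $\infp{\varphi}$ does not alter the certainly-false set of $P$ either. This follows from the small lattice fact that $\lub_{\leqp}\{\true, v\} \geqp \false$ iff $v \geqp \false$ for every truth value $v$, whence $\{ \ddd \mid P^{\infp{\varphi}(\app{I})}(\ddd) \geqp \false \} = \{ \ddd \mid P^{\app{I}}(\ddd) \geqp \false \} = \cf{P}^{\app{\Phi}(E)}$. Assembling the computations over all symbols gives $\sinfp{\varphi}(\app{\Phi})(E) = \tf{\infp{\varphi}(\app{\Phi}(E))}$ for arbitrary $\app{\Phi}$ and $E$, which is exactly the assertion that $\sinfp{\varphi}$ describes $\infp{\varphi}$.
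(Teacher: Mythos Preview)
Your proof is correct and follows essentially the same route as the paper's: a case split on the sign of the head literal, then a symbol-by-symbol comparison of the two $\tf{\voc}$-structures, with the $\ct{P}$ case handled by unwinding definitions together with Proposition~\ref{lem:ev} and Proposition~\ref{prop:ftot}. Your explicit verification that $\cf{P}$ is unchanged on the non-symbolic side (via the lattice identity $\lub_{\leqp}\{\true,v\} \geqp \false \iff v \geqp \false$) actually fills in a detail that the paper's proof absorbs without comment into its ``$Q \neq \ct{P}$'' case.
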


Proposition~\ref{prop:sinfp} implies that one can execute the propagation algorithm (Algorithm~\ref{alg:prop}) using symbolic INF propagators in step~\ref{it:propalgo} instead of non-symbolic ones. We refer to this symbolic version of the algorithm as the \emph{symbolic propagation algorithm}. 

\begin{example}
Consider the following INF sentences\footnote{These sentences are some
  of the INF sentences obtained when reducing a standard \foid
  encoding of the Hamiltonian path problem to INF. The predicate
  $\id{InHam}$ represents the edges in the Hamiltonian path. The predicate \id{Aux} is an auxiliary predicate, introduced when the sentence $\forall x\forall y\forall z\ (\id{InHam}(x,y) \land \id{InHam}(x,z) \lir y = z)$ --- which states that the path does not split --- is reduced to ENF.}:
\begin{align}
	& \forall x\forall y\ (\neg \id{Edge}(x,y) \lir \neg \id{InHam}(x,y)) \label{eq:symone} \\ 
	& \forall x\forall y\ (\id{Start}(y) \lir \neg \id{InHam}(x,y)) \label{eq:symtwo} \\
	& \forall x\forall y\forall z\ (\neg \id{InHam}(x,y) \land \neg \id{InHam}(x,z) \lir \id{Aux}(x,y,z)). \label{eq:symthree}
\end{align}
Let $\symvoc$ be the vocabulary $\{\id{Edge}/2,\id{Start}/1\}$ and let $\app{\Phi}_0$ be the symbolic structure over $\symvoc$ assigning
\begin{align*}
& \ct{\id{Edge}}^{\app{\Phi}_0} = \{ (x,y) \mid \id{Edge}(x,y) \}	& & \cf{\id{Edge}}^{\app{\Phi}_0} = \{ (x,y) \mid \neg\id{Edge}(x,y) \} \\
& \ct{\id{Start}}^{\app{\Phi}_0} = \{ x \mid \id{Start}(x) \}	& & \cf{\id{Start}}^{\app{\Phi}_0} = \{ x \mid \neg\id{Start}(x) \} \\
& \ct{\id{InHam}}^{\app{\Phi}_0} = \{ (x,y) \mid \bot \}	& & \cf{\id{InHam}}^{\app{\Phi}_0} = \{ (x,y) \mid \bot \} \\
& \ct{\id{Aux}}^{\app{\Phi}_0} = \{ (x,y,z) \mid \bot \}	& & \cf{\id{Aux}}^{\app{\Phi}_0} = \{ (x,y,z) \mid \bot \}.
\end{align*}
Applying $\sinfp{\eqref{eq:symone}}$ on $\app{\Phi}_0$ yields a symbolic structure $\app{\Phi}_1$ that assigns $\{ (x,y) \mid \bot \lor \neg \id{Edge}(x,y)\}$ to $\cf{\id{InHam}}$. Applying $\sinfp{\eqref{eq:symtwo}}$ on $\app{\Phi}_1$ produces symbolic structure $\app{\Phi}_2$ assigning $\{ (x,y) \mid \bot \lor \neg \id{Edge}(x,y) \lor \id{Start}(y) \} $ to $\cf{\id{InHam}}$. Finally, the result of applying $\sinfp{\eqref{eq:symthree}}$ to $\app{\Phi}_2$ assigns 
\begin{equation}
\{ (x,y,z) \mid \bot \lor ((\bot \lor \neg \id{Edge}(x,y) \lor \id{Start}(y)) \land (\bot \lor \neg \id{Edge}(x,z) \lor \id{Start}(z))) \} \label{eq:complex}
\end{equation}
to $\ct{\id{Aux}}$.
\end{example}

Observe that computing $\sinfp{\varphi}(\app{\Phi})$ takes time
$\mathcal{O}(\size{\varphi} \cdot \size{\app{\Phi}})$, while computing
$\infp{\varphi}(\app{I})$ takes time
$\mathcal{O}(\size{\app{I}}^{\size{\varphi}})$ since evaluating a
formula $\varphi$ in a structure $I$ takes time
$\mathcal{O}(\size{I}^{\size{\varphi}})$ \cite{GradelKLMSVVW07}. This
indicates a possible benefit of using symbolic INF propagators instead
of non-symbolic ones. However, this gain in efficiency does not come
for free. One problem is that testing whether a sequence of symbolic
structures is stabilizing is undecidable, because it boils down to testing logical equivalence of FO formulas.
Another problem concerning symbolic refinement sequences is the size of symbolic structures. The size of $\sinfp{\varphi}(\app{\Phi})$ is $\mathcal{O}(\size{\varphi} \cdot \size{\app{\Phi}})$. As such, the size of the last element of a refinement sequence constructed using symbolic INF propagators is exponential in the length of the sequence, while for non-symbolic refinement sequences from a finite structure, the size of the last element is polynomial in the size of that structure. The exponential growth of the symbolic structures can sometimes, but not always, be avoided by replacing the queries assigned by a structure by equivalent, but smaller queries. For example,~\eqref{eq:complex} could be replaced by the shorter, equivalent query
\[ \{ (x,y,z) \mid (\neg \id{Edge}(x,y) \lor \id{Start}(y)) \land (\neg \id{Edge}(x,z) \lor \id{Start}(z)) \}. \]
\citeN{jair/WittocxMD10} describe a detailed implementation of the symbolic propagation algorithm using first-order binary decision diagrams \cite{Goubault95}.

We expect symbolic propagation to be useful in applications where
precision is less important than efficiency, and where the evaluation
$\app{\Phi}(E)$ of the last structure $\app{\Phi}$ of a refinement sequence in $\symvoc$-structure $E$ need not be computed completely. Grounding (Section~\ref{ssec:grounding}) and approximate query answering (Section~\ref{ssec:aqa}) are two examples of such applications. 

\section{Propagation for FO(ID)}\label{sec:foid}

One of the famous examples of concepts that are not expressible in FO is the concept of reachability in a graph. In fact, most concepts that require a recursive definition cannot be expressed in FO. Nevertheless, inductive definitions appear in many real-life computational problems such as automated planning or problems involving dynamic systems \cite{DeneckerT07,tocl/DeneckerT08}. In this section, we extend the propagation algorithm to an extension of FO with inductive, i.e., recursive, definitions. 

\subsection{Inductive Definitions}

Like a rule set, a \emph{definition} $\Delta$ is a finite set of rules of the form $\forall \xxx \ (P(\xxx) \rul \varphi[\yyy])$.  Predicates that appear in the head of a rule of $\Delta$ are called \emph{defined predicates} of $\Delta$. The set of all defined predicates of $\Delta$ is denoted $\defi{\Delta}$. All other symbols are called \emph{open} with respect to $\Delta$. The set of all open symbols of $\Delta$ is denoted $\open{\Delta}$. 

\begin{example}
The following definition defines the predicate $\id{Reach}$ in terms of open predicate $\id{Edge}$. 
\begin{equation}\label{eq:reachdef}
\left\{ 
	\begin{array}{rl}
		\forall x\forall y\ (\id{Reach}(x,y) & \rul \id{Edge}(x,y)), \\
		\forall x\forall y\ (\id{Reach}(x,y) & \rul \exists z\ (\id{Reach}(x,z) \land \id{Reach}(z,y)))
	\end{array}
\right\}
\end{equation}
Informally, this definition expresses that $y$ can be reached from $x$ in the graph represented by $\id{Edge}$, if either there is an edge between $x$ and $y$, i.e., $\id{Edge}(x,y)$ is true, or if there is some intermediate node $z$ such that $z$ can be reached from $x$ and $y$ can be reached from $z$. 
\end{example}

The formal semantics of definitions is given by their well-founded model \cite{GelderRS91}.  We borrow the presentation of this semantics from \citeN{lpnmr/DeneckerV07}. 
\begin{definition}\label{def:wfi}
Let $\Delta$ be a definition and $\app{I}$ a finite three-valued structure. A \emph{well-founded induction for $\Delta$ extending $\app{I}$} is a (possibly transfinite) sequence $\langle \app{J}_{\xi} \rangle_{0 \leq \xi \leq \alpha}$ of three-valued structures such that 
\begin{enumerate}
\item $\app{J}_0 = \res{\app{I}}{\open{\Delta}} + \lapp_{\defi{\Delta}}$;
\item $\app{J}_{\lambda} = \lub_{\leqp}(\{ \app{J}_{\xi} \mid \xi < \lambda\})$ for every limit ordinal $\lambda \leq \alpha$;
\item For every ordinal $\xi$, $\app{J}_{\xi+1}$ relates to $\app{J}_{\xi}$ in one of the following ways:
\begin{enumerate}
\item $\app{J}_{\xi+1} = \app{J}_{\xi}[V/\true]$, where $V$ is a set of domain atoms such that for each $P(\ddd) \in V$, $P^{\app{J}_{\xi}}(\ddd) = \unkn$ and there exists a rule $\forall \xxx\ (P(\xxx) \rul \varphi)$ in $\Delta$ such that $\app{J}_{\xi}[\xxx/\ddd](\varphi) = \true$. \label{it:stept}
\item $\app{J}_{\xi+1} = \app{J}_{\xi}[U/\false]$, where $U$ is a set of domain atoms, such that for each $P(\ddd) \in U$, $P^{\app{J}_{\xi}}(\ddd) = \unkn$ and for all rules $\forall \xxx\ (P(\ttt) \rul \varphi)$ in $\Delta$, $\app{J}_{\xi+1}[\xxx/\ddd](\varphi) = \false$. \label{it:stepf}
\end{enumerate}
\end{enumerate}
\end{definition}
Intuitively,~\eqref{it:stept} says that domain atoms $P(\ddd)$ can be made true if there is a rule with $P(\xxx)$ in the head and body $\varphi$ such that $\varphi$ is already true, given a variable assignment that interprets $\xxx$ by $\ddd$. On the other hand~\eqref{it:stepf} explains that $P(\ddd)$ can be made false if there is no possibility of making a corresponding body true, except by circular reasoning. The set $U$, called an \emph{unfounded set}, is a witness to this: making all atoms in $U$ false also makes all corresponding bodies false.

A well-founded induction is called \emph{terminal} if it cannot be extended anymore. The limit of a terminal well-founded induction is its last element. \citeN{lpnmr/DeneckerV07} show that each terminal well-founded induction for $\Delta$ extending $\app{I}$ has the same limit, which corresponds to the well-founded model of $\Delta$ extending $\res{\app{I}}{\open{\Delta}}$. The well-founded model is denoted by $\wfm{\Delta}{\app{I}}$. In general, $\wfm{\Delta}{\app{I}}$ is three-valued.

A two-valued structure $I$ satisfies definition $\Delta$, denoted $I
\models \Delta$, if $I = \wfm{\Delta}{I}$. The extension of FO with
inductive definition
is called \foid. A \foid theory is a set of FO sentences and definitions. A two-valued structure satisfies an \foid theory $T$ if it satisfies every sentence and every definition of $T$.

The \emph{completion} of a definition $\Delta$ is an \foid theory that is weaker than $\Delta$:
\begin{definition}
The \emph{completion}\index{completion} of a definition $\Delta$ is the FO theory that contains for every $P \in \defi{\Delta}$ the sentence
\[\forall \xxx\ (P(\xxx) \ler ( \exists \yyy_1 (\xxx = \yyy_1 \land \varphi_1) \lor \ldots \lor \exists \yyy_n(\xxx = \yyy_n \land \varphi_n) )), \]
where $\forall \yyy_1\ (P(\yyy_1) \rul \varphi_1)$, \ldots, $\forall \yyy_n\ (P(\yyy_n) \rul \varphi_n)$ are the rules in $\Delta$ with $P$ in the head. 
\end{definition}
We denote the completion of $\Delta$ by $\comp{\Delta}$. If $T$ is an
\foid theory then we denote by $\comp{T}$ the result of replacing in
$T$ all definitions by their completion. The following result states
that the completion of $T$ is weaker than $T$.  

\begin{theorem}[\cite{tocl/DeneckerT08}]\label{th:comp} $\Delta
  \models \comp{\Delta}$ and $T \models \comp{T}$ for every definition
  $\Delta$ and \foid theory $T$.
\end{theorem}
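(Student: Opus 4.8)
The plan is to reduce the theory-level statement $T \models \comp{T}$ to the definitional one and then prove $\Delta \models \comp{\Delta}$ directly from the well-founded induction of Definition~\ref{def:wfi}. For the reduction, observe that $\comp{T}$ is obtained from $T$ by leaving every FO sentence untouched and replacing each definition by its completion. Hence if a two-valued $I$ satisfies $T$, then $I$ already satisfies the FO part of $\comp{T}$, and $I \models \Delta$ for each definition $\Delta$ occurring in $T$; so it suffices to know that $I \models \comp{\Delta}$, which is exactly the first claim applied to each such $\Delta$. I would therefore concentrate entirely on showing that any two-valued $I$ with $I = \wfm{\Delta}{I}$ satisfies $\comp{\Delta}$.

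First I would fix such an $I$ and a \emph{terminal} well-founded induction $\langle \app{J}_{\xi} \rangle_{0 \leq \xi \leq \alpha}$ for $\Delta$ extending $I$; by the result of~\cite{lpnmr/DeneckerV07} its limit $\app{J}_{\alpha}$ equals $\wfm{\Delta}{I} = I$. The key invariant is that $\app{J}_{\xi} \leqp I$ for every $\xi$. This is immediate from the shape of Definition~\ref{def:wfi}: each successor step only replaces $\unkn$ values by $\true$ (step~\eqref{it:stept}, using $\unkn <_p \true$) or by $\false$ (step~\eqref{it:stepf}, using $\unkn <_p \false$), so the sequence is $\leqp$-increasing, and at limit ordinals one takes the $\leqp$-least upper bound; hence every $\app{J}_{\xi}$ lies below the final structure $\app{J}_{\alpha} = I$. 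Combined with precision-monotonicity of Kleene evaluation recalled in Section~\ref{sec:struct} --- namely $\app{J}_{\xi} \leqp I$ implies $\app{J}_{\xi}\theta(\varphi) \leqp I\theta(\varphi)$ --- this invariant lets me transport truth values of rule bodies from intermediate stages to $I$.

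It then remains to verify, for each defined $P$ and tuple $\ddd$, the completion equivalence $P^{I}(\ddd) = \true$ iff some rule $\forall \yyy_i\ (P(\yyy_i) \rul \varphi_i)$ of $\Delta$ satisfies $I[\yyy_i/\ddd] \models \varphi_i$. For the forward direction, since $P(\ddd)$ is $\unkn$ in $\app{J}_0$ but $\true$ in the two-valued limit, and since such a value can change only at a successor step (a limit takes a $\leqp$-lub of $\unkn$'s), it must have been set $\true$ via step~\eqref{it:stept}; that step exhibits a rule with $\app{J}_{\xi}[\yyy_i/\ddd](\varphi_i) = \true$, and precision-monotonicity together with $\app{J}_{\xi} \leqp I$ and two-valuedness of $I$ forces $I[\yyy_i/\ddd](\varphi_i) = \true$. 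For the converse, suppose some body satisfies $I[\yyy_i/\ddd](\varphi_i) = \true$ yet $P^{I}(\ddd) = \false$; then $P(\ddd)$ was falsified at some step via an unfounded set $U$ (step~\eqref{it:stepf}), whose defining condition gives $\app{J}_{\xi+1}[\yyy_i/\ddd](\varphi_i) = \false$, contradicting $\app{J}_{\xi+1} \leqp I$ and $I[\yyy_i/\ddd](\varphi_i) = \true$ because $\true$ and $\false$ are $\leqp$-incomparable.

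I expect the main obstacle to be the backward direction, specifically the bookkeeping around the unfounded-set step: there the bodies are evaluated in $\app{J}_{\xi+1}$, the structure \emph{after} all atoms of $U$ have been made false, so I must make sure this is precisely the value that precision-monotonicity relates to $I$, and that falsifying the other members of $U$ cannot spuriously rescue the derivation. The forward direction and the theory-level reduction are then routine once the invariant $\app{J}_{\xi} \leqp I$ and precision-monotonicity are in place.
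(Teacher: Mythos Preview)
The paper does not actually supply a proof of Theorem~\ref{th:comp}: it is stated with a citation to \cite{tocl/DeneckerT08} and no argument appears in the appendix of proofs. So there is nothing in the paper to compare your attempt against at the level of technique.

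That said, your argument is correct and is essentially the standard direct verification one would expect. The reduction of $T \models \comp{T}$ to the single-definition case is immediate from the definition of $\comp{T}$. For $\Delta \models \comp{\Delta}$, your two key moves are exactly the right ones: (i) the well-founded induction is $\leqp$-increasing with limit $I$, so every intermediate $\app{J}_{\xi} \leqp I$, and (ii) precision-monotonicity of Kleene evaluation transports the $\true$/$\false$ values of rule bodies from the intermediate stages to $I$. The forward direction then reads off the witnessing rule from the step~\eqref{it:stept} that made $P(\ddd)$ true, and the backward direction uses that step~\eqref{it:stepf} evaluates all bodies to $\false$ in $\app{J}_{\xi+1}$, which is $\leqp$-below the two-valued $I$, forcing those bodies to be $\false$ in $I$ as well since $\false \not\leqp \true$. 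Your worry about the bookkeeping in the unfounded-set step is unfounded: the definition already requires the bodies to be $\false$ in the \emph{post}-update structure $\app{J}_{\xi+1}$, and that is precisely the structure you compare to $I$, so nothing further is needed.
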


\subsection{Propagation for Definitions}

In this section, we consider two approaches to extend the propagation
method to \foid. First, we discuss the application of our propagation method
  for FO on the completion of \foid theories. Secondly, we define an INF propagator for definitions. 

\subsubsection{Propagation on the completion}

It follows from Theorem~\ref{th:comp} that the propagators obtained by applying Algorithm~\ref{alg:prop} on $\comp{T}$ are propagators for the theory $T$. However, note that a complete propagator for $\comp{T}$ can be incomplete for $T$. For example, consider the definition $\Delta := \{ P \rul P \}$. This definition has only one model, in which $P$ is false. Hence, $(\rfi{\Delta}(\lapp))(P) = \false$. The completion of $\Delta$ is the sentence $(P \ler P)$, which has a model making $P$ true and one making $P$ is false. Therefore $(\rfi{\comp{\Delta}}(\lapp))(P) = \unkn$. We conclude that $\rfi{\comp{\Delta}} \neq \rfi{\Delta}$. Moreover, $\rfi{\comp{\Delta}}$ is not inducing for $\Delta$, that is, it may not recognize that a given two-valued structure is not a model of $\Delta$.  

If $\app{I}$ is a finite structure and $T$ a \foid theory, there exists an FO theory $T'$ such that the models of $T$ approximated by $\app{I}$ are precisely the models of $T'$ approximated by $\app{I}$. Such a theory can be constructed by applying propositionalization (see, e.g.,~\cite{jair/WittocxMD10}), followed by the transformations described by \citeN{ecai/Janhunen04} or by \citeN{iclp/PelovT05}. Propagation on $T$ and $\app{I}$ can then be obtained by applying propagation on $T'$ and $\app{I}$. The benefit of this approach is a gain in precision. In particular, the resulting propagator is inducing. On the other hand, $T'$ can be exponentially larger than $T$, which has repercussions on the efficiency of (symbolic) propagation.

\subsubsection{Propagators for definitions}

A second way to extend our propagation method to \foid is to introduce special purpose propagators involving definitions. 

\begin{definition}
The propagator $\infp{\Delta}$ for a definition $\Delta$ is defined by 
\[ P^{\infp{\Delta}(\app{I})}(\ddd) = 
	\begin{cases}
		\lub_{\leqp}\{ \true, P^{\app{I}} \} & \text{if $P^{\wfm{\Delta}{\app{I}}}(\ddd) = \true$} \\
		\lub_{\leqp}\{ \false,P^{\app{I}} \} & \text{if $P^{\wfm{\Delta}{\app{I}}}(\ddd) = \false$} \\
		P^{\app{I}}									 & \text{otherwise.}
	\end{cases}
\]
\end{definition}

It follows from the definition of well-founded induction that
$\infp{\Delta}$ is a monotone propagator for every definition
$\Delta$. Moreover, for finite structures $\app{I}$,
$\wfm{\Delta}{\app{I}}$ can be computed in polynomial time in
$\size{\app{I}}$. As such, $\infp{\Delta}$ is a propagator with
polynomial-time data complexity. Note that this propagator only
propagates information from the body of the definition to the head; to
propagate from head to body, one needs propagators derived from the
completion.

It is an open question whether the propagator $\lim_{\infps{V}}$ can be represented by a (positive or monotone) rule set if $V$ may contain both INF sentences and definitions. Results from fixpoint logics (see, e.g.,~\cite{GradelKLMSVVW07}) suggest that this will be possible when only finite structures are considered, but impossible in general. We expect that even if it is possible to represent $\lim_{\infps{V}}$ by a rule set $\Delta_V$, this result will not be of practical importance, since $\Delta_V$ will be rather complicated. The same remark applies for symbolic propagators simulating $\lim_{\infps{V}}$.

Recently, \citeN{jelia/VlaeminckWVDB10} showed how to represent the propagator $\lim_{\infps{V}}$ by a \emph{nested fixpoint definition} \cite{phd/Hou10}. Methods to evaluate such nested definitions are currently being investigated \cite{tplp/HouDD10}. The extent to which the theoretical and practical results about monotone rule sets can be adapted to nested fixpoint definitions will determine the usefulness of representing $\lim_{\infps{V}}$ by such definitions.

\section{Applications}\label{sec:appli}
In this section, we briefly sketch several applications of constraint propagation, namely finite model generation, improved grounding, approximate solving of universal second-order logic ($\forall$SO) model generation problems, declarative programming of configuration systems, and approximate query answering in incomplete databases. We refer to papers where these applications are discussed in more detail.

\subsection{Solving Constraint Satisfaction Problems}\label{ssec:finm}

The obvious application of constraint propagation is to solve CSPs. Many real-life computational problems are naturally cast as CSPs. Well-known examples are scheduling, planning, diagnosis, and lightweight dynamic system verification. A standard algorithm to solve a constraint satisfaction problem $\langle \mathcal{C}, V,\domf \rangle$ is to combine propagators for $C$ with a backtracking algorithm and a branching heuristic.

A \emph{model expansion} (MX) is the problem of finding for a given theory $T$ in some logic $\mcl$ and a structure $\app{I}$, a model of $T$ that is approximated by $\app{I}$. We denote MX for input theories in logic $\mcl$ by MX($\mcl$). As mentioned in Section~\ref{ssec:mgtocsp} any CSP can be mapped to a model expansion problem and vice versa. Often, the representation of a CSP by a model expansion problem in a suitable logic is compact and highly declarative (see, e.g., the encodings of problems used in the second ASP Competition~\cite{lpnmr/DeneckerVBGT09}). Similarly as for solving CSPs, MX(\foid) problems can be solved by combining our propagation method for \foid with a backtracking algorithm and branching heuristics. 

Most current MX (and ASP) solvers take another approach. First, they reduce the input theory and domain to an equivalent propositional theory $T_g$. This process is called \emph{grounding}. Next, an (extended) SAT solver is applied to efficiently find a model for $T_g$. If found, this model then corresponds to a solution of the original problem. The benefit of this approach is that current SAT solvers are highly optimized. On the other hand, the grounding phase is often a bottleneck since in general, it takes exponential time in the quantifier rank, i.e., the nesting depth of quantifiers, of the input theory. Consequently, there is a trade-off between applying fast unit propagation but first having to ground the input theory, and applying our slower propagation method but avoiding the grounding.

A potential future approach to avoid the trade-off is by combining both propagation methods, in a way similar to DPLL(T) \cite{jacm/NieuwenhuisOT06}. In this combined approach, some sentences of the theory are grounded to propositional theory $T_g$, the others --- preferably those with a large quantifier rank but low \emph{activity}, i.e., yielding few propagations --- are transformed to a set of INF sentences $V$. Next, a SAT solver is applied on $T_g$. Whenever the SAT solver derives that a certain propositional literal $L$ is true, and $L$ unifies with a literal in the condition of an INF $\varphi$ sentence in $V$, $\infp{\varphi}$ can be applied to derive the truth of other literals. In turn, these literals can be communicated back to the SAT solver.

\subsection{Improved Grounding}\label{ssec:grounding}

Our \foid propagation can be applied to improve current SAT based MX solvers more directly by improving the grounding size and time \cite{jair/WittocxMD10}.  In an MX problem with input theory $T$ and input structure $\app{I}$, $\app{I}$ is often used to encode input \emph{data} for the problem. For example, $\app{I}$ might contain the input graph for a graph colouring problem. The grounders for MX and ASP primarily reduce the size of the computed propositional theory and improve the grounding speed by cleverly omitting formulas that do not evaluate to $\unkn$ in $\app{I}$. It follows that grounding improves if one first applies propagation on $T$ and $\app{I}$ to obtain a more precise structure $\app{J}$. 

In the case of improving grounding, efficiency of propagation is more important than completeness. Indeed, detailed propagation will be performed afterwards by the SAT solver. For this reason, when implementing propagation to optimize the grounder \gidl \cite{jair/WittocxMD10}, we opted for the symbolic propagation algorithm. Experiments with \gidl show that the time taken by symbolic propagation is negligible compared to the overall grounding time, while on average, it reduces grounding size and time by 30\%, respectively 40\%. In some cases, symbolic propagation makes the difference between being able to ground a theory in less than 20 seconds, compared to not being able to ground it at all. As far as we know, no one thoroughly evaluated whether concrete propagation is suitable to improve grounding as well.

\subsection{Approximate Solving of $\forall$SO Model Expansion Problems}

As \citeN{MitchellT05} show, it is a direct consequence of Fagin's \citeyear{fagin74} seminal result that every MX(\foid) problem with a fixed \foid input theory and variable finite input structures is in \np. If the input theory is in universal second-order logic ($\forall$SO), MX problems are in $\Sigma^P_2$, and some of these problems are $\Sigma^P_2$-hard. A class of interesting problems that are naturally cast as MX($\forall$SO) problems are \emph{conformant planning problems}. A conformant planning problem is a planning problem where only partial information $\app{I}$ about the initial state is given. A solution is a fixed plan that is guaranteed to lead from any initial state approximated by $\app{I}$ to the desired goal state. There exists conformant planning problems where determining whether a conformant plan of length less than a given length $l$ exists is $\Sigma^P_2$-hard, even if $l$ is polynomial in the size of the problem.

\citeN{jelia/VlaeminckWVDB10} show how to approximate an MX($\forall$SO) problem by an MX(\foid) problem, in the sense that solutions of the latter are solutions of the former (but not necessarily vice versa). The representation of FO propagation by a rule set is crucial in the approximation.

\subsection{Configuration Systems}\label{ssec:config}

The application presented in the introduction is an example of a \emph{configuration system}. A configuration system helps a user to fill out a form in accordance with certain constraints. As noted by \citeN{ppdp/VlaeminckVD09}, due to the large amount of background knowledge involved, developing and maintaining a configuration system can be difficult when using (only) a traditional imperative programming method. Instead, encoding the background knowledge, e.g., the constraints, in logic and applying suitable automated reasoning methods may make these tasks much easier.

One of the tasks of a configuration system is to prevent the user from making invalid choices by automatically disabling such choices. For example, if courses $c_1$ and $c_2$ are mutually exclusive and a student selects the course $c_1$, the system described in the introduction should make selecting $c_2$ impossible. Using constraint propagation, this functionality  can be implemented in a declarative way: the constraints describing valid configurations are represented by a theory $T$, the current selection by a three-valued structure $\app{I}$. Then, propagation is applied to derive a more precise structure $\app{J}$. Each possible choice that is true according to $\app{J}$ is selected automatically by the system, each choice that is false is disabled. 

There are two main, albeit somewhat contradictory requirements for the propagation in this case. First, since a configuration system is interactive, the propagation should be efficient in order to respond sufficiently fast. Secondly, in an ideal system, the user can never make an invalid choice. To this end, the propagation should implement the complete propagator $\rfi{T}$. Indeed, if $\app{J} = \rfi{T}(\app{I})$ and a choice $P(\ddd)$ is unknown in $\app{J}$, then there exists a model of $T$, i.e., a valid configuration, where $P(\ddd)$ is true, and one where $P(\ddd)$ is false. As such, neither selecting nor deselecting $P(\ddd)$ is an invalid choice since in both cases a valid configuration remains reachable. The combination of both requirements shows the importance of investigating the precision of efficient propagators.

We refer to the work of \citeN{ppdp/VlaeminckVD09} for a more elaborated investigation of knowledge based configuration software and a discussion of related work. The approach to build configuration systems using propagation for \foid was implemented in a Java\texttrademark library \cite{msc/Calus11}. The library is available from \url{http://dtai.cs.kuleuven.be/krr/software/download}. Configuration systems built with this library turn out to be sufficiently fast and precise.

\subsection{Approximate Query Answering}\label{ssec:aqa}

A recent trend in databases is the development of approximate methods to reason about databases with incomplete knowledge. The incompleteness of the database may stem from the use of null values, or of a restricted form of closed world assumption \cite{aaai/Cortes-CalabuigDAB07}, or it arises from integrating a collection of local databases each based on its own \emph{local schema} into one virtual database over a \emph{global schema} \cite{icdt/GrahneM99}. In all these cases, the data complexity of certain and possible query answering is computationally hard (\conp, respectively \np). For this reason fast (and often very precise) polynomial approximate query answering methods have been developed, which compute an underestimation of the certain, and an overestimation of the possible answers.

The tables of an incomplete database are naturally represented as a
three-valued structure $\app{I}$. The integrity constraints, local
closed world assumption or mediator scheme corresponds to a logic
theory $T$. Answering a query $\{ \xxx \mid \varphi\}$ boils down to
computing the set of tuples $\ddd$ such that $M[\xxx/\ddd] \models
\varphi$ in every model $M$ of $T$ approximated by $\app{I}$ (certain
answers) and the set of tuples $\ddd$ such that $M[\xxx/\ddd] \models
\varphi$ for at least one $M \models T$ approximated by $\app{I}$
(possible answers). These sets can be approximated by $\{ \ddd \mid \app{J}[\xxx/\ddd](\varphi) = \true \}$, respectively $\{ \ddd \mid \app{J}[\xxx/\ddd](\varphi) \neq \false \}$, where $\app{J}$ is obtained by applying constraint propagation for $T$ on $\app{I}$. If a constraint propagation method with polynomial-time data complexity is used to compute $\app{J}$, computing the approximate query answers above also requires polynomial time in the size of the database. Of course, the more precise $\app{J}$ is, the more precise the obtained answers to the query are.

Approximate query answering is an application where symbolic propagation is important. There are several reasons why it is to be preferred above non-symbolic propagation. First of all, the size of real-life databases makes the application of non-symbolic propagation often too slow in practice, since it requires the storage of large intermediate tables. More importantly, each time the data is changed, the propagation needs to be repeated. This is not the case for the symbolic propagation, because symbolic propagation is independent of the data. Thirdly, symbolic propagation can be used for query rewriting. Indeed, given a symbolic structure $\app{\Phi}$, computed by propagation, an evaluation structure $E$ and a query $\varphi$, the approximation to the certain answers for $\varphi$ are given by the set $\{ \ddd \mid \app{\Phi}(E)[\xxx/\ddd](\varphi) = \true \}$.  This set is equal to $\{ \xxx \mid \ct{(\app{\Phi}(\varphi))} \}^E$.  Hence the query $\{ \xxx \mid \varphi\}$ can be rewritten to a new query $\{ \xxx \mid \ct{(\app{\Phi}(\varphi))}\}$, which is then evaluated in the database $E$. Next, one can use the various optimization strategies in current database management systems to efficiently compute the answers to the new query. Possible answers to $\varphi$ are obtained in a similar way.

Applying the non-symbolic version of Algorithm~\ref{alg:prop} for approximate query answering generalizes the algorithm of~\citeN{lpar/CortesDAB06}. Applying the symbolic version and rewriting the query generalizes the query rewriting technique presented by~\citeN{aaai/Cortes-CalabuigDAB07}. Conditions that ensure the answers to queries obtained via these methods are optimally precise, i.e., conditions that ensure completeness of the propagation algorithm in the context of incomplete, locally closed databases, were investigated by \citeN{tods/DeneckerCBA10}.

\section{Conclusions}

In this paper we presented constraint propagation as a basic form of
inference for FO theories. We introduced a general notion of
constraint propagators and briefly discussed the complete propagator
for a theory. Due to its high computational complexity, the complete
propagator cannot be applied in most real-life applications. Therefore
we investigated incomplete propagators, called INF propagators. These
propagators generalize the propagators for propositional logic
presented by~\citeN{aaai/McAllester90} and \citeN{compulog/Apt99}. 
Similar propagators were
proposed in the context of locally closed databases, where approximative query
answering in polynomial time was studied in a series of
papers~\cite{lpar/CortesDAB06,aaai/Cortes-CalabuigDAB07,kr/CortesDAB08}
culminating in~\cite{tods/DeneckerCBA10}. A first version of INF
propagators for full FO was presented in the context of
grounding~\cite{aaai/WittocxMD08}. Later we improved the propagators and
presented them in a more general context~\cite{WittocxMD2008:KR}. The
link with constraint programming is new in the current paper.
Besides their lower computational complexity, INF propagators for FO have other interesting properties: propagation using INF propagators can be represented by a monotone rule set and can be executed in a symbolic manner. The former property allows us to use existing systems and extensively studied methods to make efficient implementations of propagation. The latter property is important in contexts where data changes regularly or where only part of the results obtained by propagation is needed. 

We extended the results about propagation using INF propagators to the
logic \foid that extends FO with inductive definitions. Whether the
results about representation by a monotone definition or symbolic
propagation carry over to inductive definitions, is an open
question. 
Further transfer of techniques developed in the constraint programming
community to improve propagation for FO, is also an interesting direction
for future work.

FO and \foid can also be extended with aggregates~\cite{tplp/PelovDB07}. In many cases, the use of aggregates yields more succinct theories~\cite{ai/SimonsNS02}, and often faster reasoning~\cite{tplp/FaberPLDI08}. Aggregates appear in many real-life applications. The extension of our propagation method to aggregates is described in Appendix~\ref{sec:aggr}.

Finally, we discussed several applications that rely on constraint propagation as basic form of inference. 

\newpage

\appendix

\section{Aggregates}\label{sec:aggr}

Aggregates are (partial) functions that have a set as argument. An example is the function $\card$ returning the cardinality of a set. In many cases, the use of aggregates yields more succinct theories~\cite{ai/SimonsNS02}, and often faster reasoning~\cite{tplp/FaberPLDI08}. Aggregates appear in many real-life applications.

In this section, given a domain $D$ we assume that there are two types of variables: variables that can take values among all elements of $D$, and variables that only take values among the elements in $D$ that are real numbers. That is, we assume a restricted form of many-sorted logic. Furthermore, we assume that formulas and structures are well-typed, in the sense that terms occurring at a position where only a number can sensibly occur, evaluate in every structure to a number. For instance, in a term $x+1$, $x$ should be a variable only ranging over real numbers. \citeN{ijcai/TernovskaM09} and \citeN{phd/Wittocx10} provide more detailed descriptions about including arithmetics in FO.

\subsection{FO with Aggregates}

We denote the extension of FO with aggregates by \foagg. A \emph{set expression} in \foagg is an expression of the form $\{ \xxx \mid \varphi \}$, where $\xxx$ is a tuple of variables and $\varphi$ a formula.\footnote{The only difference between queries and set expressions is that a formula $\varphi$ in set expression $\{ \xxx \mid \varphi \}$ may contain free variables that are not among $\xxx$.} The value of set expression $\{ \xxx \mid \varphi \}$ in structure $I$ under variable assignment $\theta$ is denoted by $I\theta(\{ \xxx \mid \varphi\})$ and defined by $\{ \ddd \mid I\theta[\xxx/\ddd] \models \varphi \}$. A set $V$ of tuples of domain elements is \emph{numeric} if for each $\ddd \in V$, the first element of $\ddd$ is a real number. 

In this paper, we consider the aggregate function symbols $\card$, $\asum$, $\aprod$, $\amin$ and $\amax$. An \emph{aggregate term} is an expression of the form $\card(V)$, $\asum(V)$, $\aprod(V)$, $\amin(V)$ or $\amax(V)$, where $V$ is a set expression. An \emph{aggregate atom} is a formula of the form $x \leq \agg(V)$ or $x \geq \agg(V)$, where $x$ is a variable and $\agg(V)$ an aggregate term. An \foagg formula is defined like an FO formula, except that atoms may be FO atoms as well as aggregate atoms.\footnote{One can generalize \foagg by allowing aggregate terms in every position where an FO term is allowed. We use the restricted version here to facilitate the presentation. There exists an equivalence preserving transformation from the more general version to the restricted one.} We use formulas of the form $t \leq \agg(V)$ and $t < \agg(V)$, where $t$ is a term and $\agg(V)$ an aggregate term as shorthands for the formulas $\exists x\ (t=x \land x \leq \agg(V))$, respectively $\exists x\exists y\ (t=x \land x < y \land y \leq \agg(V))$. Similarly for formulas of the form $t \geq \agg(V)$ and $t > \agg(V)$.

For a set of tuples $V$, we define $\card(V)$ to be the cardinality of $V$. If $V$ is numeric, we define:
\begin{longitem}
	\item $\asum(V) = 0$ if $V=\emptyset$ and $\asum(V) = \sum_{(a_1,\ldots,a_n) \in V}(a_1)$ otherwise;
	\item $\aprod(V) = 1$ if $V=\emptyset$ and $\aprod(V) = \prod_{(a_1,\ldots,a_n) \in V}(a_1)$ otherwise;
	\item $\amin(V) = +\infty$ if $V=\emptyset$ and $\amin(V) = \min\{a_1 \mid (a_1,\ldots,a_n) \in V\}$ otherwise;
	\item $\amax(V) = -\infty$ if $V=\emptyset$ and $\amax(V) = \max\{a_1 \mid (a_1,\ldots,a_n) \in V\}$ otherwise.
\end{longitem}
Let $I$ be a finite structure with domain $D$ and $\theta$ a variable assignment. The satisfaction relation for \foagg is defined by adding the following base cases to the satisfaction relation for FO: 
\begin{longitem}
\item $I\theta \models x \leq \agg(V)$ if
  $\theta(x)$ is a real number and $\theta(x) \leq \agg(I\theta(V))$;
\item $I\theta \models x \geq \agg(V)$ if
  $\theta(x)$ is a real number and $\theta(x) \geq \agg(I\theta(V))$.
\end{longitem}

To define the value of an aggregate atom in a three-valued structure, we first introduce three-valued sets. A \emph{three-valued set} is a set where each element is annotated with one of the truth values $\true,\false$ or $\unkn$. We denote the annotations by superscripts. A three-valued set $\app{V}$ \emph{approximates} a class of sets, namely all sets that certainly contain the elements of $V$ annotated by $\true$ and possibly some of the elements of $V$ annotated by $\unkn$. For example, $\{ a^{\true}, b^{\false}, c^{\unkn} \}$ denotes a three-valued set, approximating the sets $\{ a \}$ and $\{ a,c \}$. 

In a three-valued structure $\app{I}$, a set expression $\{ \xxx \mid \varphi \}$ evaluates under variable assignment $\theta$ to the three-valued set $\app{I}\theta(\{ \xxx \mid \varphi\}) := \{ \ddd^{\tval} \mid \app{I}\theta[\xxx/\ddd](\varphi) = \tval \}$. The \emph{minimal value} $\app{I}\theta(\agg(V))_{\min}$ of an aggregate term $\agg(V)$ in $\app{I}$ under $\theta$ is defined by 
\[ \app{I}\theta(\agg(V))_{\min} = \min\{ n \mid \text{$n = \agg(v)$ 
for some $v$ approximated by $\app{I}\theta(V)$} \}.  \]
Similarly, the \emph{maximal value} of $\agg(V)$ is defined by
\[ \app{I}\theta(\agg(V))_{\max} = \max\{ n \mid \text{$n = \agg(v)$ 
for some $v$ approximated by  $\app{I}\theta(V)$} \}. \]
The truth value of an \foagg formula $\varphi$ in structure
$\app{I}$ under variable assignment $\theta$ is defined by adding the
following cases to the definition of the truth value of an FO formula:
\begin{longitem}
	\item $
	\app{I}\theta(x \geq \agg(V)) = 
		\begin{cases} 
			\true		& \text{if $\theta(x)$ is a real number and $\theta(x) \geq \app{I}\theta(\agg(V))_{\max}$}\\
			\unkn		& \text{if $\theta(x)$ is a real number and $\theta(x) \geq \app{I}\theta(\agg(V))_{\min}$ and} \\
   & \theta(x) < \app{I}\theta(\agg(V))_{\max} \\ \false	& \text{otherwise.}
		\end{cases}$ 
	\item $
	\app{I}\theta(x \leq \agg(V)) = 
		\begin{cases} 
			\true		& \text{if $\theta(x)$ is a real number and $\theta(x) \leq \app{I}\theta(\agg(V))_{\min}$}\\
			\unkn		& \text{if $\theta(x)$ is a real number and $\theta(x) >
                        \app{I}\theta(\agg(V))_{\min}$ and}\\
&  \theta(x) \leq \app{I}\theta(\agg(V))_{\max} \\
			\false	& \text{otherwise.}
		\end{cases}$ 
\end{longitem}

\citeN{tplp/PelovDB07} illustrate that this definition of the value of \foagg formulas in three-valued structures is sufficiently precise for most applications found in the literature. They also show that the value of an \foagg formula in a three-valued structure $\app{I}$ can be computed in polynomial time in $\size{\app{I}}$.

\subsection{Propagation for \foagg}

To extend the propagation method to theories containing aggregates, the definition of INF sentences is extended to include aggregates. As in the case of FO, a propagator with polynomial-time data complexity is associated to each of these sentences. Next, it is shown that every \foagg theory over a vocabulary $\voc$ can be converted to a $\voc$-equivalent theory of INF sentences. To represent propagation on \foagg theories as a rule set and to allow symbolic propagation, the definition of $\ct{\varphi}$ and $\cf{\varphi}$ is extended to formulas $\varphi$ that may contain aggregates.

\begin{definition}
A \foagg sentence $\varphi$ is in \emph{implicational normal form} (INF) if it is of the form $\forall \xxx (\psi \lir L[\xxx])$, where $L[\xxx]$ is a literal that does not contain an aggregate and $\psi$ is a formula. The result of applying the INF propagator $\infp{\varphi}$ associated to $\varphi$ on a three-valued structure $\app{I}$ is defined as in Definition~\ref{def:infp}. If $\app{I}$ is strictly four-valued, then we define $\infp{\varphi}(\app{I}) = \mapp$.
\end{definition}
\begin{proposition}
For every INF sentence $\varphi$, $\infp{\varphi}$ is a monotone propagator with polynomial-time data complexity.
\end{proposition}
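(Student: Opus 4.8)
The plan is to reduce the statement to the corresponding result for FO, Proposition~\ref{prop:infpisprop}, isolating the two genuinely new ingredients: the presence of aggregates in the condition $\psi$, and the stipulation $\infp{\varphi}(\app{I}) = \mapp$ on strictly four-valued structures. Since on a three-valued $\app{I}$ the operator $\infp{\varphi}$ is defined exactly by Definition~\ref{def:infp}, and a three-valued structure assigns $\inco$ to no \foagg formula (the three-valued aggregate truth tables produce only $\true,\false,\unkn$), the firing condition $\app{I}[\xxx/\ddd_x](\psi) \geqp \true$ simplifies to $\app{I}[\xxx/\ddd_x](\psi) = \true$. The only fact about aggregates that the FO proof needs but does not yet have is that evaluation remains $\leqp$-monotone: if $\app{I} \leqp \app{J}$ are three-valued, then $\app{I}\theta(\psi) \leqp \app{J}\theta(\psi)$ for every \foagg formula $\psi$. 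I would establish this lemma first.

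For the lemma, the induction over the FO connectives is already covered by the monotonicity property of FO formulas recorded in Section~\ref{sec:struct}; only the new base case of an aggregate atom needs checking. The key observation is that, because $\app{I} \leqp \app{J}$ refines each $\unkn$-annotated element of the three-valued set $\app{I}\theta(V)$ to a $\true$-, $\false$- or $\unkn$-annotated element and never alters a decided annotation, the collection of two-valued sets approximated by $\app{J}\theta(V)$ is a \emph{subset} of those approximated by $\app{I}\theta(V)$. Taking a minimum (respectively maximum) of $\agg(\cdot)$ over a smaller collection can only increase (respectively decrease) the value, so $\app{I}\theta(\agg(V))_{\min} \leq \app{J}\theta(\agg(V))_{\min}$ and $\app{J}\theta(\agg(V))_{\max} \leq \app{I}\theta(\agg(V))_{\max}$; i.e., the value interval shrinks, uniformly for all five aggregate functions and regardless of the $\pm\infty$ conventions. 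Feeding this into the three-valued truth tables for $x \geq \agg(V)$ and $x \leq \agg(V)$ shows that a value of $\true$ or $\false$ in $\app{I}$ is preserved in $\app{J}$, which is exactly $\leqp$-monotonicity for the base case. This interval-shrinking step is where essentially all the work lies; the rest is bookkeeping, and it is also the main obstacle I anticipate.

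With the lemma in hand, the propagator and monotonicity properties follow as in Proposition~\ref{prop:infpisprop}. On three-valued structures the arguments are verbatim: inflationarity is immediate from $\lub_{\leqp}\{\true, P^{\app{I}}(\ddd_x)\} \geqp P^{\app{I}}(\ddd_x)$; model preservation uses that whenever $\infp{\varphi}$ fires at $\ddd_x$ we have $\app{I}[\xxx/\ddd_x](\psi) = \true$, hence $M[\xxx/\ddd_x] \models \psi$ by the lemma together with $\app{I} \leqp M$, hence $M[\xxx/\ddd_x] \models P(\xxx)$ since $M \models \varphi$, so the updated value equals $P^M(\ddd_x)$; and monotonicity uses that a firing at $\ddd_x$ on $\app{I}$ forces a firing on any three-valued $\app{J} \geqp \app{I}$, combined with monotonicity of $\lub_{\leqp}$. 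The negative-literal case $\forall\xxx\ (\psi \lir \neg P(\xxx))$ is entirely symmetric. To cover the strictly four-valued regime I would note that no two-valued $M$ satisfies $\app{I} \leqp M$ when $\app{I}$ assigns $\inco$ somewhere (the only value $\geqp \inco$ is $\inco$), so the model-preservation condition is vacuous there and setting $\infp{\varphi}(\app{I}) = \mapp$ is legitimate, while inflationarity is trivial since $\mapp$ is the $\leqp$-top. For monotonicity across the two regimes, observe that $\app{I} \leqp \app{J}$ forces $\app{J}$ to be strictly four-valued whenever $\app{I}$ is, so $\infp{\varphi}(\app{J}) = \mapp$ as well; and if only $\app{J}$ is strictly four-valued, then $\infp{\varphi}(\app{J}) = \mapp$ dominates everything.

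Finally, polynomial-time data complexity follows from the result of \citeN{tplp/PelovDB07} that the value of an \foagg formula in a three-valued structure is computable in polynomial time in $\size{\app{I}}$: for fixed $\varphi$ one evaluates $\psi$ at each of the polynomially many tuples $\ddd_x$ and updates the single head predicate, while deciding whether $\app{I}$ is strictly four-valued is a linear scan of the structure. Thus the complexity claim and the strictly four-valued modification are routine, and the aggregate base case of the evaluation-monotonicity lemma is the only substantive step.
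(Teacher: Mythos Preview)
Your proposal is correct and follows the same approach as the paper, which simply states that the proof ``is analogous to the proof of Proposition~\ref{prop:infpisprop}.'' You supply considerably more detail than the paper does: the $\leqp$-monotonicity of three-valued aggregate evaluation (your interval-shrinking argument), the explicit treatment of the new clause $\infp{\varphi}(\app{I}) = \mapp$ for strictly four-valued $\app{I}$, and the complexity justification via \citeN{tplp/PelovDB07} are all left implicit in the paper's one-line proof. One small terminological quibble: the aggregate atom is not literally a base case of the induction, since the set expression $\{\yyy \mid \varphi\}$ contains a subformula $\varphi$; your interval-shrinking argument tacitly uses the induction hypothesis on $\varphi$ to conclude that decided annotations persist from $\app{I}\theta(V)$ to $\app{J}\theta(V)$, so the argument is sound but the case is inductive rather than basic.
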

The proof of this proposition is analogous to the proof of Proposition~\ref{prop:infpisprop}.

We now show that every \foagg theory $T$ over vocabulary $\voc$ can be
converted to a $\voc$-equivalent theory containing only INF
sentences. Similarly as in the case of FO theories, we present a
conversion in several steps. None of the steps preserves complete propagation.
The following example indicates that even for very simple theories, complete polynomial-time propagation is impossible if $\pol \neq \np$.
\begin{example}
Let $T$ be the theory containing the sentence $\asum\{ x \mid P(x) \} = n$, where $n$ is a natural number. Let $\app{I}$ be a finite structure with domain $D \subset \N$ such that $P^{\app{I}}(d) = \unkn$ for every $d \in D$. Then $\rfi{T}(\app{I}) \neq \mapp$ iff $\sum_{d \in V} d = n$ for some subset $V \subseteq D$. Deciding whether such a subset exists is \np-complete \cite{Sipser05}. Hence if $\pol \neq \np$, $\rfi{T}$ cannot be implemented by a polynomial-time algorithm.
\end{example}

\begin{definition}
A \foagg sentence $\varphi$ is in \emph{equivalence normal form} (ENF) if $\varphi$ is an FO sentence in ENF or $\varphi$ is of the form $\forall \xxx\forall z\ (L[\xxx,z] \ler z \geq \agg\{ \yyy \mid L'[\xxx,\yyy] \})$ or $\forall \xxx\forall z\ (L[\xxx,z] \ler z \leq \agg\{ \yyy \mid L'[\xxx,\yyy] \})$.
\end{definition}
Every \foagg theory $T$ over $\voc$ can be rewritten to a $\voc$-equivalent theory in ENF by applying Algorithm~\ref{alg:fotoenf}.

\subsection{From \foagg to INF}

Similarly as for FO sentences in ENF, a set $\infs{\varphi}$ of INF sentences is associated to each ENF sentence $\varphi$ containing an aggregate. Our definition of this set is inspired by the propagation algorithms for propositional aggregates in the model generator \msid \cite{phd/Marien09}. These algorithms aim to restore bounds consistency. Intuitively, the propagators associated to the INF sentences we present, express that if some formula $y \geq \agg\{ \xxx \mid L[\xxx] \}$ must be true and the assumption that $L[\ddd]$ is true (respectively false) would imply that $y$ is strictly smaller than $\agg\{ \xxx \mid L[\xxx] \}$, then $L[\ddd]$ must be false (respectively true). Similarly for formulas of the form $y \leq \agg\{ \xxx \mid L[\xxx] \}$.

The following definition extends the definition of $\infs{\varphi}$ to the case where $\varphi$ is an ENF sentences containing an aggregate expression.
\begin{definition}
Let $\varphi$ be an ENF sentence of the form $\forall \xxx \forall z\ (H[\xxx,z] \ler z \geq \agg(\{ \yyy \mid L[\xxx,\yyy] \}))$.  Then $\infs{\varphi}$ is the set of INF sentences
\begin{align}
\label{eq:agginf1}	& \forall \xxx \forall z\ (z \geq \agg(V) \lir H[\xxx,z]), \\
\label{eq:agginf2}	& \forall \xxx \forall z\ (z < \agg(V) \lir \neg H[\xxx,z]), \\
\label{eq:agginf3}	& \forall \xxx \forall z \forall \yyy' \
(H[\xxx,z] \land z < \agg(\{\yyy \mid \yyy\neq \yyy' \land L[\xxx,\yyy]\}) \lir L[\xxx,\yyy']), \\
\label{eq:agginf4}	& \forall \xxx \forall z\forall \yyy' \
(H[\xxx,z] \land z < \agg(\{\yyy \mid  \yyy=\yyy' \lor L[\xxx,\yyy]\}) \lir \neg L[\xxx,\yyy']), \\
\label{eq:agginf5}	& \forall \xxx \forall z\forall \yyy' \ (\neg H[\xxx,z] \land z \geq \agg(\{\yyy \mid \yyy\neq \yyy' \land L[\xxx,\yyy]\}) \lir L[\xxx,\yyy']), \\
\label{eq:agginf6}	& \forall \xxx \forall z\forall \yyy' \ (\neg H[\xxx,z] \land z \geq \agg(\{\yyy \mid  \yyy=\yyy' \lor L[\xxx,\yyy]\}) \lir \neg L[\xxx,\yyy']).
\end{align}
For an ENF sentence $\varphi$ of the form $\forall \xxx \forall z\ (H[\xxx,z] \ler z \leq \agg(V))$, $\infs{\varphi}$ is defined similarly (it suffices to replace `$<$' by `$>$' and `$\geq$' by `$\leq$' in sentences~\eqref{eq:agginf1}--\eqref{eq:agginf6}).
\end{definition}

The INF sentences~\ref{eq:agginf3} and \ref{eq:agginf5} evaluate the aggregate on the set of tuples selected by the set expression but $\yyy'$ and express conditions for which the original ENF sentence $\varphi$ cannot be true unless $\yyy'$ is selected by the set expression.  Similarly, the INF sentences~\ref{eq:agginf4} and \ref{eq:agginf6} evaluate the aggregate on the set of the tuples selected by the set expression extended with $\yyy'$ and expresses conditions for which the original ENF sentence $\varphi$ cannot be true unless $\yyy'$ is not selected by the set expression.

Each of the sentences in $\infs{\varphi}$ is implied by $\varphi$. Vice versa, $\infs{\varphi}$ clearly implies $\varphi$ for every ENF sentence $\varphi$. Hence, we obtain the following proposition.

\begin{proposition}\label{prop:agginfs}
$\infs{\varphi}$ is equivalent to $\varphi$ for every ENF sentence $\varphi$.
\end{proposition}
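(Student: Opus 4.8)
The plan is to prove the two directions of the logical equivalence separately, carrying out the whole argument for an ENF sentence of the $\geq$-form $\varphi := \forall \xxx\forall z\ (H[\xxx,z] \ler z \geq \agg(\{\yyy \mid L[\xxx,\yyy]\}))$ and then observing that the $\leq$-form is handled by the symmetric argument (swap $\geq$ with $\leq$ and $<$ with $>$ throughout). Throughout I read $z < \agg(V)$ as the literal $\neg(z \geq \agg(V))$, so that in any two-valued, well-typed structure exactly one of $z \geq \agg(V)$ and $z < \agg(V)$ holds.

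First I would dispose of the direction $\infs{\varphi} \models \varphi$, which needs only sentences \eqref{eq:agginf1} and \eqref{eq:agginf2}: the former is the $\Leftarrow$-direction of $\varphi$, namely $z \geq \agg(V) \lir H[\xxx,z]$, and the latter, $\neg(z \geq \agg(V)) \lir \neg H[\xxx,z]$, read contrapositively, is the $\Rightarrow$-direction $H[\xxx,z] \lir z \geq \agg(V)$; together they give $H[\xxx,z] \ler z \geq \agg(V)$, which is $\varphi$.

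For the converse I would show $\varphi \models \psi$ for every $\psi \in \infs{\varphi}$. Sentences \eqref{eq:agginf1} and \eqref{eq:agginf2} are read off directly from the two halves of $\varphi$. The four remaining sentences all follow from one set-theoretic observation that I would isolate first: fixing an interpretation $\aaa$ of $\xxx$ and writing $V^{\aaa} = \{ \vvv \mid L[\aaa,\vvv]\}$ for the value of the set expression, one has $\{ \vvv \mid \vvv \neq \ddd \land L[\aaa,\vvv]\} = V^{\aaa} \setminus \{\ddd\}$ and $\{ \vvv \mid \vvv = \ddd \lor L[\aaa,\vvv]\} = V^{\aaa} \cup \{\ddd\}$, so that $\agg$ of the modified set equals $\agg(V^{\aaa})$ exactly in the degenerate case $\ddd \notin V^{\aaa}$ (for the $\setminus$-variant), respectively $\ddd \in V^{\aaa}$ (for the $\cup$-variant). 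With this in hand each sentence is a short proof by contradiction. For \eqref{eq:agginf3}, the antecedent $H[\aaa,b]$ together with $\varphi$ gives $b \geq \agg(V^{\aaa})$; if $L[\aaa,\ddd]$ failed, the $\setminus$-identity would turn the antecedent $b < \agg(V^{\aaa}\setminus\{\ddd\})$ into $b < \agg(V^{\aaa}) \leq b$, a contradiction, so $L[\aaa,\ddd]$ holds. Sentences \eqref{eq:agginf4}, \eqref{eq:agginf5}, \eqref{eq:agginf6} follow by exactly this template: \eqref{eq:agginf3} and \eqref{eq:agginf4} use $H$ (hence $b \geq \agg(V^{\aaa})$) and conclude about $L$ via the $\setminus$- and $\cup$-identity respectively, while \eqref{eq:agginf5} and \eqref{eq:agginf6} use $\neg H$ (hence $b < \agg(V^{\aaa})$) and conclude via the $\setminus$- and $\cup$-identity respectively.

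I expect the only step needing genuine care to be the set-identity observation and its consequence that the aggregate value is unchanged precisely in the relevant degenerate case; everything else is bookkeeping. Because that observation is purely set-theoretic, the argument is uniform over all five aggregates $\card$, $\asum$, $\aprod$, $\amin$, $\amax$ and uses no monotonicity or arithmetic property of the particular aggregate. The two remaining technical points I would check explicitly are the boundary values $+\infty$ and $-\infty$ that $\amin$ and $\amax$ assign to the empty set, and the well-typedness hypothesis guaranteeing the relevant sets are numeric where required; neither should obstruct the argument, since for a real $b$ the comparisons $b \geq \agg(V)$ and $b < \agg(V)$ remain mutually exclusive and exhaustive even when the aggregate value is $\pm\infty$.
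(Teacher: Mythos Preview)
Your proof is correct and follows essentially the same route as the paper's: observe that \eqref{eq:agginf1} and \eqref{eq:agginf2} together are equivalent to $\varphi$, then show each of \eqref{eq:agginf3}--\eqref{eq:agginf6} is implied by $\varphi$ via the observation that the antecedent forces the original and modified set expressions to evaluate to different sets, which pins down the truth value of $L[\xxx,\yyy']$. One caveat: your claim that the aggregate of the modified set equals $\agg(V^{\aaa})$ \emph{exactly} in the degenerate case is too strong (e.g.\ $\asum$ is unchanged by removing a tuple whose first component is $0$, and $\amin$/$\amax$ are unchanged by removing a non-extremal element), but since your contradiction argument only uses the trivial direction---degenerate case $\Rightarrow$ the sets coincide $\Rightarrow$ the aggregates coincide---this overstatement does not affect the argument.
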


\subsection{Rule Sets and Symbolic Propagation}\label{sec:aggrprop}

To represent propagation for FO as a rule set and to define symbolic propagation for FO theories, we relied on the fact that the value of an FO formula $\varphi$ in a three-valued structure can be found by computing the value of the negation-free formulas $\ct{\varphi}$ and $\cf{\varphi}$. Under certain conditions, it is possible to extend the definition of $\ct{\varphi}$ and $\cf{\varphi}$ to \foagg formulas $\varphi$. This immediately lifts the results of Section~\ref{ssec:itod} and Section~\ref{sec:sip} to \foagg.

It is beyond the scope of this paper to properly state the definition of $\ct{\varphi}$ and $\cf{\varphi}$ for an $\foagg$ formula $\varphi$, and the conditions under which this definition is the \emph{correct} one. We refer the reader to \cite[pages 90--91 and 181--184]{phd/Wittocx10} for these results. The results are correct for finite structures, embedded in infinite background structures. Relatively simple formulas $\ct{\varphi}$ and $\cf{\varphi}$ are obtained under the extra condition that all numbers that occur in structures are strictly positive and larger than $1$. If arbitrary real numbers are allowed, the formulas $\ct{\varphi}$ and $\cf{\varphi}$ become so complicated that they are not useful in practice. 

\section{Proofs}

\subsection*{Proof of Proposition~\ref{prop:fopredintro}}

Denote the vocabulary $\voc \cup \{ P \}$ by $\voc'$ and let $I$ be a $\voc$-structure. Any expansion of $I$ to $\voc'$ that satisfies the sentence $\forall \xxx\ (P(\xxx) \ler \psi[\xxx])$ necessarily assigns $\{ \xxx \mid \psi[\xxx]\}^I$ to $P$. Hence, such an expansion satisfies $\varphi'$ iff $I \models \varphi$. 

\subsection*{Proof of Proposition~\ref{prop:tfpos}}
Follows directly from the definition of $\ct{\varphi}$ and $\cf{\varphi}$.

\subsection*{Proof of Proposition~\ref{prop:cpandfo}}

Let $\app{I}$ and $\app{J}$ be two finite $\voc$-structure with the same domain. Denote their corresponding CSPs by $\langle \mathcal{C}_T,V_{\app{I}},\domf_{\app{I}} \rangle$, respectively $\langle \mathcal{C}_T,V_{\app{J}},\domf_{\app{J}} \rangle$. Then $V_{\app{I}} = V_{\app{J}}$. Also, $\app{I} \leqp \app{J}$ iff $\domf_{\app{I}} \supseteq \domf_{\app{J}}$. Therefore, $f$ is domain reducing iff $O(\app{I}) \geqp \app{I}$ for every structure $\app{I}$. 

Function $f$ is a propagator iff $\langle \mathcal{C}_T,V_{\app{I}},\domf_{\app{I}} \rangle$ and $\langle \mathcal{C}_T,V_{\app{I}},\domf_{\prp(\app{I})} \rangle$ have the same solutions. Because of the correspondence between models of $T$ approximated by $\app{I}$, respectively $\prp(\app{I})$, and solutions of $\langle \mathcal{C}_T,V_{\app{I}},\domf_{\app{I}} \rangle$, respectively $\langle \mathcal{C}_T,V_{\app{I}},\domf_{\prp(\app{I})} \rangle$, it follows that $f$ is a propagator iff the models of $T$ approximated by $\app{I}$ are precisely the models of $T$ approximated by $\prp(\app{I})$. 

We conclude that $\prp$ is a propagator for $T$ iff $f$ is a domain reducing propagator for CSPs of the form $\langle \mathcal{C}_T,V_{\app{I}},\domf_{\app{I}}\rangle$.

\subsection*{Proof of Lemma~\ref{lem:propcompose}}
Since $\prp_1$ and $\prp_2$ are propagators, $\app{I} \leqp \prp_2(\app{I}) \leqp \prp_1(\prp_2(\app{I})) = (\prp_1 \circ \prp_2)(\app{I})$ for every structure $\app{I}$. If $J \models T_1 \cup T_2$ and $\app{I} \leqp J$, then $\prp_2(\app{I}) \leqp J$ and therefore also $\prp_1(\prp_2(\app{I})) \leqp J$. Hence $\prp_1 \circ \prp_2$ is a propagator.

\subsection*{Proof of Proposition~\ref{prop:extendprop}}
Let $O'$ be a propagator for $T'$. Then for every $\voc$-structure $\app{I}$, $\app{I} = \res{(\app{I}+\lapp_{\voc'\setminus\voc})}{\voc} \leqp \res{(O'(\app{I}+\lapp_{\voc'\setminus\voc}))}{\voc} = O(\app{I})$. If $J$ is a model of $T$ such that $\app{I} \leqp J$, then there exists an expansion $J'$ of $J$ to $\voc'$ such that $J' \models T'$. Because $O'$ is a propagator, $O'(\app{I}+\lapp_{\voc'\setminus\voc}) \leqp J'$ and therefore $O(\app{I}) \leqp J$. We conclude that $O$ is a propagator. It is straightforward to check that if $O'$ is monotone, $O$ is also monotone.

\subsection*{Proof of Proposition~\ref{PROP:LEN}}\label{sec:lenp}

Recall that we defined $\size{\app{I}}$ as the cardinality of the domain of $\app{I}$. We prove that every sequence $\app{I} = \app{J}_0 <_p \app{J}_1 <_p \ldots <_p \app{J}_n$ has length polynomial in $\size{\app{I}}$. Denote by $N_P$ the number of predicate symbols in $\voc$. Let $A_P$ be the maximum arity of a predicate symbol in $\voc$. 

Since the sequence is increasing in precision, for every predicate symbol $P$ the number of $i$ such that $\ct{P}^{\tf{\app{J}_i}} \subsetneq \ct{P}^{\tf{\app{J}_{i+1}}}$ is at most $\size{\app{I}}^{A_P}$. Similarly, $\cf{P}^{\tf{\app{J}_i}}$ changes at most $\size{\app{I}}^{A_P}$ times in the sequence. Because $\langle \app{J}_i \rangle_{0 \leq i \leq n}$ is strictly increasing in precision, there is for every $0 \leq i < n$ at least one predicate $P$ such that $P^{\app{J}_i} \neq P^{\app{J}_{i+1}}$. Combining these results gives a maximum length of $2 \cdot \size{\app{I}}^{A_P} \cdot N_P$ for the sequence $\langle \app{J}_i \rangle_{0 \leq i \leq n}$. Clearly, this is polynomial in $\size{\app{I}}$. 

\subsection*{Proof of Proposition~\ref{prop:refs}}
Let $\langle \app{J}_{\xi} \rangle_{0 \leq \xi \leq \alpha}$ and $\langle \app{K}_{\xi} \rangle_{0 \leq \xi \leq \beta}$ be two stabilizing $V$-refinement sequences from $\app{I}$. Let $\langle \app{L}_{\xi} \rangle_{0 \leq \xi \leq \beta}$ the sequence of structures defined by 
\begin{longitem}
	\item $\app{L}_0 = \app{J}_{\alpha}$,
	\item $\app{L}_{\xi + 1} = \prp(\app{L}_{\xi})$ for every ordinal $0 \leq \xi < \alpha$, where $\prp$ is a propagator from $V$ such that $\app{K}_{\xi+1} = \prp(\app{K}_{\xi})$,
	\item $\app{L}_{\lambda} = \lub_{\leqp}(\{ \app{L}_{\xi} \mid 0 \leq \xi < \lambda \})$ for every limit ordinal $\lambda \leq \alpha$.
\end{longitem}
Because $\langle \app{J}_{\xi} \rangle_{0 \leq \xi \leq \alpha}$ is stabilizing, it follows that $\app{L}_{\beta} = \app{J}_{\alpha}$. Since $\app{I} \leqp \app{J}_{\alpha}$, we have that $\app{K}_{\beta} \leqp \app{L}_{\beta}$. Hence, we obtain that $\app{K}_{\beta} \leqp \app{J}_{\alpha}$. Similarly, we can derive that $\app{J}_{\alpha} \leqp \app{K}_{\beta}$. Hence $\app{J}_{\alpha} = \app{K}_{\beta}$. It follows that every stabilizing $V$-refinement sequence from $\app{I}$ has the same limit, namely $\app{J}_{\alpha}$.

\subsection*{Proof of Proposition~\ref{lem:complete}}
Follows immediately from the fact that $\{ M \mid \text{$\app{I} \leqp M$ and $M \models T$} \}$ is a superset of $\{ M \mid \text{$\app{J} \leqp M$ and $M \models T$} \}$ if $\app{I} \leqp \app{J}$.

\subsection*{Proof of Proposition~\ref{prop:opt}}

To prove the proposition, we show that $P^{\prp(\app{I})}(\ddd) \leqp P^{\rfi{T}(\app{I})}(\ddd)$ for any domain atom $P(\ddd)$. If $P^{\prp(\app{I})}(\ddd) = \inco$, it follows from the fact that $\prp$ is a propagator that there is no model of $T$ approximated by $\app{I}$. From the definition of $\rfi{T}$, we conclude that also $P^{\rfi{T}(\app{I})}(\ddd) = \inco$. If on the other hand $P^{\prp(\app{I})}(\ddd) = \true$ or $P^{\prp(\app{I})}(\ddd) = \false$, then $P(\ddd)$ is true, respectively false, in every model of $T$ approximated by $\app{I}$. Therefore $P^{\rfi{T}(\app{I})}(\ddd) \geq_p \true$, respectively $P^{\rfi{T}(\app{I})}(\ddd) \geq_p \false$, in this case. It follows that $P^{\prp(\app{I})}(\ddd) \leqp \rfi{T}(\app{I})(\ddd)$ for every domain atom of the form $P(\ddd)$. 

\subsection*{Proof of Proposition~\ref{prop:infpisprop}}\label{ssec:infpispropp}

Since $\varphi$ is an INF sentence, it is of the form $\forall \xxx\ (\psi \lir L[\xxx])$. Let $P$ be the predicate in $L[\xxx]$, i.e., $L[\xxx]$ is either the positive literal $P(\xxx)$ or the negative literal $\neg P(\xxx)$.

It follows directly from the definition of $\infp{\varphi}$ that $\app{I} \leqp \infp{\varphi}(\app{I})$ for every structure $\app{I}$. Now let $J$ be a structure such that $\app{I} \leqp J$ and $J \models \varphi$. To show that $\infp{\varphi}$ is a propagator, we have to prove that $P^{\infp{\varphi}(\app{I})}(\ddd_x) \leqp P^J(\ddd_x)$ for every tuple $\ddd_x$ of domain elements. If $\app{I}[\xxx/\ddd_x](\psi) \leqp \false$ then $P^{\infp{\varphi}(\app{I})}(\ddd_x) = P^{\app{I}}(\ddd_x) \leqp P^J(\ddd_x)$. If on the other hand $\app{I}[\xxx/\ddd_x](\psi) = \true$, then also $J[\xxx/\ddd_x](\psi) = \true$ and therefore $J[\xxx/\ddd_x](L[\xxx]) = \true$. It follows that $\app{I}[\xxx/\ddd_x](L[\xxx]) \leqp \true$ and hence $\infp{\varphi}(\app{I})[\xxx/\ddd_x](L[\xxx]) = \true$. We conclude that $P^{\infp{\varphi}(\app{I})}(\ddd_x) \leqp P^J(\ddd_x)$. 

The monotonicity of $\infp{\varphi}$ follows from the fact that $\app{I} \leqp \app{J}$ implies $\app{I}\theta(\psi) \leqp \app{J}\theta(\psi)$ for any two structures $\app{I}$ and $\app{J}$ and variable assignment $\theta$.

\subsection*{Proof of Proposition~\protect\ref{PROP:INPP}}\label{sec:inpp}

In the rest of this proof, let $\app{J}$ be the structure
$\lim_{\infps{V}}(\app{I})$. Observe that because $\app{I} \leqp
\app{J}$, $\tf{\app{I}} \leqt \tf{\app{J}}$. It now suffices to show
that $\tf{\app{J}}$ is a fixpoint of $\ico{\Delta_V}$ and that
$\tf{\app{J}} \leqt M$ holds for every fixpoint 
$M$
of $\ico{\Delta_V}$ such that $\tf{\app{I}} \leqt M$.

We first show that $\tf{\app{J}}$ is a fixpoint of $\ico{\Delta_V}$. Let $\forall \xxx (\psi \lir L[\xxx]) \in V$. Because $\app{J}$ is the limit of a $\infps{V}$-refinement sequence, $\app{J}\theta(\psi) \leqp \app{J}\theta(L[\xxx])$ for every variable assignment $\theta$. Hence, if $\tf{\app{J}}\theta(\ct{\psi}) = \true$, then $\tf{\app{J}}\theta(\ct{\psi}) = \true$. It follows that $\ico{\Delta_V}(\tf{\app{J}}) = \tf{\app{J}}$, i.e., $\tf{\app{J}}$ is a fixpoint of $\ico{\Delta_V}$.

To show that $\app{J}$ is more precise than all other fixpoints of $\ico{\Delta_V}$ that are more precise than $\app{I}$, let $\langle \app{K}_{\xi} \rangle_{0 \leq \xi \leq \alpha}$ be a stabilizing $\infps{V}$-refinement sequence from $\app{I}$. Then $\app{K}_{\alpha} = \app{J}$. Let $\langle L_{\xi} \rangle_{0 \leq \xi \leq \alpha}$ be the sequence of $\tf{\voc}$ structures defined by $L_0 = \tf{\app{I}}$, $L_{\xi + 1} = \ico{\Delta_V}(\xi)$ for every $\xi < \alpha$, and $L_{\lambda} = \lub_{\leqt}(\{ L_{\xi} \mid \xi < \lambda\})$ for every limit ordinal $\lambda \leq \alpha$. Because $\ico{\Delta_V}$ is $\leqt$-monotone, it follows from Tarski's theorem that $L_{\alpha} \leqt \glb_{\leqt}\{ M \mid M \models \Delta_V \text{ and } M \geqt \tf{\app{I}}\}$. 
Since the propagators used in the stabilizing refinement sequence
$\langle \app{K}_{\xi} \rangle_{0 \leq \xi \leq \alpha}$ are part of
the rule set $\Delta_V$, it is straightforward to check that
$\tf{\app{K}_{\xi}} \leqt L_{\xi}$ for every $0 \leq \xi \leq \alpha$;
hence it follows that $\tf{\app{J}} \leqt \glb_{\leqt}\{ M \mid M \models \Delta_V \text{ and } M \geqt \tf{\app{I}}\}$.

\subsection*{Proof of Proposition~\ref{lem:ev}}

If $\varphi$ is the atomic formula $P(\yyy)$ and $P^{\Phi} = \{ \xxx
\mid \psi \}$, then $\Phi(E)\theta \models P(\yyy)$ iff $\theta(\yyy)
\in P^{\Phi(E)}$ iff $\theta(\yyy) \in \{ \xxx \mid \psi \}^E$ iff
$E\theta[\xxx/\theta(\yyy)] \models \psi$ iff $E\theta \models
\psi[\xxx/\yyy]$ iff $E\theta \models \Phi(P(\yyy))$. The cases were
$\varphi$ is not atomic easily follow by induction. 

\subsection*{Proof of Proposition~\ref{prop:sinfp}}\label{ssec:sinfpp}

We prove the case where $\varphi$ is of the form $\forall \xxx\ (\psi \lir L[\xxx])$ and $L[\xxx]$ is a positive literal. The proof is similar in case $L[\xxx]$ is a negative literal. 

Let $E$ be a $\symvoc$-structure and $\app{\Phi}$ a four-valued
symbolic $\voc$-structure over $\symvoc$. Let $\varphi$ be the INF
sentence $\forall \xxx\ (\psi \lir P(\xxx))$. We have to show that
$\sinfp{\varphi}(\app{\Phi})(E) =
\infp{\varphi}(\app{\Phi}(E))$. Therefore, we must prove that
$Q^{\tf{\sinfp{\varphi}(\app{\Phi})(E)}} =
Q^{\tf{\infp{\varphi}(\app{\Phi}(E))}}$ for every predicate $Q \in
\tf{\voc}$.

First assume $Q \neq \ct{P}$. Then the following is a correct chain of equations.
\begin{equation}\label{eq:eqchain}
Q^{\tf{\infp{\varphi}(\app{\Phi}(E))}} = Q^{\tf{\app{\Phi}(E)}} 
											 = (Q^{\app{\Phi}})^E
											 = (Q^{\sinfp{\varphi}(\app{\Phi})})^E
											 = Q^{\tf{\sinfp{\varphi}(\app{\Phi})(E)}}
\end{equation}
The first and third
equality follow from the definitions of $\infp{\varphi}$,
respectively $\sinfp{\varphi}$, and the assumption that $Q \neq
\ct{P}$. The second and the fourth 
equality apply the definition of $\app{\Phi}(E)$. 

The following chain shows that also
$\ct{P}^{\tf{\sinfp{\varphi}(\app{\Phi})(E)}} =
\ct{P}^{\tf{\infp{\varphi}(\app{\Phi}(E))}}$: 
\begin{align*}
\ct{P}^{\tf{\sinfp{\varphi}(\app{\Phi})(E)}} 
	& = \left(\ct{P}^{\sinfp{\varphi}(\app{\Phi})}\right)^E \\
	& = \left(\ct{P}^{\app{\Phi}} \cup \{ \xxx \mid \app{\Phi}(\ct{\psi}) \} \right)^E \\
	& = \left(\ct{P}^{\app{\Phi}}\right)^E \cup \left(\{ \xxx \mid \app{\Phi}(\ct{\psi}) \} \right)^E \\
	& = \ct{P}^{\tf{\app{\Phi}(E)}} \cup \{ \ddd \mid E[\xxx/\ddd] \models \app{\Phi}(\ct{\psi}) \} \\
	& = \ct{P}^{\tf{\app{\Phi}(E)}} \cup \{ \ddd \mid (\app{\Phi}(E))[\xxx/\ddd])(\psi) \geqp \true \} \\
	& = \ct{P}^{\tf{\infp{\varphi}(\app{\Phi}(E))}}
\end{align*}

The first equality 
follows from the definition of $\app{\Phi}(E)$, the second one from the definition of $\sinfp{\varphi}$, the third one from the definition of union of queries, the fourth one from the definition of $\app{\Phi}(E)$ and of query evaluation, the fifth one from Proposition~\ref{prop:ftot}, and the final one from the definition of $\infp{\varphi}$.

\subsection*{Proof of Proposition~\ref{prop:agginfs}}

We prove the case where $\varphi$ is of the form $\forall \xxx \forall z\ (P(\xxx,z) \ler z \geq \agg(\{ \yyy \mid L[\xxx,\yyy] \}))$. Clearly, $\varphi$ is equivalent to the conjunction of~\eqref{eq:agginf1} and~\eqref{eq:agginf2}. Hence we only have to show that~\eqref{eq:agginf3}--\eqref{eq:agginf6} are implied by $\varphi$. 

We show that~\eqref{eq:agginf3} is implied by $\varphi$. The proofs
that~\eqref{eq:agginf4}--\eqref{eq:agginf6} are implied by $\varphi$
are similar. Let $I$ be a structure and $\theta$ a variable assignment
such that $I \models \varphi$ and $I\theta \models P(\xxx,z) \land z <
\agg(\yyy \mid \yyy\neq \yyy' \land L[\xxx,\yyy])$. Since $I \models \varphi$ and $I\theta
\models P(\xxx,z)$, $I\theta \models z \geq \agg(\{ \yyy \mid L[\xxx,\yyy] \})$. Because $I\theta
\models z < \agg(\yyy \mid \yyy\neq \yyy' \land L[\xxx,\yyy])$, it follows that $I\theta(\{ \yyy \mid L[\xxx,\yyy] \}) \neq
I\theta(\yyy \mid \yyy\neq \yyy' \land
  L[\xxx,\yyy])$. Hence $I\theta \models L[\xxx/\yyy']$. We conclude that $I
\models \eqref{eq:agginf3}$.

The case where $\varphi$ is of the form $\forall \xxx \forall z\ (P(\xxx,z) \ler z \leq \agg(\{ \yyy \mid L[\xxx,\yyy] \}))$ is analogous.

\begin{acks}
  This research was supported by projects G.0357.06 and G.0489.10 of
  Research Foundation - Flanders (FWO-Vlaanderen) and by GOA/08/008
  "Probabilistic Logic Learning".
\end{acks}

\bibliographystyle{acmtrans}
\bibliography{krrlib}
\end{document}